\def\BibTeX{{\rm B\kern-.05em{\sc i\kern-.025em b}\kern-.08em
    T\kern-.1667em\lower.7ex\hbox{E}\kern-.125emX}}
\newcommand\recht\operatorname
\newtheorem{theorem}{Theorem}[section]    
\newtheorem{corollary}[theorem]{Corollary}    
\newtheorem{proposition}[theorem]{Proposition}
\newtheorem{question}[theorem]{Question}
\newtheorem{definition}[theorem]{Definition}
\newtheorem{fact}[theorem]{Fact}
    \newtheoremstyle{TheoremNum}
        {\topsep}{\topsep}              
        {\itshape}                      
        {}                              
        {\bfseries}                     
        {.}                             
        { }                             
        {\thmname{#1}\thmnote{ \bfseries #3}}
    \theoremstyle{TheoremNum}
\begin{document}

\title{Improving Frequency Estimation under Local Differential Privacy}

\author{
\IEEEauthorblockN{Milan Lopuhaä-Zwakenberg\IEEEauthorrefmark{1}, Zitao Li\IEEEauthorrefmark{2}, Boris \v{S}kori\'c\IEEEauthorrefmark{1} and Ninghui Li\IEEEauthorrefmark{2}}
\IEEEauthorblockA{\IEEEauthorrefmark{1}Department of Mathematics and Computer Science \\
Eindhoven University of Technology\\
m.a.lopuhaa@tue.nl, b.skoric@tue.nl}
\IEEEauthorblockA{\IEEEauthorrefmark{2}Department of Computer Sciences \\
Purdue University\\
li2490@purdue.edu, ninghui@cs.purdue.edu}
}

\maketitle

\begin{abstract}
Local Differential Privacy protocols are stochastic protocols used in data aggregation when individual users do not trust the data aggregator with their private data. In such protocols there is a fundamental tradeoff between user privacy and aggregator utility. In the setting of frequency estimation, established bounds on this tradeoff are either nonquantitative, or far from what is known to be attainable. In this paper, we use information-theoretical methods to significantly improve established bounds. We also show that the new bounds are attainable for binary inputs. Furthermore, our methods lead to improved frequency estimators, which we experimentally show to outperform state-of-the-art methods.
\end{abstract}

\begin{IEEEkeywords}
Local Differential Privacy, frequency estimation, accuracy bound, privacy-utility tradeoff.
\end{IEEEkeywords}
\section{Introduction}

In a context where a data aggregator collects potentially sensitive data, 
there is an inherent tension between the aggregator's desire to obtain accurate population statistics 
and the individuals' desire to protect their private data. 
One approach to protect privacy is offered by \emph{Local Differential Privacy} (LDP) protocols \cite{kasiviswanathan2011can}. 
Under this framework, each user randomises their private data before sending it to the aggregator. 
This hides the users' true data, while for a large population size the randomness of the users cancels out, 
allowing the aggregator to obtain accurate estimates of the population statistics. 
Because of these properties, LDP mechanisms are widely used in industry by companies such as Apple \cite{apple2017learning}, Google \cite{erlingsson2014rappor}, and Microsoft \cite{ding2017collecting}.

One of the main settings in which LDP protocols are used is that of \emph{frequency estimation} \cite{warner1965randomized,erlingsson2014rappor,wang2017locally}. 
In this setting, 
every user has a private data item from a (finite) set $\mathcal{A}$, 
and the aggregator's goal is to determine the frequencies of the elements of $\mathcal{A}$ among the user population. For example, Chrome uses the LDP protocol RAPPOR to estimate the relative frequencies of homepages and used search engines among its userbase \cite{erlingsson2014rappor}. In this paper, we focus on frequency estimation, as it is both widely studied in the literature, and frequently applied in industry.

In the LDP setting there is a tradeoff between user privacy and frequency estimation accuracy.
Intuitively, the more `random' the privacy protocol is, the better it will hide an individual's private data, 
but the more noisy the aggregator's estimations will be. Therefore, it is natural to ask the following question:

\begin{question} \label{qu:intro}
How can one characterise the relation between user privacy and aggregator utility in the LDP setting?
\end{question}

Of course, to answer this question, we need to choose suitable privacy and utility metrics. As a privacy metric, we focus on $\varepsilon$-LDP, the \emph{de facto} standard privacy metric for stochastic privacy protocols. The privacy parameter $\varepsilon \in \mathbb{R}_{\geq 0}$ provides a measure of the worst-case leakage of the protocol; the advantage of this utility metric is that in practice it is often easily computed for specific protocols, and it provides privacy guarantees that hold in all situations. As a utility metric, we consider the mean squared error (MSE) for the frequency estimation, which is an often-used metric in the literature \cite{duchi2014local,wang2017locally,wang2019locally}.

There are two ways to approach Question \ref{qu:intro}. The first method is to study the tradeoff between privacy and utility for specific protocols, either analytically or experimentally; the second method is to prove theoretic bounds on this tradeoff that hold for any protocol. Unfortunately, there is still a large gap between what can currently be achieved by these two approaches. To be more precise, given $a = \#\mathcal{A}$ categories, $n \gg 0$ users and LDP parameter $\varepsilon \ll 1$, the currently best achievable MSE is $\frac{4a}{n\varepsilon^2}$ \cite{duchi2014local,wang2017locally}, while the theoretical lower bound on the MSE is either $\frac{a}{64n\varepsilon^2}$ or $\Omega(\frac{a}{n\varepsilon^2})$, depending on the precise utility metric (see Section \ref{ssec:contr}). While this shows that asymptotically we can attain the optimal behaviour in $a,n,\varepsilon$ up to a constant, in practice one needs to know this constant if one is to determine a satisfactory level of privacy based on $a$, $n$, and utility demands. Therefore, it is important to bridge the gap between practice and theoretical lower bounds.

One of the factors that plays a role in the existence of this gap is that a privacy protocol typically consists of two algorithms $(\mathcal{Q},\Phi)$, where $\mathcal{Q}$ is employed by the user to randomise their data, and $\Phi$ is used by the aggregator to obtain frequency estimations from the randomised data. However, it is unknown what the optimal $\Phi$ is given $\mathcal{Q}$ \cite{wang2019locally}. Typically, one first uses an unbiased, linear estimator $\Phi'$ called a frequency oracle \cite{warner1965randomized,erlingsson2014rappor,wang2017locally}, and then postprocesses the results to make them more in line with what the aggregator expects a frequency distribution to look like \cite{jia2019calibrate,wang2019locally}. However, theoretical analyses of the utility of postprocessing methods are lacking. Furthermore, the estimator $\Phi'$ typically discards at least some of the information present in the randomised data, which leaves open the possibility that postprocessing the outcome of $\Phi'$ does not lead to the optimal $\Phi$.

\subsection{Our contributions} \label{ssec:contr}

This paper presents two contributions towards answering Question \ref{qu:intro}, by working towards closing the gap in the privacy-utility tradeoff between existing protocols and theoretical lower bounds. To outline these contributions we first need to introduce some notation. In this paper, we consider two different, but related estimation problems: (i) the aggregator wants to estimate the probability distribution $P$ from which the users' private data is drawn, and (ii) the aggregator wants to estimate the actual frequencies $F$ of the private data. The distribution $P$ is unknown to the aggregator, and as such we can consider it to be a random variable itself; its distribution reflects the aggregator's prior knowledge. It turns out that for a given protocol $\mathcal{Q}$, the optimal estimators $\Pi$ for $P$ and $\Phi$ for $F$ can be stated explicitely in terms of this prior distribution:

\begin{theorem}[Informal version of Theorem \ref{thm:bound1}]
Let $\mathcal{Q}$ be a privacy protocol, and for each $i \leq n$, let $Y_i := \mathcal{Q}(X_i)$ be the randomisation of user $i$'s private data. Then given $\vec{Y} = \vec{y}$, the optimal estimators $\Pi_{\recht{opt}}$ for $P$ and $\Phi_{\recht{opt}}$ for $F$ are given by
\begin{align}
\Pi_{\recht{opt}}(\vec{y}) &= \mathbb{E}[P|\vec{Y}= \vec{y}],\\
\Phi_{\recht{opt}}(\vec{y}) &= \mathbb{E}[F|\vec{Y}= \vec{y}].
\end{align}
\end{theorem}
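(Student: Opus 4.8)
The plan is to recognise this as the standard Bayesian minimum mean squared error (MMSE) problem. The prior distribution on $P$ turns the whole setup into a single joint distribution over $(P, F, \vec{Y})$, and since the chosen utility metric is the MSE, ``optimal'' should mean minimising the expected squared error over all measurable estimators. Under that reading the claim is the classical fact that the MMSE estimator is the posterior mean, so I would prove it for $\Pi$ and note that the argument for $\Phi$ is identical with $F$ in place of $P$.

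First I would fix the objective: $\Pi_{\recht{opt}}$ is the measurable function of the observation minimising $\mathbb{E}[\|P - \Pi(\vec{Y})\|^2]$. By the tower property I would rewrite this as $\mathbb{E}\bigl[\mathbb{E}[\|P - \Pi(\vec{Y})\|^2 \mid \vec{Y}]\bigr]$, which decouples the optimisation across observations: because $\Pi$ may be chosen freely at each value $\vec{y}$, minimising the outer expectation reduces to minimising the inner conditional expectation $\mathbb{E}[\|P - c\|^2 \mid \vec{Y} = \vec{y}]$ separately over the constant vector $c = \Pi(\vec{y})$, for each fixed $\vec{y}$.

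The key step is a bias-variance decomposition at each $\vec{y}$. Writing $m := \mathbb{E}[P \mid \vec{Y} = \vec{y}]$ and expanding $\|P - c\|^2 = \|P - m\|^2 + 2\langle P - m, m - c\rangle + \|m - c\|^2$, I would take the conditional expectation and observe that the cross term vanishes, since $\mathbb{E}[P - m \mid \vec{Y} = \vec{y}] = 0$ by the definition of $m$. This yields $\mathbb{E}[\|P - c\|^2 \mid \vec{Y} = \vec{y}] = \mathbb{E}[\|P - m\|^2 \mid \vec{Y} = \vec{y}] + \|m - c\|^2$, in which only the final term depends on $c$ and is minimised (at value $0$) by $c = m$. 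Hence $\Pi_{\recht{opt}}(\vec{y}) = \mathbb{E}[P \mid \vec{Y} = \vec{y}]$, and the same computation with $F$ gives $\Phi_{\recht{opt}}(\vec{y}) = \mathbb{E}[F \mid \vec{Y} = \vec{y}]$.

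I expect the only real subtlety to be bookkeeping rather than substance. One must be precise that $P$ and $F$ are random vectors over the $a$ categories, so that $\|\cdot\|$ is the Euclidean norm and the MSE is the sum of the componentwise errors, and one must state that the minimisation ranges over all measurable estimators, which is what legitimises choosing $\Pi(\vec{y})$ pointwise per observation. By contrast, the independence of the $Y_i = \mathcal{Q}(X_i)$ and the internal structure of $\mathcal{Q}$ play no role in this theorem; they enter only when one wants to evaluate the posterior mean explicitly, so the argument itself is insensitive to the details of the protocol.
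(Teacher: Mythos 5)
Your proposal is correct and follows essentially the same route as the paper's proof of Theorem \ref{thm:bound1}: both use the tower property to reduce to a pointwise minimisation at each fixed $\vec{y}$, and your vector bias--variance decomposition is the same computation the paper performs componentwise, yielding $\sum_{x}\operatorname{Var}(P_x\mid\vec{Y}=\vec{y})$ as the minimum, attained exactly at the posterior mean, with the $\Phi$ case handled identically.
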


This theorem `solves'  the problem of postprocessing. Unfortunately, computing $\Pi_{\recht{opt}}$ and $\Phi_{\recht{opt}}$ can be time-consuming in practice: if $b$ is the size of the output space of $\mathcal{Q}$, then the time complexity is $\mathcal{O}(n^{a(b-1)})$ under moderate assumptions on $\mu$, which grows unfeasibly large for large $n$. However, the following theorem shows that we can approximate $\Pi_{\recht{opt}}$ and $\Phi_{\recht{opt}}$ by estimators which are much easier to compute.

\begin{theorem}[Informal version of Theorem \ref{thm:mle}]
Let $\Pi_{\recht{MLE}}(\vec{y})$ be the maximum likelihood estimator of $P$ given $\vec{y}$. Then $\Pi_{\recht{MLE}} \rightarrow \Pi_{\recht{opt}}$ in probability as $n \rightarrow \infty$. Furthermore, one finds $\Pi_{\recht{MLE}}(\vec{y})$ by solving a convex optimisation problem of dimension $(a-1)$ of which the complexity does not depend on $n$.
\end{theorem}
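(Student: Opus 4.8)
The plan is to prove two separate claims: first, the consistency $\Pi_{\recht{MLE}} \to \Pi_{\recht{opt}}$ in probability, and second, the structural claim that the MLE is the solution of an $(a-1)$-dimensional convex program whose complexity is independent of $n$.

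For the convergence claim, the key observation is that the data $Y_1,\dots,Y_n$ are conditionally i.i.d.\ given $P$: once $P = p$ is fixed, each $X_i$ is drawn i.i.d.\ from $p$ and each $Y_i = \mathcal{Q}(X_i)$ is an i.i.d.\ draw from the pushforward distribution $\mathcal{Q}_* p$ on the output space. I would first write the likelihood explicitly. Since $\mathcal{Q}$ is described by an $a \times b$ stochastic matrix with entries $q_{jk} = \Pr[\mathcal{Q}(j) = k]$, the probability that a single output equals $k$ is $(\mathcal{Q}^\top p)_k = \sum_j q_{jk} p_j$, which is linear in $p$. The log-likelihood is then $\ell_n(p) = \sum_k N_k \log (\mathcal{Q}^\top p)_k$, where $N_k$ is the number of users reporting output $k$; note this depends on the data only through the sufficient statistic $\vec N = (N_1,\dots,N_b)$. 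The plan is to invoke standard MLE consistency: $\Pi_{\recht{MLE}}$ maximises $\ell_n$, and by the law of large numbers $\frac1n \ell_n(p) \to \sum_k (\mathcal{Q}^\top p_0)_k \log (\mathcal{Q}^\top p)_k$ uniformly in probability on the compact simplex, where $p_0$ is the true parameter. The limiting function is (up to a constant) the negative KL-divergence between $\mathcal{Q}^\top p_0$ and $\mathcal{Q}^\top p$, which is uniquely maximised at any $p$ with $\mathcal{Q}^\top p = \mathcal{Q}^\top p_0$; assuming $\mathcal{Q}^\top$ is injective on the simplex (identifiability of the protocol), the maximiser is $p_0$. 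Hence $\Pi_{\recht{MLE}} \to p_0$ in probability. Separately, as $n \to \infty$ the posterior mean $\Pi_{\recht{opt}} = \mathbb{E}[P \mid \vec Y]$ concentrates at $p_0$ by a Bernstein–von Mises / posterior consistency argument, so both estimators share the same probability limit and therefore $\Pi_{\recht{MLE}} \to \Pi_{\recht{opt}}$ in probability.

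For the structural claim, I would observe that the feasible region is the probability simplex $\Delta^{a-1} = \{p \in \mathbb{R}^a_{\geq 0} : \sum_j p_j = 1\}$, an $(a-1)$-dimensional convex body that does not depend on $n$. The objective $\ell_n(p) = \sum_k N_k \log(\mathcal{Q}^\top p)_k$ is concave in $p$, because each map $p \mapsto \log(\mathcal{Q}^\top p)_k$ is a concave function (logarithm of a nonnegative linear form) and $N_k \geq 0$, so the sum is concave; maximising a concave function over a convex set is a convex optimisation problem. Crucially, the dependence on $n$ enters only through the counts $N_k$, which act as fixed coefficients once the data is observed, so the dimension and geometry of the problem are determined solely by $a$ and $\mathcal{Q}$, not by $n$. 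I would make the normalisation explicit by eliminating one coordinate (writing $p_a = 1 - \sum_{j<a} p_j$) to exhibit the problem as genuinely $(a-1)$-dimensional.

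The main obstacle I expect is the first claim, specifically making the convergence $\Pi_{\recht{MLE}} \to \Pi_{\recht{opt}}$ rigorous rather than merely asserting that both limits equal $p_0$. The subtlety is that $\Pi_{\recht{opt}} = \mathbb{E}[P \mid \vec Y]$ is a Bayesian object depending on the prior $\mu$, while $\Pi_{\recht{MLE}}$ is frequentist; one must ensure the prior assigns positive mass near $p_0$ (so the posterior is not pathologically biased) and control the boundary behaviour of the simplex, where the log-likelihood can be non-smooth and the identifiability condition on $\mathcal{Q}$ becomes essential. I would handle this by imposing the natural regularity assumptions on $\mu$ already referenced in the preceding theorem and by restricting attention to the interior of the simplex, invoking a uniform law of large numbers (e.g.\ via compactness of $\Delta^{a-1}$ and continuity of the log-likelihood in $p$) to upgrade pointwise to uniform convergence, which is what consistency of the argmax requires.
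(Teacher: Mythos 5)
Your proposal is correct, but for the convergence claim it takes a genuinely different route from the paper. The paper's proof of Theorem \ref{thm:mle} disposes of the convergence $\Pi_{\recht{MLE}} \rightarrow \Pi_{\recht{opt}}$ in a single line, citing \cite{strasser1975asymptotic} for the asymptotic equivalence of Bayes estimators and maximum likelihood estimators; it gives no argument of its own. You instead give a self-contained two-sided consistency argument: Wald-style consistency of the MLE (uniform law of large numbers on the compact simplex, with the limit being $-\recht{D}(Q\cdot p_0 \,\|\, Q \cdot p)$ up to a constant, uniquely maximised at $p_0$ by injectivity of $Q$), together with posterior consistency of the Bayes estimator, so that both estimators converge to the same conditional limit $p_0$ and their difference vanishes in probability; integrating over $P \sim \mu$ then gives the unconditional statement. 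This is sound, and notably your required hypotheses are exactly the paper's standing assumptions from Section \ref{ssec:ass}: the rank-$a$ assumption on $Q$ is your identifiability condition, and the assumption $Q_{y|x} > 0$ for all $x,y$ makes $(Q\cdot p)_\beta$ bounded away from zero uniformly on $\mathcal{P}_{\mathcal{A}}$, so the boundary non-smoothness you flag as the main obstacle does not actually arise --- the per-observation log-likelihood is continuous and bounded on the whole simplex, and your uniform LLN goes through without restricting to the interior. Similarly, since $\mu$ has a continuous density, the density is positive at $\mu$-almost every $p_0$, which is all your posterior-consistency step needs. The trade-off: your argument is elementary and transparent, while the citation the paper relies on delivers a sharper form of equivalence (under regularity, Bayes estimators and the MLE agree to higher order, not merely sharing a first-order limit), which your argument does not recover --- though the theorem as stated only asserts convergence in probability, for which your weaker conclusion suffices. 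Your treatment of the structural claim (concavity of $\sum_\beta s_\beta \log((Q\cdot p)_\beta)$, the $(a-1)$-dimensional simplex as feasible set, $n$ entering only through the fixed coefficients $s_\beta$) coincides with the paper's proof of (\ref{eq:minimise}) essentially verbatim.
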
 

Furthermore, we perform experiments that show that $\Pi_{\recht{MLE}}$ outperforms state-of-the-art postprocessing methods, demonstrating the validity of our approach. Since $F \approx P$ for large $n$, we can also use $\Pi_{\recht{MLE}}$ to approximate $\Phi_{\recht{opt}}$. This reduces the problem of finding the optimal $(\mathcal{Q},\Pi)$ or $(\mathcal{Q},\Phi)$ to finding the optimal $\mathcal{Q}$.

On the other hand, we use the  information-theoretic methods from \cite{lopuhaa2019information} to find new lower bounds:

\begin{theorem}[Informal version of Theorem \ref{cor:eps}]
For $n \gg 0$ and $\varepsilon \ll 1$, one has, for any $\mathcal{Q}$, $\Pi$ and $\Phi$, and any distribution for~$P$:
\begin{align*}
\recht{MSE}^{\recht{distr}}(\mathcal{Q},\Pi), \recht{MSE}^{\recht{freq}}(\mathcal{Q},\Phi)  &\geq \frac{a}{n\varepsilon^2},
\end{align*}
where $\recht{MSE}^{\recht{distr}}(\mathcal{Q},\Pi)$ is the expected mean squared error for distribution estimation, over the distributions of $P$ and the stochastic function $\mathcal{Q}$, and $\recht{MSE}^{\recht{freq}}(\mathcal{Q},\Phi)$ is the mean squared error for frequency estimation.
\end{theorem}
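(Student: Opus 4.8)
The plan is to derive the lower bound on the MSE from an information-theoretic inequality that relates the mutual information leaked by the protocol to the achievable estimation error, then to invoke the LDP constraint to bound this leakage in terms of $\varepsilon$. First I would set up the estimation problem in the language of information theory: the private data $\vec{X}$ (equivalently the distribution $P$ or the frequency vector $F$) is the quantity to be estimated, and the randomised outputs $\vec{Y}$ form the observable channel output. The crucial link is a rate-distortion or Bayesian Cramér--Rao type bound stating that the mean squared error of \emph{any} estimator is bounded below by a decreasing function of the mutual information $I(P;\vec{Y})$ (or $I(F;\vec{Y})$). Since the bound must hold for arbitrary $\Pi$ and $\Phi$, it is natural to use the optimal estimators identified in Theorem~\ref{thm:bound1}, namely the conditional expectations, so that the MSE equals the minimum mean squared error and the information-theoretic lower bound becomes tight in the relevant sense.

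Next I would bound the information leakage $I(P;\vec{Y})$ (or $I(F;\vec{Y})$) from above using the $\varepsilon$-LDP property of $\mathcal{Q}$. Because each $Y_i = \mathcal{Q}(X_i)$ is produced by an $\varepsilon$-locally-differentially-private mechanism, the per-user leakage $I(X_i;Y_i)$ is controlled by $\varepsilon$; standard results (such as those in~\cite{lopuhaa2019information}) give a bound of the form $I(X_i;Y_i) = \mathcal{O}(\varepsilon^2)$ in the regime $\varepsilon \ll 1$, reflecting the fact that the privacy constraint forces the channel to be nearly non-informative for small $\varepsilon$. Since the $n$ users are independent given $P$, the total leakage aggregates to $I(P;\vec{Y}) = \mathcal{O}(n\varepsilon^2)$, possibly after an application of the data-processing inequality and a tensorisation/subadditivity argument over the $n$ coordinates.

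I would then combine the two ingredients. Substituting the upper bound $I(P;\vec{Y}) = \mathcal{O}(n\varepsilon^2)$ into the lower bound that expresses MSE as a decreasing function of the available information yields $\recht{MSE} \gtrsim \tfrac{a}{n\varepsilon^2}$, where the factor $a$ enters because $P$ and $F$ are $(a-1)$-dimensional objects and the estimation error must be distributed across all $a$ coordinates of the simplex. The asymptotic qualifiers $n \gg 0$ and $\varepsilon \ll 1$ are exactly what is needed to justify the linearisation $I(X_i;Y_i) \approx c\,\varepsilon^2$ and to neglect lower-order terms, matching the informal statement of the theorem.

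The main obstacle will be making the per-user leakage bound $I(X_i;Y_i) = \mathcal{O}(\varepsilon^2)$ sharp enough to produce the precise constant $\tfrac{a}{n\varepsilon^2}$ rather than merely an $\Omega(\tfrac{a}{n\varepsilon^2})$ order bound. This requires a careful second-order expansion of the mutual information (or of the relevant $f$-divergence) in $\varepsilon$, together with an optimisation over the prior distribution of $P$ to identify the worst case; getting the dimensional factor $a$ to appear with the correct coefficient, and verifying that the conditional-expectation estimators of Theorem~\ref{thm:bound1} indeed meet this bound, is where the delicate work lies. I expect the cleanest route is to reduce to the binary case per coordinate, where the $\varepsilon$-LDP channel is a binary symmetric channel whose information and MMSE are both computable in closed form, and then to sum the contributions across the $a$ categories.
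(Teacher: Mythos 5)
There is a genuine gap at the heart of your plan: the step that bounds the leakage by $\recht{I}(P;\vec{Y}) \leq \sum_i \recht{I}(X_i;Y_i) = \mathcal{O}(n\varepsilon^2)$ and then substitutes this into an information-versus-MSE inequality cannot produce the claimed bound. In the regime of the theorem (fixed $\varepsilon$, $n \rightarrow \infty$) the tensorised bound is hopelessly loose: the true mutual information grows only logarithmically, $\recht{I}(P;\vec{Y}) = \tfrac{a-1}{2}\log n - \gamma_{\mu}(\mathcal{Q}) + \recht{h}(P) + o(1)$ (this is exactly the content of Theorem \ref{thm:hlim}, imported from \cite{lopuhaa2019information}). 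If you plug a budget linear in $n$ into a bound of the shape $\recht{MSE} \geq \tfrac{a}{2\pi\textrm{e}}\,\textrm{e}^{\frac{2}{a-1}(\recht{h}(P)-\recht{I}(P;\vec{Y}))}$ (Theorem \ref{thm:bound2}), you obtain a lower bound decaying \emph{exponentially} in $n\varepsilon^2$, not like $\tfrac{a}{n\varepsilon^2}$. To recover the $1/n$ rate from an $\mathcal{O}(n\varepsilon^2)$ leakage budget you would be forced into a testing reduction (Fano/Assouad over a packing at scale $n^{-1/2}$), which is precisely the strategy of \cite{duchi2013local} and \cite{blasiok2019towards} via \cite{bassily2015local} --- and that route demonstrably loses constants (it yields $\tfrac{a}{64n\varepsilon^2}$ and $\Omega(\tfrac{a}{n\varepsilon^2})$), so it cannot deliver the constant $1$ that the theorem asserts. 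It also only bounds the supremum over priors: your proposed ``optimisation over the prior to identify the worst case'' proves a statement about the worst-case MSE, whereas the theorem claims the bound for \emph{every} prior $\mu$.

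The paper avoids this by never passing through a per-user leakage inequality. It keeps equalities as long as possible: Theorem \ref{thm:bound1} identifies the optimal MSE with summed posterior variances; Theorem \ref{thm:bound2} converts these into $\recht{h}(P|\vec{Y})$ and $\recht{H}(F|\vec{Y})$ via the AM--GM and Hadamard inequalities plus the Gaussian maximum-entropy determinant bound; then Theorem \ref{thm:hlim} supplies the exact asymptotic \emph{limit} of these conditional entropies in terms of the matrices $D_p = Q^{\recht{T}}\recht{diag}(Q\cdot p)^{-1}Q$ and $G_p$ --- the $\tfrac{a-1}{2}\log n$ term is where the $1/n$ rate and the dimension factor $a$ come from, for free and for every $\mu$. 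Only at the final step (Theorem \ref{thm:eps}) is $\varepsilon$ introduced, by purely linear-algebraic bounds on $\gamma_{\mu}(\mathcal{Q})$ and $\delta_{\mu}(\mathcal{Q})$ (concavity of $\log\det$, the identity $\det E_x = \prod_y Q_{y|x}$, and \cite[Prop.~8.1]{lopuhaa2019information}); no second-order expansion of $\recht{I}(X_i;Y_i)$ in $\varepsilon$ is ever needed. Finally, your closing suggestion to reduce to a binary symmetric channel per coordinate is not sound for the lower bound: an $\varepsilon$-LDP protocol on $a$ symbols does not factor into $a$ independent binary channels, since the coordinates of $P$ are coupled on the simplex and the privacy constraint is joint. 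Randomised response with $a=2$ appears in the paper only as a separate matching \emph{upper} bound (Proposition \ref{prop:rr}, Corollary \ref{cor:opt}) establishing tightness, not as an ingredient of the proof of Theorem \ref{cor:eps}.
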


This result significantly improves known lower bounds, as can be seen from Table \ref{tab:results}. Not only does this provide us with a constant for $\recht{MSE}^{\recht{freq}}(\mathcal{Q},\Phi)$ where first none was known, and significantly improves the constant for $\recht{MSE}^{\recht{distr}}(\mathcal{Q},\Pi)$, it also holds for any prior distribution, rather than just the worst-case one. The downside, however, is that it only holds for asymptotically large $n$. However, one of the settings in which LDP is typically employed is one where $n \gg a$. It should be noted that in this introduction we write the results in terms of the limit case $\varepsilon \rightarrow 0$ for simplicity, but our results also improve known bounds for any $\varepsilon$ (see Section \ref{sec:bounds}).

{\renewcommand{\arraystretch}{1.3}
\begin{table}
\centering
\caption{New results compared to known lower bounds and attainable MSE for $\varepsilon \ll 1$.} \label{tab:results}
\begin{tabular}{|lcl|}
\hline
\multicolumn{3}{|c|}{Distribution estimation} \\ \hline
Known lower bound \cite{duchi2013local} & $\frac{a}{64n\varepsilon^2}$ &  worst case \\
Attainable \cite{acharya2019hadamard,duchi2013local} & $\frac{4a}{n\varepsilon^2}$ & \\
New lower bound & $\frac{a}{n\varepsilon^2}$ & any case, for $n \rightarrow \infty$ \\
\hline
\hline
\multicolumn{3}{|c|}{Frequency estimation} \\ \hline
Known lower bound \cite{blasiok2019towards} & $\Omega(\frac{a}{n\varepsilon^2})$ & worst case\\
Attainable \cite{wang2017locally} & $\frac{4a}{n\varepsilon^2}$ &\\
New lower bound & $\frac{a}{n\varepsilon^2}$ & any case, for $n \rightarrow \infty$ \\ \hline
\end{tabular}

\end{table}
}

While earlier works also use information-theoretic methods to derive lower bounds, our methods rely on and expand upon the description of the asymptotic behaviour of the mutual information $\recht{I}(P;\vec{Y})$ in \cite{lopuhaa2019information}. Since this description gives a limit, rather than a lower bound, this allows us to be more precise than earlier work.

The structure of this paper is as follows. In Section \ref{sec:prelim} we introduce the mathematical setting for this paper. In Section \ref{sec:opt}, we discuss how to find, and approximate computationally, the optimal $\Pi$ given $\mathcal{Q}$. In Section \ref{sec:bounds}, we prove the new lower bounds for the MSE. In Section \ref{sec:exp}, we evaluate the methods from Section \ref{sec:opt} experimentally.

\subsection{Related work}

A lower bound for distribution estimation is given in Proposition 6 of \cite{duchi2014local}; from the proof in the arXiv version \cite{duchi2013local} we find that we can express this lower bound as $\frac{a}{64n\varepsilon^2}$ for $\varepsilon \ll 1$. Their strategy is to first relate frequency estimation to a binary hypothesis testing problem, and then use information theory to bound the accuracy of this testing problem. They also show that a frequency MSE of $\frac{4a}{n\varepsilon^2}$ can be attained by providing a specific protocol, namely a version of the privacy protocol RAPPOR \cite{erlingsson2014rappor}. Another privacy protocol with the same frequency MSE, inspired by coding theory, is described in \cite{acharya2019hadamard}.

The work in \cite{blasiok2019towards} looks at the `dual' problem, i.e. given $\varepsilon$, $a$, and an acceptable frequency MSE\footnote{In the notation of \cite{blasiok2019towards} we would have $\kappa = \alpha^2$.} $\kappa$, what is the minimal $n$ for which a protocol exists that satisfies these criteria? In Theorem 1.7, they find that $n \geq \Omega(\frac{a}{\kappa\varepsilon^2})$ (even when allowing for protocols that only offer $(\varepsilon,\delta)$-LDP, a weaker privacy notion), which we can restate as $\kappa \geq \Omega(\frac{a}{n\varepsilon^2})$. Their result is based on techniques in \cite{bassily2015local} to bound the mutual information between the input and output of $\mathcal{Q}$ in terms of the privacy parameters $(\varepsilon,\delta)$.

In \cite{wang2017locally}, it is proven that RAPPOR attains a frequency MSE of $\frac{4a}{n\varepsilon^2}$. It is also shown there that this can be improved upon by tweaking its parameters, leading to a protocol called Optimised Unary Encoding (UE). This does not affect its asymptotic behaviour for $\varepsilon \ll 1$.

Instead of looking at the MSE, which is essentially the $\ell_2$-distance between the true value and its estimation, one can also consider other error metrics, such as $\ell_1$ or $\ell_{\infty}$. An overview of the behaviour of these error metrics is found in \cite{cheu2019manipulation}.

There is much literature on studying LDP via information theory, and using this to derive properties on privacy or utility \cite{bassily2015local,cuff2016differential,duchi2014local,kairouz2014extremal}. We mostly rely on the techniques of \cite{lopuhaa2019information}, where information-theoretic properties are stated as (asymptotic) equalities, rather than inequalities; this allows us to obtain tighter bounds than earlier work.

There exists a body of work on post-processing LDP results \cite{cormode2019answering,jia2019calibrate,wang2018privtrie,wang2019locally}. Heuristically, these approaches are based on the idea that one can improve an estimation by taking advantage of the knowledge one has of what a distribution should look like. In \cite{jia2019calibrate}, the estimation is adjusted to adhere more to a power law-type distribution; in \cite{wang2019locally}, the estimation is adjusted to account for the fact that the tallies should be nonnegative and add up to $1$. The validity of these approaches is supported by empirical evidence, but a theoretical analysis is lacking. In this paper, we formalise the intuition that frequency estimation can be improved by taking prior knowledge into account, leading to the optimal $\Pi$ and $\Phi$ given $\mathcal{Q}$.

\section{Preliminaries} \label{sec:prelim}

In this section, we introduce our setting and various concepts that play a role in this paper.

\subsection{Setting}

We consider the setting of \emph{Local Differential Privacy} (LDP). There are $n$ users, and user $i$ has a private data item $X_i$ from a finite set $\mathcal{A}$. Let $\mathcal{P}_{\mathcal{A}}$ be the space of probability distributions on $\mathcal{A}$; then we assume that there is a distribution $P \in \mathcal{P}_{\mathcal{A}}$ such that each $X_i$ is drawn independently from $P$. The distribution $P$ itself is unknown to the aggregator, so we consider it to be a random variable itself, taken from a continuous prior distribution $\mu$ on $\mathcal{P}_{\mathcal{A}}$. The distribution $\mu$ is known to the aggregator, and reflects their prior knowledge.  The aggregator publishes a \emph{privacy protocol}, i.e. a random function $\mathcal{Q}\colon \mathcal{A} \rightarrow \mathcal{B}$. User $i$ calculates $Y_i := \mathcal{Q}(X_i)$ and sends it to the aggregator. The aggregator is interested in one of two quantities unavailable to them:

 \begin{enumerate}
\item The aggregator wants to know $P$ as accurately as possible. This occurs, for instance, when the aggregator is a scientist whose userbase is a sample of a greater population \cite{warner1965randomized}. 
The aggregator is not concerned with this specific userbase, but rather with the characteristics of the general population, which are modelled by $P$.
\item For $a=  \#\mathcal{A}$, define the frequency vector $F \in \mathbb{R}_{\geq 0}^{a}$ by $F_x = \tfrac{\#\{i: X_i = x\}}{n}$ for all $x \in \mathcal{A}$. Then the aggregator wants to know $F$ as accurately as possible. This occurs, for instance, when the users are customers of a service, and the service provider wants to know statistics about is customer base \cite{erlingsson2014rappor}.
\end{enumerate}

Note that for large $n$, one has $F \approx P$, so these goals are closely aligned in practice. Nevertheless, we make a distinction between these two cases, as the means of obtaining their lower bounds are different. 
To estimate $P$, the aggregator employs a distribution estimator $\Pi$, which takes as input the perturbed data $\vec{Y}$, and outputs an estimation $\hat{P} = \Pi(\vec{Y}) \in \mathbb{R}^{a}$ of $P$. To estimate $F$, the aggregator likewise computes $\hat{F} = \Phi(\vec{Y}) \in \mathbb{R}^{a}$. 
The setting is depicted in Figure \ref{fig:ldp}, and our notation is listed in Table \ref{tab:notation}; some notation will be defined later in this section. 
In particular, for $\alpha \in \mathcal{A}$, and $\beta \in \mathcal{B}$, we will make use of the notations
\begin{align}
T_\alpha &= \#\{i:X_i = \alpha\} = nF_{\alpha}, \label{eq:deft}\\
S_\beta &= \#\{i:Y_i = \beta\}, \label{eq:defs}\\
S_{\beta|\alpha} &= \#\{i: X_i = \alpha, Y_i = \beta\}. \label{eq:defs2}
\end{align}
We write $T$ and $S$ for the vectors in $\mathbb{Z}^{a}_{\geq 0}$ and $\mathbb{Z}^{b}_{\geq 0}$, where $b = \#\mathcal{B}$. For a particular $\vec{x} \in \mathcal{A}^n$, $\vec{y} \in \mathcal{B}^n$, we write $t_\alpha$, $s_{\beta}$, $s_{\beta|\alpha}$ for the associated tallies; we write $t(\vec{x})$, etc., if we need to disambiguate between different input data.

\begin{figure*}
\begin{center}
\begin{tikzpicture}[scale=0.4]

\draw[rounded corners] (7,1) -- (11,1) -- (11,-7) -- (7,-7) -- cycle;
\draw[-] (9,1) -- (9,-7);
\draw[-] (9,-1) -- (11,-1);
\draw[-] (9,-3) -- (11,-3);
\draw[-] (9,-5) -- (11,-5);
\draw (8,-3) node {$\vec{X}$};
\draw (10,0) node(a1) {$X_1$};

\draw[-latex,decorate,decoration={snake}] (11,0) --node[above]{$\mathcal{Q}$} (15,0);
\draw (10,-2) node(a2) {$X_2$};

\draw[-latex,decorate,decoration={snake}] (11,-2) --node[above]{$\mathcal{Q}$} (15,-2);
\draw (10,-6) node(an) {$X_n$};

\draw[-latex,decorate,decoration={snake}] (11,-6) --node[above]{$\mathcal{Q}$} (15,-6);
\draw[-, dashed] (10,-3.5) -- (10,-4.5);

\draw[rounded corners] (15,1) -- (19,1) -- (19,-7) -- (15,-7) -- cycle;
\draw[-] (17,1) -- (17,-7);
\draw[-] (15,-1) -- (17,-1);
\draw[-] (15,-3) -- (17,-3);
\draw[-] (15,-5) -- (17,-5);
\draw (16,0) node {$Y_1$};
\draw (16,-2) node {$Y_2$};
\draw (16,-6) node {$Y_n$};
\draw[-, dashed] (16,-3.5) -- (16,-4.5);
\draw (18,-3) node {$\vec{Y}$};
\draw[dashed, rounded corners] (-0.5,1.5) --node[above]{Hidden} (11.5,1.5) -- (11.5,-7.5) -- (-0.5,-7.5) -- cycle;
\draw[rounded corners] (0,0.5) -- (3,0.5) -- (3,-2.5) -- (0,-2.5) -- cycle;
\draw (1.5,-1) node {$P$};
\draw[-latex,decorate,decoration={snake}] (3,-1) --node[above]{$X_i \sim P$} (7,-1);
\draw[rounded corners] (0,-3.5) -- (3,-3.5) -- (3,-6.5) -- (0,-6.5) -- cycle;
\draw (1.5,-5) node{$F$};
\draw[-latex] (7,-5) --node[above]{frequencies} (3,-5);
\draw (-5.5,-1) node {$\mu$};
\draw[rounded corners] (-7,0.5) -- (-4,0.5) -- (-4,-2.5) -- (-7,-2.5) -- cycle;
\draw[-latex,decorate,decoration={snake}] (-4,-1) --node[above]{$P \sim \mu$} (0,-1);
\draw[rounded corners] (22,0.5) -- (25,0.5) -- (25,-2.5) -- (22,-2.5) -- cycle;
\draw[rounded corners] (22,-3.5) -- (25,-3.5) -- (25,-6.5) -- (22,-6.5) -- cycle;
\draw[-latex] (19,-1) --node[above]{$\Pi$} (22,-1);
\draw[-latex] (19,-5) --node[above]{$\Phi$} (22,-5);
\draw (23.5,-1) node{$\hat{P}$};
\draw (23.5,-5) node{$\hat{F}$};
\draw (-5.5,-9) node {prior};
\draw (1.5,-9) node {statistics};
\draw (9,-9) node {private data};
\draw (17,-9) node {perturbed data};
\draw (23.5,-9) node {estimates};
\end{tikzpicture}
\end{center}
\caption{The LDP setting as used in this paper. \label{fig:ldp}}
\end{figure*}
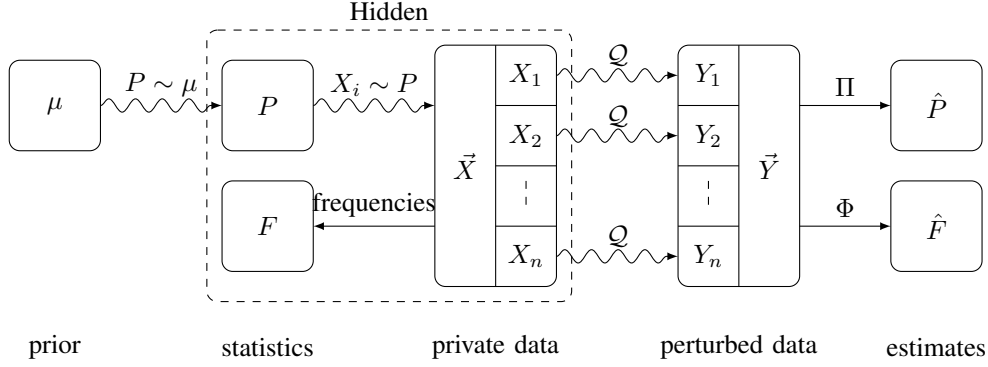

\begin{table}
    \centering
    \begin{tabular}{|c|c|}
    \hline
        $\mathcal{A}$ & input space \\
        $a$ & $\#\mathcal{A}$ \\
        $n$ & number of users\\
        $\vec{X}$ & vector of private data\\
        $F$ & frequencies of private data \\
        $T$ & tallies of private data \\
        \hline
        $\mathcal{P}_{\mathcal{A}}$ & space of prop. distr. on $\mathcal{A}$ \\
        $P$ & prob. distr. of private data \\
        $\mu$ & probability measure of $P$ \\
        \hline
        $\mathcal{Q}$ & privacy protocol \\
        $Q$ & matrix associated to $\mathcal{Q}$ \\
        $\mathcal{B}$ & output space associated to $\mathcal{Q}$ \\
        $b$ & $\#\mathcal{B}$\\
        $\varepsilon$ & LDP-parameter \\
        \hline
        $\vec{Y}$ & vector of perturbed data\\
        $S$ & tallies of perturbed data\\
        $\Pi$ & estimator of $P$ given $\vec{Y}$\\
        $\Phi$ & estimator of $F$ given $\vec{Y}$\\
        \hline
    \end{tabular}
    \caption{\it Notation employed in this paper.}
    \label{tab:notation}
\end{table}

\subsection{Privacy and utility metrics}

Let $\mathcal{A}$ be a finite set. A \emph{privacy protocol} for $\mathcal{A}$ is a random function $\mathcal{Q}\colon\mathcal{A} \rightarrow \mathcal{B}$, where $\mathcal{B}$ is another finite set. 
Upon identifying $\mathcal{A} = \{1,\cdots,a\}$ and $\mathcal{B} = \{1,\cdots,b\}$, we can represent $\mathcal{Q}$ by a matrix $Q \in \mathbb{R}^{b \times a}$, with $Q_{y|x} = \mathbb{P}(\mathcal{Q}(x) = y)$.  The \emph{de facto} standard way to measure the privacy of a protocol is via $\varepsilon$-Local Differential Privacy (LDP):

\begin{definition}
\emph{($\varepsilon$-LDP \cite{kasiviswanathan2011can})}
Let $\mathcal{Q}$ be a privacy protocol for $\mathcal{A}$, and let $\varepsilon \in \mathbb{R}_{\geq 0}$. We say that $\mathcal{Q}$ \emph{satisfies $\varepsilon$-LDP} if for all $x,x' \in \mathcal{A}$ and all $y \in \mathcal{B}$ one has
\begin{equation}
Q_{y|x'} \leq \textrm{\emph{e}}^{\varepsilon}Q_{y|x}.
\end{equation}
\end{definition}

Intuitively, this means that for small $\varepsilon$, given the output $y$, it is difficult to decide whether the input was $x$ or $x'$. The smaller $\varepsilon$, the more privacy the protocol offers. An advantage of this metric is that it is a worst-case approach to privacy, ensuring strong privacy guarantees that hold in all situations.

On the side of utility, we measure the accuracy of the aggregator's estimation by taking the expected value (over $P \sim \mu$) of the mean squared error:

\begin{definition}
We define the \emph{mean squared error} of $(\mathcal{Q},\Pi)$ and $(\mathcal{Q},\Phi)$ to be
\begin{align}
\recht{MSE}_{\mu}^{\recht{distr}}(\mathcal{Q},\Pi) &:= \mathbb{E}_{P,\vec{Y}}||P-\Pi(\vec{Y})||_2^2, \label{eq:msedef1}\\
\recht{MSE}_{\mu}^{\recht{freq}}(\mathcal{Q},\Phi) &:= \mathbb{E}_{F,\vec{Y}}||F-\Phi(\vec{Y})||_2^2. \label{eq:msedef2}
\end{align}
\end{definition}

Note that these metrics depend on $a$ and $n$. They also depend on $\mu$, as this affects the distributions of both $P$ and $F$. Using this metric means that the best estimator is the one that gives on average the lowest squared error, when averaging over all possible input distributions, and all possible outputs. If one is more interested in worst-case performance, one can consider the following worst-case metrics \cite{duchi2013local,blasiok2019towards}:
\begin{definition}
We define the \emph{worst-case MSE} of $(\mathcal{Q},\Pi)$ and $(\mathcal{Q},\Phi)$ to be

\begin{align}
\recht{WMSE}^{\recht{distr}}(\mathcal{Q},\Pi) &:= \sup_{p}\mathbb{E}_{\vec{Y}|P=p}\left[||p-\Pi_n(\vec{Y})||^2_2\right], \\
\recht{WMSE}^{\recht{freq}}(Q,\Phi_n) &:= \max_f \mathbb{E}_{\vec{Y}|F=f}\left[||f-\Phi_n(\vec{Y})||^2_2\right]. 
\end{align}
\end{definition}
These are the metrics used by \cite{blasiok2019towards,duchi2013local} in Table \ref{tab:results}. Note that
\begin{align}
\recht{WMSE}^{\recht{distr}}(\mathcal{Q},\Pi) &= \sup_{\mu} \recht{MSE}^{\recht{distr}}_{\mu}(\mathcal{Q},\Pi),\\
\recht{WMSE}^{\recht{freq}}(\mathcal{Q},\Phi) &\geq \sup_{\mu} \recht{MSE}^{\recht{freq}}_{\mu}(\mathcal{Q},\Phi).
\end{align}
We prefer to use the MSE measures 
instead of the worst-case measures
for two reasons: First, they align more closely with an aggregator's needs, since the aggregator will be interested in a protocol's behaviour on a typical database. 
Second, while worst case guarantees are helpful, the WMSE metrics do not fully give worst case guarantees, as they still involve an expected value over $\mathcal{Q}$. In fact, since the LDP condition implies that any output has positive probability given any input, it is impossible to give a worst case guarantee on the estimation error. Therefore it makes sense to consider average case metrics. Third, any lower bound on MSE will imply a lower bound on WMSE, making it more fruitful to look for lower bounds on MSE.

\subsection{Probability and information theory} \label{ssec:prob}

For a finite set $\mathcal{A}$ of size $a$, we write $\mathcal{P}_{\mathcal{A}}$ for the space of probability distributions on $\mathcal{A}$, i.e.
\begin{equation}
\mathcal{P}_{\mathcal{A}} = \left\{p \in \mathbb{R}^{a}_{\geq 0} : \sum_{x \in \mathcal{A}} p_x = 1\right\}.
\end{equation}
For a discrete random variable $X$ on $\mathcal{A}$, we write $\recht{H}(X)$ for its Shannon entropy, and for a continuous random variable $Y$ on $\mathbb{R}^d$, we write $\recht{h}(Y)$ for its differential entropy. If $P$ is a continuous random variable on $\mathcal{P}_{\mathcal{A}}$, we define its differential entropy as follows: Choose an identification $\mathcal{A} = \{1,\cdots,a\}$, and define $P' = (P_1,\cdots,P_{a-1})$, which is a continuous random variable on $\mathbb{R}^{a-1}$. We then define $\recht{h}(P) := \recht{h}(P')$; this does not depend on the choice of enumeration of $\mathcal{A}$. As such, we fix such an enumeration for the rest of this paper, unless stated otherwise. Related information-theoretic measures, such as $\recht{h}(P|\vec{Y})$ and $\recht{I}(P;\vec{Y})$, are defined similarly. For more details on information theory we refer to \cite{cover1999elements}.

\subsection{Assumptions on privacy protocols} \label{ssec:ass}

Throughout this paper we make two technical but harmless assumptions on the privacy protocol~$\mathcal{Q}$:
\begin{enumerate}
    \item We assume that the matrix $Q$ has rank $a$, i.e. it is injective as a linear map. We do this because privacy protocols of lower rank are unable to distinguish all possible input distributions, which makes them unsuitable for frequency estimation \cite{lopuhaa2019information}.
    \item  We assume that for all $x \in \mathcal{A}, y \in \mathcal{B}$ one has $Q_{y|x} > 0$. If there is an $y$ such that $Q_{y|x} = 0$ for all $x$, then we can remove $y$ from $\mathcal{B}$ without changing the protocol. If there is an $y$ such that $Q_{y|x} = 0$ but $Q_{y|x'} > 0$, then $Q$ does not satisfy $\varepsilon$-LDP for any $\varepsilon$. Since we are only interested in privacy protocols that offer privacy, disregarding this case is harmless.
\end{enumerate}
The main reason for these assumptions is that they simplify the expressions in Theorem \ref{thm:hlim} compared to \cite{lopuhaa2019information}, while still being general enough to describe all protocols that are used in practice.

\section{Optimal estimators} \label{sec:opt}

We obtain our first upper bound for the MSE by giving formulas for the optimal estimators, given $Q,\mu$. Essentially, this is a well-known fact about minimum mean square error (MMSE) estimators. Recall that we have identified $\mathcal{A} = \{1,\cdots,a\}$.

\begin{theorem} \label{thm:bound1}
Let $\mathcal{Q},\mu$ be given. The $\Pi_{\recht{opt}}$ and $\Phi_{\recht{opt}}$ minimising (\ref{eq:msedef1}) and (\ref{eq:msedef2}), respectively, are given by
\begin{align}
\Pi_{\recht{opt}}(\vec{y}) &= \mathbb{E}_{P|\vec{Y}=\vec{y}}[P],\\
\Phi_{\recht{opt}}(\vec{y}) &= \mathbb{E}_{F|\vec{Y}=\vec{y}}[F].
\end{align}
For these estimators we have
\begin{align}
\recht{MSE}^{\recht{distr}}_{\mu}(\mathcal{Q},\Pi_{\recht{opt}}) &= \sum_{x=1}^a \mathbb{E}_{\vec{y}}\recht{Var}(P_x|\vec{Y}=\vec{y}),\\
\recht{MSE}^{\recht{freq}}_{\mu}(\mathcal{Q},\Phi_{\recht{opt}}) &= \sum_{x=1}^a \mathbb{E}_{\vec{y}}\recht{Var}(F_x|\vec{Y}=\vec{y}).
\end{align}
\end{theorem}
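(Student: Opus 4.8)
The plan is to recognise this as the classical characterisation of the minimum mean square error (MMSE) estimator: for any square-integrable target, the conditional expectation is the unique minimiser of the expected squared error. I would prove the statement for $\Pi$ and $P$, since the argument for $\Phi$ and $F$ is word-for-word identical after replacing $P$ by $F$; I would simply remark on this at the end. First I would write the squared norm coordinatewise,
\begin{equation*}
\|P - \Pi(\vec{Y})\|_2^2 = \sum_{x=1}^a \bigl(P_x - \Pi(\vec{Y})_x\bigr)^2,
\end{equation*}
and apply the tower property (law of total expectation), conditioning on $\vec{Y}$. This is legitimate because $\vec{Y}$ takes values in the finite set $\mathcal{B}^n$ and all quantities involved are bounded (both $P$ and $F$ lie in $[0,1]^a$, so every conditional expectation and variance appearing below is finite), yielding
\begin{equation*}
\recht{MSE}^{\recht{distr}}_{\mu}(\mathcal{Q},\Pi) = \mathbb{E}_{\vec{y}}\sum_{x=1}^a \mathbb{E}\bigl[(P_x - \Pi(\vec{y})_x)^2 \mid \vec{Y}=\vec{y}\bigr].
\end{equation*}

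The crux is the observation that there is no constraint linking the values $\Pi(\vec{y})$ across different outputs $\vec{y}$, nor linking the $a$ coordinates to one another: $\Pi$ ranges over all functions $\mathcal{B}^n \to \mathbb{R}^a$. Hence the minimisation decouples into, for each fixed $\vec{y}$ and each coordinate $x$, the one-dimensional problem of choosing $c \in \mathbb{R}$ to minimise $\mathbb{E}[(P_x - c)^2 \mid \vec{Y}=\vec{y}]$. I would then invoke the elementary bias--variance identity
\begin{equation*}
\mathbb{E}[(P_x - c)^2 \mid \vec{Y}=\vec{y}] = \recht{Var}(P_x \mid \vec{Y}=\vec{y}) + \bigl(c - \mathbb{E}[P_x \mid \vec{Y}=\vec{y}]\bigr)^2,
\end{equation*}
whose unique minimiser is $c = \mathbb{E}[P_x \mid \vec{Y}=\vec{y}]$, attaining value $\recht{Var}(P_x \mid \vec{Y}=\vec{y})$. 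Assembling these coordinatewise minimisers gives $\Pi_{\recht{opt}}(\vec{y}) = \mathbb{E}[P \mid \vec{Y}=\vec{y}]$, and substituting back produces the stated formula for the minimal MSE.

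I do not expect a genuine obstacle, since the result is a standard fact; the only points requiring (routine) care are justifying the interchange of the finite sum with the expectation and the pointwise-in-$\vec{y}$ optimisation, both of which rest on the boundedness of $P$ and on $\Pi$ being an \emph{unconstrained} function of $\vec{Y}$. The frequency case follows identically, noting that $F$ is likewise bounded and is a deterministic function of $\vec{X}$, so all conditional moments appearing in the argument exist and the same decoupling applies.
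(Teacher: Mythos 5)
Your proposal is correct and follows essentially the same route as the paper's own proof: both condition on $\vec{Y}$, decompose the squared error coordinatewise, and minimise pointwise, with your explicit bias--variance identity being exactly the step the paper phrases as the inequality $\mathbb{E}_{P|\vec{Y}=\vec{y}}[(P_x-\Pi(\vec{y})_x)^2]\geq \recht{Var}(P_x|\vec{Y}=\vec{y})$ with equality iff $\Pi(\vec{y})_x=\mathbb{E}[P_x|\vec{Y}=\vec{y}]$. Your added remarks on boundedness and the unconstrained decoupling over $\vec{y}$ and coordinates are sound and merely make explicit what the paper leaves implicit.
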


\begin{proof}
For any estimator $\Pi$ of $P$ one has
\begin{align}
\recht{MSE}_{\mu}^{\recht{distr}}(Q,\Pi) &= \mathbb{E}_{\vec{y}}\mathbb{E}_{P|\vec{Y}=\vec{y}}\left[||P-\Pi(\vec{y})||^2_2\right]\\
&= \mathbb{E}_{\vec{y}}\sum_{x=1}^a\mathbb{E}_{P|\vec{Y}=\vec{y}}\left[(P_x-\Pi(\vec{y})_x)^2\right]\\
&\geq \mathbb{E}_{\vec{y}}\sum_{x=1}^a\recht{Var}(P_x|\vec{Y}=\vec{y}). \label{eq:varbound}
\end{align}
One has equality in (\ref{eq:varbound}) if and only if $\Pi(\vec{y})_x = \mathbb{E}_P[P_x|\vec{Y}=\vec{y}]$ for all $x$ and $\vec{y}$. The proof for $\Phi$ is similar.
\end{proof}

Theorem \ref{thm:bound1} formalises the intuition in \cite{jia2019calibrate,wang2019locally} that one gets better estimators by incorporating prior knowledge of the distribution. While Theorem \ref{thm:bound1} gives us a direct formula for the optimal estimators for a given privacy protocol, the disadvantage is that these can be computationally difficult to evaluate. With regards to $\Pi_{\recht{opt}}$, we find that for any $x$ we have
\begin{equation}
[\Pi_{\recht{opt}}(\vec{y})]_{x} = \frac{1}{\mathbb{P}(\vec{Y}=\vec{y})}\int_{p \in \mathcal{P}_{\mathcal{A}}} \mu(p)p_{x} \prod_{\beta \in \mathcal{B}} (Q\cdot p)_\beta^{s_\beta} \textrm{d}p, \label{eq:int}
\end{equation}
where $Q \cdot p$ is matrix multiplication, $s_\beta$ is as in (\ref{eq:defs}), and $\mu$ is the probability density function for $p$. For large $a$, the integral over $\mathcal{P}_{\mathcal{A}}$ can be computationally involved. One approach is to do a Monte Carlo approximation of the integral. Typically, $\mu$ will be a Dirichlet distribution, for which several efficient Monte Carlo methods exist \cite{cheng1998random,betancourt2012cruising}. However, for large $n$, the $s_\beta$ will be large as well. This leads to a spiky distribution, which needs more samples to approximate accurately. If the aggregator is interested in estimating $P$, we can circumvent this by giving an expression for $\Pi_{\recht{opt}}(\vec{y})$, as well as for its MSE that does not involve any integration. Its complexity is stated in the following Proposition:

\begin{proposition}
\label{prop:dirichlet}
Let $\Pi_{\recht{opt}}$ be as in Theorem \ref{thm:bound1}. Suppose $\mu$ is a Dirichlet distribution, and let $\vec{y} \in \mathcal{B}^n$. Then the estimator $\Pi_{\recht{opt}}(\vec{y})$ and $\recht{MSE}^{\recht{distr}}_{\mu}(\mathcal{Q},\Pi_{\recht{opt}})$ can be calculated in time complexity $\mathcal{O}(n^{b(a-1)})$ and $\mathcal{O}(n^{ab-1})$, respectively.
\end{proposition}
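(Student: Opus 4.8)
The plan is to exploit the conjugacy between the Dirichlet prior and the multinomial-type likelihood $\prod_{\beta} (Q\cdot p)_\beta^{s_\beta}$ appearing in (\ref{eq:int}), so that every integral over $\mathcal{P}_{\mathcal{A}}$ collapses to a closed form in Gamma functions and no numerical integration is ever needed. Writing the Dirichlet density as $\mu(p) = B(\alpha)^{-1}\prod_x p_x^{\alpha_x-1}$ with parameter $\alpha$, the quantity $[\Pi_{\recht{opt}}(\vec y)]_x$ is the ratio of $\int_{\mathcal{P}_{\mathcal{A}}} \mu(p)\,p_x \prod_\beta (Q\cdot p)_\beta^{s_\beta}\,\mathrm{d}p$ and $\mathbb{P}(\vec Y=\vec y)=\int_{\mathcal{P}_{\mathcal{A}}} \mu(p)\prod_\beta (Q\cdot p)_\beta^{s_\beta}\,\mathrm{d}p$. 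Up to the Dirichlet density both integrands are polynomials in $p$, so the first task is to expand them into monomials.

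First I would expand each factor $(Q\cdot p)_\beta^{s_\beta}=(\sum_x Q_{\beta|x}p_x)^{s_\beta}$ by the multinomial theorem and take the product over $\beta\in\mathcal{B}$. This writes the full integrand as a sum over tuples $(\vec k_\beta)_{\beta\in\mathcal{B}}$, where each $\vec k_\beta$ is a composition of $s_\beta$ into $a$ nonnegative parts; each tuple contributes a known scalar (a product of multinomial coefficients and powers of the $Q_{\beta|x}$) times the monomial $\prod_x p_x^{m_x}$ with $m_x=\sum_\beta k_{\beta,x}$. I then integrate each monomial against $\mu$ using the standard Dirichlet identity $\int_{\mathcal{P}_{\mathcal{A}}}\prod_x p_x^{\gamma_x-1}\,\mathrm{d}p=B(\gamma)$; the extra factor $p_x$ in the numerator merely shifts one exponent by one. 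Thus numerator and denominator become explicit finite sums of Beta values and $[\Pi_{\recht{opt}}(\vec y)]_x$ is their quotient. The term count is $\prod_\beta\binom{s_\beta+a-1}{a-1}$; since each factor is $\mathcal{O}(s_\beta^{a-1})$, there are $b$ of them, and $s_\beta\le n$, this is $\mathcal{O}\big((\prod_\beta s_\beta)^{a-1}\big)=\mathcal{O}(n^{b(a-1)})$. Each term costs $\mathcal{O}(1)$ with $a,b$ treated as constants, and running over all $a$ coordinates changes nothing asymptotically, giving the first bound.

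For the MSE I would use $\recht{MSE}^{\recht{distr}}_{\mu}(\mathcal{Q},\Pi_{\recht{opt}})=\sum_x \mathbb{E}_{\vec y}\,\recht{Var}(P_x\mid\vec Y=\vec y)$ from Theorem \ref{thm:bound1}, together with the observation that the conditional law of $P$ given $\vec Y=\vec y$ depends on $\vec y$ only through the tally vector $\vec s$, since the likelihood is a function of $\vec s$ alone. Hence the expectation over the $b^n$ outcomes $\vec y$ can be replaced by a sum over the $\binom{n+b-1}{b-1}=\mathcal{O}(n^{b-1})$ tally vectors $\vec s$, weighted by $\mathbb{P}(S=\vec s)=\binom{n}{\vec s}\,\mathbb{P}(\vec Y=\vec y)$ for any representative $\vec y$. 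For each $\vec s$, computing $\recht{Var}(P_x\mid S=\vec s)=\mathbb{E}[P_x^2\mid S=\vec s]-\mathbb{E}[P_x\mid S=\vec s]^2$ uses exactly the same polynomial expansion (now with an $p_x^2$ factor for the second moment), costing $\mathcal{O}(n^{b(a-1)})$. Multiplying the $\mathcal{O}(n^{b-1})$ tally vectors by the $\mathcal{O}(n^{b(a-1)})$ per-tally cost gives $\mathcal{O}(n^{b-1+b(a-1)})=\mathcal{O}(n^{ab-1})$, as claimed.

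I expect the main obstacle to be bookkeeping in the complexity count rather than any conceptual difficulty: one must verify that the Gamma-function evaluations are genuinely $\mathcal{O}(1)$ in the chosen cost model (i.e. that $a$, $b$, and arithmetic precision are held constant), and that grouping terms neither inflates nor deflates the exponent. The saving of one factor of $n$ by reducing the outer sum from outcomes to tallies is what turns the naively feared exponent into $ab-1$, so the clean identity $b-1+b(a-1)=ab-1$ serves as the key sanity check.
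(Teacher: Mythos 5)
Your proposal is correct and follows essentially the same route as the paper's proof: the paper likewise reduces the posterior integrals to closed-form sums of Beta-function values indexed by the joint tallies $s_{\beta|\alpha}$ (exactly your multinomial compositions $\vec{k}_\beta$ of each $s_\beta$ into $a$ parts), counts $\#\mathcal{S}_s = \mathcal{O}(n^{(a-1)b})$ such terms per evaluation, and obtains the MSE complexity by replacing the sum over all of $\mathcal{B}^n$ with a sum over the $\mathcal{O}(n^{b-1})$ tally vectors $s$, giving $\mathcal{O}(n^{(a-1)b}\cdot n^{b-1}) = \mathcal{O}(n^{ab-1})$. The only cosmetic difference is that the paper first packages the conjugacy step as a mixture-of-Dirichlets posterior with weights $w_{\vec{x}|\vec{y}}$ (citing a prior result) and then collapses the sum over $\vec{x} \in \mathcal{A}^n$ to tallies, whereas you expand the likelihood directly by the multinomial theorem; the underlying computation is identical.
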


This Proposition is proven in appendix \ref{app:dirichlet}. Unfortunately, since $n$ is typically very large in LDP settings, this can get computationally prohibitive. It therefore becomes useful to look for ways to approximate $\Pi_{\recht{opt}}$.
Below we discuss such an approximation method.

\subsection{Approximation by MLE} \label{ss:mle}

For large $n$, the calculation in Proposition \ref{prop:dirichlet} will still be computationally involved. However, the following Theorem shows that we can efficiently and accurately approximate $\Pi_{\recht{opt}}$ by the maximum likelihood estimator:

\begin{theorem} \label{thm:mle}
For $\vec{y} \in \mathcal{B}^n$, let $\Pi_{\recht{MLE}}(\vec{y})$ be the maximum likelihood estimator of $P$ given $\vec{y}$. Then $\Pi_{\recht{MLE}} \rightarrow \Pi_{\recht{opt}}$  in probability as $n \rightarrow \infty$. 
Furthermore, one finds $\Pi_{\recht{MLE}}$ by solving an $(a-1)$-dimensional convex optimisation problem whose complexity does not depend on $n$.
\end{theorem}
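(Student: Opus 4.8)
The plan is to treat the two assertions separately: the computational characterisation of $\Pi_{\recht{MLE}}$ and the asymptotic agreement $\Pi_{\recht{MLE}} \to \Pi_{\recht{opt}}$. I would begin with the computational claim, which is the more elementary. Conditioned on $P = p$, the perturbed values $Y_i$ are i.i.d.\ with law $Qp$, so the likelihood of $\vec{y}$ factors as $\prod_{\beta \in \mathcal{B}} (Qp)_\beta^{s_\beta}$ and
\begin{equation}
\Pi_{\recht{MLE}}(\vec{y}) = \arg\max_{p \in \mathcal{P}_{\mathcal{A}}} \sum_{\beta \in \mathcal{B}} s_\beta \log\big((Qp)_\beta\big).
\end{equation}
Each summand is the concave function $\log$ composed with the linear map $p \mapsto (Qp)_\beta$, hence concave, and since the $s_\beta$ are nonnegative the whole objective is concave; maximising it over the $(a-1)$-dimensional simplex $\mathcal{P}_{\mathcal{A}}$ is therefore a convex program. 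Because every entry $Q_{\beta|x}$ is positive (Section~\ref{ssec:ass}), one has $(Qp)_\beta > 0$ throughout $\mathcal{P}_{\mathcal{A}}$, so the objective is finite and smooth and a maximiser exists; when all $s_\beta > 0$ the Hessian restricted to the tangent space $\{v : \sum_x v_x = 0\}$ is negative definite (using that $Q$ has rank $a$), giving uniqueness. The key point for the complexity is that, after dividing the objective by $n$, the data enters only through the empirical frequencies $s_\beta / n \in [0,1]$; the number of variables ($a-1$), the number of terms ($b$) and the range of the coefficients are all independent of $n$, so the size of the program does not grow with $n$.

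For the convergence claim I would show that both estimators converge in probability to the true distribution $P$, so that their difference vanishes. I would condition on $P = p$ for $p$ in the interior of $\mathcal{P}_{\mathcal{A}}$; since $\mu$ is continuous its boundary has $\mu$-measure zero, so this covers $\mu$-almost every realisation. The family $\{Qp\}$ is a regular parametric model: the Fisher information equals $Q^\top D^{-1} Q$ with $D = \recht{diag}((Qp)_\beta)$, which is positive definite because $Q$ has rank $a$ and $D \succ 0$. Classical consistency of the MLE then gives $\Pi_{\recht{MLE}}(\vec{Y}) \to p$ in $\mathbb{P}_p$-probability, while the Bernstein--von Mises theorem (applicable since the continuous prior $\mu$ has positive density at the interior point $p$) shows the posterior concentrates at $p$ at rate $n^{-1/2}$; as $\mathcal{P}_{\mathcal{A}}$ is bounded, this forces $\Pi_{\recht{opt}}(\vec{Y}) = \mathbb{E}[P \mid \vec{Y}] \to p$ as well. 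Integrating these conditional statements against $\mu$ yields $\|\Pi_{\recht{MLE}}(\vec{Y}) - \Pi_{\recht{opt}}(\vec{Y})\| \to 0$ in probability.

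I expect the main obstacle to be the rigorous comparison of the posterior mean with the MLE, rather than either consistency statement in isolation. To quantify the effect of the prior I would use a Laplace expansion of the posterior $\pi(p \mid \vec{y}) \propto \mu(p) \prod_\beta (Qp)_\beta^{s_\beta}$ around $\Pi_{\recht{MLE}}(\vec{y})$: the log-likelihood is locally quadratic with curvature of order $n$, so the posterior is asymptotically Gaussian with covariance of order $1/n$, and the smooth factor $\mu$ displaces its mean only by $O(1/n)$, giving $\Pi_{\recht{opt}} - \Pi_{\recht{MLE}} = O(1/n)$. The delicate steps are justifying the Gaussian approximation strongly enough to control the mean (not merely convergence in distribution) and handling realisations where the MLE sits near the boundary of $\mathcal{P}_{\mathcal{A}}$; the latter is harmless because for $\mu$-a.e.\ $p$ one has $s_\beta / n \to (Qp)_\beta > 0$, so the MLE is eventually interior.
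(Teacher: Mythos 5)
Your proposal is correct, and for the convergence claim it takes a genuinely different route from the paper. The computational half coincides with the paper's argument: both write the conditional likelihood as $\prod_{\beta}(Q\cdot p)_\beta^{s_\beta}$ and observe that maximising $\sum_\beta s_\beta \log((Q\cdot p)_\beta)$ over the $(a-1)$-dimensional simplex is a convex program whose size is independent of $n$; your extra remarks (positivity of $(Q\cdot p)_\beta$ from the assumption $Q_{y|x} > 0$, strict concavity on the tangent space from $\recht{rank}\, Q = a$, normalising by $n$ so the coefficients lie in $[0,1]$) simply make explicit what the paper leaves implicit. For the convergence $\Pi_{\recht{MLE}} \rightarrow \Pi_{\recht{opt}}$, however, the paper disposes of the claim with a single citation to \cite{strasser1975asymptotic}, a general result on the asymptotic agreement of Bayes estimators and maximum likelihood estimators, whereas you give a self-contained argument: conditional on an interior $P = p$, Wald-type consistency of the MLE in the identifiable model $\{Qp\}$ gives $\Pi_{\recht{MLE}}(\vec{Y}) \rightarrow p$, posterior concentration together with boundedness of $\mathcal{P}_{\mathcal{A}}$ gives $\Pi_{\recht{opt}}(\vec{Y}) = \mathbb{E}[P \mid \vec{Y}] \rightarrow p$, and integrating over $\mu$ yields the claim; your Laplace expansion even upgrades this to an $O(1/n)$ discrepancy, which the paper does not assert. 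The one point to repair is your appeal to Bernstein--von Mises: that theorem needs the prior density to be positive and suitably regular near $p$, while the paper assumes only that $\mu$ is a continuous distribution on $\mathcal{P}_{\mathcal{A}}$. Since the theorem's ``in probability'' statement integrates over $P \sim \mu$, you only need posterior concentration at $\mu$-almost every $p$, and Doob's consistency theorem delivers exactly this from identifiability alone (which your rank-$a$ observation supplies), with no density hypothesis; with that substitution your argument is complete under the paper's stated assumptions.
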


\begin{proof}
The convergence in probability is proven in \cite{strasser1975asymptotic}.  Furthermore, since $\mathbb{P}(\vec{Y}=\vec{y}|P=p) = \prod_{\beta} (Q \cdot p)_\beta^{s_\beta}$, we find $\Pi_{\recht{MLE}}(\vec{y})$ by solving the optimisation problem 
\begin{equation}
\recht{min}_{p \in \mathcal{P}_{\mathcal{A}}}  \left\{ -\sum_{\beta \in \mathcal{B}} s_\beta \log((Q \cdot p)_\beta)\right\}. \label{eq:minimise}
\end{equation}
The only effect of $n$ is in the scaling of the objective function, but this does not influence the difficulty of minimisation.
\end{proof}

Since the objective function in (\ref{eq:minimise}) is smooth and convex in $p$, this can be solved quickly numerically. Using the MLE rather than the posterior also has the advantage that it is independent of the choice of $\mu$, making it a good choice of estimator when the prior is unknown.

Unfortunately, finding an expression for $\Phi_{\recht{MLE}}$ is more complicated. 
However, for large $n$ we have $F \approx P$, so we can use the optimisation problem for $P$ to get an approximate value for $F$. Note that our approach to approximating the MLE is different from that of \cite{wang2019locally}, as there the MLE is approximated based on a noninjective transformation of the obfuscated tallies $S$, rather than on $S$ itself. This transformation is protocol-specific, and hence this method does not easily extend to general protocols. The advantage of (\ref{eq:minimise}) is that it can be used to improve the estimation of any protocol.

In Section \ref{sec:exp} we numerically evaluate how well $\Pi_{\recht{MLE}}$ and works as an estimator for both $P$ and $F$.

\section{Lower bounds on MSE} \label{sec:bounds}

In this section, we prove our new lower bounds on $\recht{MSE}^{\recht{distr}}(\mathcal{Q},\Pi)$ and $\recht{MSE}^{\recht{freq}}(\mathcal{Q},\Phi)$ in terms of the LDP parameter $\varepsilon$. Our approach consists of multiple steps, but the general outline is as follows:
\begin{enumerate}
\item In Theorem \ref{thm:bound1} it is proven that one can state $\recht{MSE}^{\recht{distr}}(\mathcal{Q},\Pi)$ and $\recht{MSE}^{\recht{freq}}(\mathcal{Q},\Phi)$ as variances of $P$ and $F$ given $\vec{Y}$. We can bound these variances in terms of the entropies $\recht{h}(P|\vec{Y})$ and $\recht{H}(T|\vec{Y})$ (Theorem \ref{thm:bound2}).
\item Using results about the asymptotic information-theoretic behaviour of LDP in \cite{lopuhaa2019information}, we express the limit (as $n \rightarrow \infty)$ of these entropies in terms of properties of the matrix $Q \in \mathbb{R}^{b \times a}$ (Theorem \ref{thm:hlim}).
\item Finally, we bound these linear-algebraic constructs in terms of $\varepsilon$, giving the desired result (Theorem \ref{cor:eps}).
\end{enumerate}

We also show that the MLE bound $\frac{a}{n\varepsilon^2}$ is the optimal bound of this form, in the sense that it can be attained for $a = 2$ for both distribution and frequency estimation (Corollary \ref{cor:opt}). Note that from a theoretical perspective, the intermediate steps can be of interest as well, as they offer lower bounds on the MSE from different viewpoints: information theory, linear algebra, and the LDP parameter $\varepsilon$.

\subsection{Reduction to information theory}

The disadvantage of the optimal estimators of the previous section is that their MSEs can be hard to quantify. 
In this section, we give lower bounds for these, based on the information-theoretical quantities $\recht{h}(P|\vec{Y})$ and $\recht{h}(T|\vec{Y})$. Intuitively, the lower the uncertainty about the value of $P$ or $T$, the lower the average error on the estimation should be. While Fano's inequality expresses the same sentiment, we need the following version, which more closely aligns with the MSE rather than a binary fail/success estimator.

\begin{theorem} \label{thm:bound2}
For any privacy protocol $\mathcal{Q}$ one has
\begin{align}
\recht{MSE}_{\mu}^{\recht{distr}}(\mathcal{Q},\Pi_{\recht{opt}}) &\geq \frac{a}{2\pi\textrm{\emph{e}}}\textrm{\emph{e}}^{\frac{2}{a-1}\recht{h}(P|\vec{Y})},\\
\liminf_{n \rightarrow \infty} \frac{\recht{MSE}_{\mu}^{\recht{freq}}(\mathcal{Q},\Phi_{\recht{opt}})}{\frac{a}{2\pi\textrm{\emph{e}}n^2}\textrm{\emph{e}}^{\frac{2}{a-1}\recht{H}(F|\vec{Y})}} &\geq 1.
\end{align}
\end{theorem}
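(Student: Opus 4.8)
The plan is to lower-bound the total conditional variance supplied by Theorem \ref{thm:bound1} by the conditional entropy, using that among all distributions with a given covariance the Gaussian maximises differential entropy. The only genuine work is keeping track of the geometry of the probability simplex, which enters through a Jacobian (distribution case) and a lattice covolume (frequency case); both turn out to equal $\sqrt a$, and this is what produces the common constant $\frac{a}{2\pi\mathrm e}$.

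First I would treat distribution estimation. Fix $\vec y$ and view $P$ conditioned on $\vec Y=\vec y$ as supported on the $(a-1)$-dimensional hyperplane $\{p:\sum_x p_x=1\}$. Introducing orthonormal coordinates $U$ on this hyperplane, the defining coordinates $(P_1,\dots,P_{a-1})$ induce the metric $I_{a-1}+\mathbf 1\mathbf 1^\top$, whose determinant is $a$, so the differential entropy of the paper satisfies $\recht h(P\mid\vec Y=\vec y)=\recht h(U\mid\vec Y=\vec y)-\tfrac12\log a$. Applying the Gaussian bound $\recht h(U\mid\vec Y=\vec y)\le\tfrac12\log\big((2\pi\mathrm e)^{a-1}\det\Sigma_U\big)$ together with Hadamard's inequality and AM--GM, $\det\Sigma_U\le\big(\tfrac{1}{a-1}\operatorname{tr}\Sigma_U\big)^{a-1}$, and noting that $\operatorname{tr}\Sigma_U=\sum_x\recht{Var}(P_x\mid\vec Y=\vec y)$ (the trace is basis-independent and $P$ varies only inside the hyperplane), one inverts to obtain the pointwise estimate
\[
\sum_{x=1}^a\recht{Var}(P_x\mid\vec Y=\vec y)\ \ge\ \frac{a-1}{2\pi\mathrm e}\,a^{1/(a-1)}\,\mathrm e^{\frac{2}{a-1}\recht h(P\mid\vec Y=\vec y)}.
\]
Taking $\mathbb E_{\vec y}$, using convexity of $\exp$ (Jensen) to replace $\mathbb E_{\vec y}\recht h(P\mid\vec Y=\vec y)$ by $\recht h(P\mid\vec Y)$, and finally the elementary inequality $(a-1)a^{1/(a-1)}\ge a$ (with equality at $a=2$), yields the first claim via Theorem \ref{thm:bound1}.

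For frequency estimation the obstacle is that $F=T/n$ is discrete, so $\recht H(F\mid\vec Y)=\recht H(T\mid\vec Y)$ is a Shannon entropy rather than a differential one. I would bridge this by \emph{dithering}: the tallies $T$ lie on a translate of the $A_{a-1}$ root lattice inside $\{\sum_x T_x=n\}$, whose covolume is exactly $\sqrt a$. Adding an independent $W$ uniform on a fundamental domain produces a continuous $\tilde T=T+W$ with $\recht h(\tilde T\mid\vec Y=\vec y)=\recht H(T\mid\vec Y=\vec y)+\tfrac12\log a$ exactly, and total variance $\sum_x\recht{Var}(T_x\mid\vec Y=\vec y)+V_W$ with $V_W$ depending only on $a$. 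Rerunning the maximum-entropy computation on $\tilde T$ and using $\recht{Var}(F_x)=\recht{Var}(T_x)/n^2$ gives, after $\mathbb E_{\vec y}$ and Jensen,
\[
\recht{MSE}^{\recht{freq}}_{\mu}(\mathcal Q,\Phi_{\recht{opt}})\ \ge\ \frac{a}{2\pi\mathrm e\,n^2}\,\mathrm e^{\frac{2}{a-1}\recht H(F\mid\vec Y)}-\frac{V_W}{n^2}.
\]

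Dividing by the target quantity, the ratio is at least $1-\tfrac{2\pi\mathrm e\,V_W}{a}\,\mathrm e^{-\frac{2}{a-1}\recht H(F\mid\vec Y)}$, so the remaining (and main) obstacle is to show $\recht H(F\mid\vec Y)\to\infty$, which makes this correction vanish and forces the $\liminf$ to be at least $1$. I would obtain the divergence from $\recht H(T\mid\vec Y)\ge\recht H(T\mid P,\vec Y)$: conditioned on $(P,\vec Y)$ the inputs $X_i$ are independent with per-user posteriors bounded away from $0$ and $1$ by the LDP hypothesis, so each $T_x$ is a Poisson--binomial of variance $\Theta(n)$ and a local central limit estimate forces its entropy to grow like $\tfrac12\log n$; alternatively one reads off $\recht H(F\mid\vec Y)=\tfrac{a-1}{2}\log n+O(1)$ from the asymptotic entropy analysis of \cite{lopuhaa2019information}. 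This divergence is precisely why the frequency statement is only an asymptotic ($\liminf$) inequality, whereas the distribution statement, being free of the dither correction, holds for every $n$.
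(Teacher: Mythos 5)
Your proof is correct, but it takes a genuinely different route from the paper's. The paper never touches the geometry of the simplex: working directly with the coordinates $(P_1,\dots,P_{a-1})$, it writes the total conditional variance as $\tfrac{1}{a-1}$ times a double sum, applies AM--GM to get, for each left-out coordinate $x$, the product $\prod_{x'\neq x}\recht{Var}(P_{x'}|\vec{Y}=\vec{y})^{1/(a-1)}$, bounds each product below via Hadamard's inequality and the max-entropy estimate $\det\recht{Cov}(P'|\vec{Y}=\vec{y})\geq(2\pi \mathrm{e})^{1-a}\mathrm{e}^{2\recht{h}(P'|\vec{Y}=\vec{y})}$ (Cover--Thomas 9.6.5), and gets the factor $a$ from summing the $a$ identical leave-one-out bounds, finishing with the same Jensen step you use. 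Your single global estimate in orthonormal hyperplane coordinates, with the Jacobian $\sqrt{a}$ turning into the factor $a^{1/(a-1)}$ and the elementary inequality $(a-1)a^{1/(a-1)}\geq a$ (equality at $a=2$), recovers the theorem and in fact yields the marginally sharper constant $\tfrac{a-1}{2\pi\mathrm{e}}a^{1/(a-1)}$. For the frequency part your argument is more rigorous than the paper's: the paper approximates $T$ by ``the discretisation of a continuous variable $\tilde{T}$'' with uncontrolled $\approx$ steps, whereas your dithering over the $A_{a-1}$ lattice gives the exact identity $\recht{h}(\tilde{T}|\vec{Y}=\vec{y})=\recht{H}(T|\vec{Y}=\vec{y})+\tfrac12\log a$ and an explicit $V_W/n^2$ error, correctly isolating the one genuinely asymptotic ingredient, $\recht{H}(F|\vec{Y})\rightarrow\infty$. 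One caveat there: the per-user posteriors $\mathbb{P}(X_i=x\mid Y_i,P=p)$ are bounded away from $0$ and $1$ only for $p$ in the interior of $\mathcal{P}_{\mathcal{A}}$ (LDP bounds them relative to $p_x$, not absolutely); since $\mu$ is continuous this holds a.s., but to make the expected entropy diverge you should restrict to a compact interior subset of positive $\mu$-measure before applying the local-CLT bound --- or simply use your alternative appeal to the asymptotics of \cite{lopuhaa2019information} (the paper's Theorem \ref{thm:hlim}, which does not depend on Theorem \ref{thm:bound2}, so there is no circularity), which closes this gap cleanly.
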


\begin{proof}
We start with $\Pi_{\recht{opt}}$. By (\ref{eq:varbound}) we have
\begin{align}
\recht{MSE}^{\recht{distr}}(\mathcal{Q},\Pi_{\recht{opt}}) &= \mathbb{E}_{\vec{y}}\sum_{x=1}^a\recht{Var}(P_x|\vec{Y}=\vec{y}) \\
&= \tfrac{1}{a-1}\mathbb{E}_{\vec{y}}\sum_{\substack{x \leq a,\\x'\neq x}}\recht{Var}(P_{x'}|\vec{Y}=\vec{y}) \\
&\geq \mathbb{E}_{\vec{y}}\sum_{x=1}^a\prod_{x'\neq x}\recht{Var}(P_{x'}|\vec{Y}=\vec{y})^{\frac{1}{a-1}}. \label{eq:hbound1}
\end{align}
Here the last inequality is the arithmetic-geometric mean inequality. Now consider the $a$-th summand; we again write $P'= (P_1,\cdots,P_{a-1})$. The $\{\recht{Var}(P_{x'}|\vec{Y}=\vec{y})\}_{x'< a}$ are the diagonal coefficients of the positive definite matrix $\recht{Cov}(P'|\vec{Y}=\vec{y})$. By Hadamard's inequality we have
\begin{equation}
\prod_{x' < a}\recht{Var}(P_{x'}|\vec{Y}=\vec{y}) \geq \recht{det} \recht{Cov}(P'|\vec{Y}=\vec{y}).
\end{equation}
Furthermore, Theorem 9.6.5 of \cite{cover1999elements} shows that
\begin{equation}
\recht{det} \recht{Cov}(P'|\vec{Y}=\vec{y}) \geq (2\pi\textrm{e})^{1-a}\textrm{e}^{2\recht{h}(P'|\vec{Y}=\vec{y})}. \label{eq:hbound3}
\end{equation}
hence
\begin{equation} \label{eq:hbound2}
\prod_{x' < a}\recht{Var}(P_{x'}|\vec{Y}=\vec{y})^{\frac{1}{a-1}} \geq \frac{1}{2\pi\textrm{e}}\textrm{e}^{\frac{2}{a-1}\recht{h}(P'|\vec{Y}=\vec{y})}.
\end{equation}
The discussion in section \ref{ssec:prob}  shows us that $\recht{h}(P'|\vec{Y}=\vec{y}) = \recht{h}(P|\vec{Y} = \vec{y})$, and that this does not depend on the enumeration of $\mathcal{A}$. It follows that (\ref{eq:hbound2}) in fact holds for every summand in (\ref{eq:hbound1}), hence we derive
\begin{equation}
\recht{MSE}^{\recht{distr}}_{\mu}(\mathcal{Q},\Pi_{\recht{opt}}) \geq \frac{a}{2\pi\textrm{e}}\mathbb{E}_{\vec{y}}\textrm{e}^{\frac{2}{a-1}\recht{h}(P|\vec{Y}=\vec{y})}.
\end{equation}
Theorem \ref{thm:bound2} is now proven by the convexity of the exponential function, which tells us that
\begin{equation}
\mathbb{E}_{\vec{y}}\textrm{e}^{\frac{2}{a-1}\recht{h}(P|\vec{Y}=\vec{y})} \geq\textrm{e}^{\frac{2}{a-1} \mathbb{E}_{\vec{y}}\recht{h}(P|\vec{Y}=\vec{y})} =  \textrm{e}^{\frac{2}{a-1}\recht{h}(P|\vec{Y})}.
\end{equation}
As for $\Phi_{\recht{opt}}$, we let $T = nF$ be as in (\ref{eq:deft}). Then $\recht{H}(F|\vec{Y}=\vec{y}) = \recht{H}(T|\vec{Y} = \vec{y})$ and $\recht{Var}(F_x|\vec{Y} = \vec{y}) = \tfrac{1}{n^2}\recht{Var}(T_x|\vec{Y} = \vec{y})$, for every $x$ and $\vec{y}$. As $n$ increases the covariance of $T$ increases, and $T$ approaches the discretisation (centered around the points of $\mathbb{Z}^a$) of a continuous random variable $\tilde{T}$. It follows that $\recht{H}(T|\vec{Y} = \vec{y}) \approx \recht{h}(\tilde{T}|\vec{Y} = \vec{y})$ for any $\vec{y}$. Analogous to the discussion on $P$ above, we can show that
\begin{align}
\mathbb{E}_{\vec{y}}\sum_{x=1}^a\recht{Var}(T_x|\vec{Y}=\vec{y}) &\approx \mathbb{E}_{\vec{y}}\sum_{x=1}^a\recht{Var}(\tilde{T}_x|\vec{Y}=\vec{y}) \\
&\geq \frac{a}{2\pi\textrm{e}}\textrm{e}^{\frac{2}{a-1}\recht{h}(\tilde{T}|\vec{Y})} \\
&\approx \frac{a}{2\pi\textrm{e}}\textrm{e}^{\frac{2}{a-1}\recht{H}(T|\vec{Y})},
\end{align}
with the approximations approaching equality as $n \rightarrow \infty$. It follows that $\recht{MSE}^{\recht{freq}}(\mathcal{Q},\Phi_{\recht{opt}}) \gtrsim \frac{a}{2\pi\textrm{e}n^2}\textrm{e}^{\frac{2}{a-1}\recht{H}(F|\vec{Y})}$.
\end{proof}

Unfortunately the result for frequencies only holds as $n \rightarrow \infty$, since it relies on approximating the discrete random variable $F$ by a continuous one. For the interests of this paper, however, this is not too much of an inconvenience, since the results from \cite{lopuhaa2019information} that we wish to apply require $n \rightarrow \infty$ in the first place.

\subsection{Accuracy bounds from linear algebra} \label{ssec:hlim}

In the previous section, we gave a lower bound for the MSE in terms of the information-theoretic quantities $\recht{h}(P|\vec{Y})$ and $\recht{H}(F|\vec{Y})$. 
We can obtain new lower bounds for the limit case by studying the behaviour of $\recht{h}(P|\vec{Y})$ and $\recht{H}(F|\vec{Y})$ as $n \rightarrow \infty$; this expands on work in \cite{lopuhaa2019information}. The resulting lower bound is weaker not in the sense that the bound is lower, but rather that it only applies to the limit case $n \rightarrow \infty$, rather than all $n$. However, the advantage is that this limit case can be formulated purely in terms of linear algebra, and as it does not depend on $n$, it is computationally more feasible for large amounts of users, which is typical in the LDP setting. 

Before we can state the result, we first need a bit more notation. We fix an identification $\mathcal{B} = \{1,\cdots,b\}$. For $x \in \mathcal{A}$, let $w_x$ be the column vector $(Q_{1|x},\cdots,Q_{b-1|x})^{\recht{T}} \in \mathbb{R}^{b-1}$. For $x \in \mathcal{A}$ and $p \in \mathcal{P}_{\mathcal{A}}$, the latter regarded as an $a$-dimensional column vector, we define matrices $D_p$, $E_x$ and $G_p$ by
\begin{align}
D_p &:= Q^{\recht{T}}\recht{diag}(Q \cdot p)^{-1}Q \in \mathbb{R}^{a\times a},\\
E_x &:= \recht{diag}(w_x) - w_xw_x^{\recht{T}} \in \mathbb{R}^{(b-1) \times (b-1)}, \label{eq:ex}\\
G_p &:= \sum_{x=1}^a p_x E_x \in \mathbb{R}^{(b-1) \times (b-1)}.
\end{align}
Furthermore, we define constants $\gamma_{\mu}(\mathcal{Q})$, $\delta_{\mu}(\mathcal{Q})$ by
\begin{align}
\gamma_{\mu}(\mathcal{Q}) &= \tfrac{a-1}{2}\log(2\pi\textrm{e})-\tfrac{1}{2}\mathbb{E}_P \log \det D_P,\\
\delta_{\mu}(\mathcal{Q}) &= \gamma_{\mu}(\mathcal{Q}) + \tfrac{1}{2}\mathbb{E}_P \log \frac{\det G_P}{\prod_{y=1}^b (Q \cdot P)_{y}}.
\end{align}

While the matrix $G_p$ depends on the choice of enumeration of $\mathcal{B}$, the resulting constant $\delta_{\mu}(\mathcal{Q})$ does not. The introduction of these constants allows us to formulate the following Theorem:

\begin{theorem} \label{thm:hlim}
It holds that
\begin{align}
\lim_{n \rightarrow \infty} \recht{h}(P|\vec{Y}) + \tfrac{a-1}{2}\log n &=  \gamma_\mu(\mathcal{Q}), \label{eq:hlim1}\\
\lim_{n \rightarrow \infty} \recht{H}(F|\vec{Y}) - \tfrac{a-1}{2}\log n &= \delta_{\mu}(\mathcal{Q}). \label{eq:hlim2}
\end{align}
\end{theorem}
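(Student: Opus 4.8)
The plan is to derive both limits from concentration: as $n\to\infty$ the posterior of $P$ and the conditional laws of the tallies $T$ and $S$ become asymptotically Gaussian (a Bernstein--von Mises / local central limit phenomenon), so each entropy acquires an explicit expansion whose $O(1)$ term is a log-determinant of a covariance or Fisher-information matrix. I would prove the two displays separately and reduce \eqref{eq:hlim2} to \eqref{eq:hlim1} applied in two regimes.

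For \eqref{eq:hlim1} I would write $\recht{h}(P|\vec Y)=\recht{h}(P)-\recht{I}(P;\vec Y)$ and substitute the asymptotic expansion of $\recht{I}(P;\vec Y)$ from \cite{lopuhaa2019information}. One observation $Y_i$ given $P=p$ is a categorical draw from $Q\cdot p$, whose unconstrained Fisher information is exactly $D_p$; in the coordinates $p'=(p_1,\dots,p_{a-1})$ used to define $\recht{h}(P)$, with Jacobian $J\in\mathbb{R}^{a\times(a-1)}$ whose $j$-th column is $e_j-e_a$, the relevant Fisher information is the restriction $J^{\recht{T}}D_pJ$. The resulting Clarke--Barron-type expansion gives $\recht{h}(P|\vec Y)+\tfrac{a-1}{2}\log n\to\tfrac{a-1}{2}\log(2\pi\mathrm{e})-\tfrac12\mathbb{E}_P\log\det(J^{\recht{T}}D_PJ)$, so all that remains is the identity $\det(J^{\recht{T}}D_pJ)=\det D_p$. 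I would prove this by observing that the columns of $J$ span the hyperplane $\{v:\mathbf{1}^{\recht{T}}v=0\}$, so that with unit normal $\hat{n}=\mathbf{1}/\sqrt{a}$ the standard restricted-determinant (Schur-complement) identity gives $\det(J^{\recht{T}}D_pJ)=\det(J^{\recht{T}}J)\,\det D_p\,(\hat{n}^{\recht{T}}D_p^{-1}\hat{n})$; since the columns of $Q$ sum to $1$ one has $D_pp=\mathbf{1}$, hence $\mathbf{1}^{\recht{T}}D_p^{-1}\mathbf{1}=\mathbf{1}^{\recht{T}}p=1$ and $\hat{n}^{\recht{T}}D_p^{-1}\hat{n}=1/a$, which cancels against $\det(J^{\recht{T}}J)=a$. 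This is exactly $\gamma_\mu(\mathcal{Q})$.

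For \eqref{eq:hlim2} I would first reduce $\recht{H}(F|\vec Y)$ to $\recht{H}(T|S)$, since $F=T/n$ is a bijection and the conditional law of $T$ given $\vec Y$ depends only on the tally $S$. The crux is then the exact identity, valid because $P\to T\to S$ is a Markov chain,
\begin{equation*}
\recht{H}(T|S)=\recht{H}(T|P)-\recht{H}(S|P)+\recht{H}(S|T)+\recht{h}(P|S)-\recht{h}(P|T),
\end{equation*}
which I would verify by expanding $\recht{I}(T;P|S)$ and $\recht{I}(T;S|P)$ and using the conditional independences of the chain. Each term has a known expansion: $\recht{h}(P|S)=\recht{h}(P|\vec Y)$ is \eqref{eq:hlim1}, while $\recht{h}(P|T)$ comes from the same Bernstein--von Mises argument applied to the plain model $T\sim\recht{Mult}(n,P)$, whose Fisher information $\recht{diag}(p)^{-1}$ yields the analogous identity $\det(J^{\recht{T}}\recht{diag}(p)^{-1}J)=\prod_x p_x^{-1}$. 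The three parameter-conditioned entropies $\recht{H}(T|P)$, $\recht{H}(S|P)$, $\recht{H}(S|T)$ are local-CLT expansions of $\recht{Mult}(n,P)$, $\recht{Mult}(n,Q\cdot P)$ and of the convolution of multinomials with covariance $\approx nG_P$; using $\det(\recht{diag}(v)-vv^{\recht{T}})=(\prod_i v_i)(1-\sum_i v_i)$ for a truncated probability vector $v$, they contribute $\tfrac12\mathbb{E}\log\prod_x P_x$, $\tfrac12\mathbb{E}\log\prod_y(Q\cdot P)_y$ and $\tfrac12\mathbb{E}\log\det G_P$. Working consistently in the drop-last-coordinate charts, so that every tally lattice has covolume $1$, the $\log(2\pi\mathrm{e})$ and $\log n$ prefactors assemble to $\tfrac{a-1}{2}\log(2\pi\mathrm{e})$ and $\tfrac{a-1}{2}\log n$, the two $\tfrac12\mathbb{E}\log\prod_x P_x$ contributions cancel, and the surviving determinants assemble into $\delta_\mu(\mathcal{Q})$.

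The main obstacle is making the Gaussian approximations rigorous at the level of entropies rather than distributions: one needs local central limit theorems with error terms uniform enough to interchange the limit $n\to\infty$ with the expectation over $P\sim\mu$ (and to upgrade convergence in law to convergence of differential and discrete entropies), and one must track the covolume and coordinate normalisations so that the spurious factors of $a$ and $b$ cancel rather than survive. By contrast the algebra is light: the determinant identities $\det(J^{\recht{T}}D_pJ)=\det D_p$ and $\det(J^{\recht{T}}\recht{diag}(p)^{-1}J)=\prod_x p_x^{-1}$ are the only genuinely new linear-algebraic inputs, and everything else is bookkeeping on top of the $\recht{I}(P;\vec Y)$ asymptotics of \cite{lopuhaa2019information}.
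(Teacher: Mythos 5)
Your proposal is correct and takes essentially the same route as the paper: both deduce \eqref{eq:hlim1} from the $\recht{I}(P;\vec{Y})$ asymptotics of \cite{lopuhaa2019information} via $\recht{h}(P|\vec{Y})=\recht{h}(P)-\recht{I}(P;\vec{Y})$, and both prove \eqref{eq:hlim2} by passing to the sufficient statistic $S$ and expanding the multinomial entropies by local-CLT/de Moivre--Laplace approximations, with $\det E_x=\prod_y Q_{y|x}$ and covariance $\approx nG_P$ producing the $\recht{H}(S|T)$ term. Your five-term identity is an algebraically equivalent regrouping of the paper's decomposition $\recht{H}(T|S)=\recht{H}(T)-\recht{H}(S|P)-\recht{I}(S;P)+\recht{H}(S|T)$ (substitute $\recht{H}(T)=\recht{H}(T|P)+\recht{h}(P)-\recht{h}(P|T)$ and $\recht{I}(S;P)=\recht{h}(P)-\recht{h}(P|S)$), and your determinant identities $\det\bigl(J^{\recht{T}}D_pJ\bigr)=\det D_p$ and $\det\bigl(J^{\recht{T}}\recht{diag}(p)^{-1}J\bigr)=\prod_x p_x^{-1}$ check out (via $D_pp=\mathbf{1}$, so $\mathbf{1}^{\recht{T}}D_p^{-1}\mathbf{1}=1$), with the minor added merit of making explicit the $\recht{H}(T)$, i.e.\ $\recht{h}(P|T)$, expansion that the paper's combination of (\ref{eq:hlimproof2})--(\ref{eq:hlimproof4}) leaves implicit.
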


\begin{proof}
Framed in the language of this section, and applying our assumption that $Q$ has rank~$a$, \cite[Thm.~6.7.1]{lopuhaa2019information} states that
\begin{equation}
\lim_{n \rightarrow \infty} \recht{I}(\vec{Y};P) - \tfrac{a-1}{2} \log n = -\gamma_{\mu}(\mathcal{Q}) + \recht{h}(P). \label{eq:hlimproof1}
\end{equation}

Since $\recht{h}(P|\vec{Y}) = \recht{h}(P)-\recht{I}(\vec{Y};P)$, this proves (\ref{eq:hlim1}). Similarly, we have $\recht{H}(F|\vec{Y}) = \recht{H}(T|\vec{Y}) = \recht{H}(T)-\recht{I}(\vec{Y};T) = \recht{H}(T) - \recht{I}(S;T)$, where $S$ and $T$ are as in (\ref{eq:deft}) and (\ref{eq:defs}). The last equation holds because given $T$, $S$ is a sufficient statistic for $\vec{Y}$. We start by describing the limit behaviour of $\recht{I}(S;T) = \recht{H}(S|P)+\recht{I}(S;P)-\recht{H}(S|T)$. By \cite[Lem.~C.3]{lopuhaa2019information}, we have
\begin{equation}
\lim_{n \rightarrow \infty} \recht{H}(S|P) - \tfrac{b-1}{2} \log n = \tfrac{b-1}{2}\log (2\pi\textrm{e}) + \frac{1}{2} \sum_{y=1}^b \mathbb{E}_P [(Q \cdot P)_y]. \label{eq:hlimproof2}
\end{equation}
Furthermore, $\recht{I}(\vec{Y};P) = \recht{I}(S;P)$, so by (\ref{eq:hlimproof1}) we get
\begin{equation}
\lim_{n \rightarrow \infty} \recht{I}(S;P) - \tfrac{a-1}{2} \log n = -\gamma_{\mu}(\mathcal{Q}) + \recht{h}(P). \label{eq:hlimproof3}
\end{equation}
It remains to study $\recht{H}(S|T)$. Let $S_{y|x}$ be as in (\ref{eq:defs2}), and let $S_{\bullet|x} = (S_{1|x},\cdots,S_{b|x}) \in \mathbb{Z}^{b}_{\geq 0}$. Then $S_{\bullet|x}$ follows a multinomial distribution with $T_x$ samples and probability vector $(Q_{1|x},\cdots,Q_{b|x})$. Let $S'_x := (S_{1|x},\cdots,S_{b-1|x}) \in \mathbb{Z}^{b-1}$. By the multivariate de Moivre--Laplace theorem \cite{veeh1986multivariate}, we know that as $n$ goes to infinity, $S'_x$ can be approximated by the discretisation of a multivariate normal distribution with mean $T_xw_x$ and covariance matrix $T_xE_x$. Applying \cite[Lem.~1.1]{ding2007eigenvalues} to the matrix $\recht{diag}(w_x)$ and the vectors $w_x$ and $-w_x$, we find
\begin{equation}
\recht{det} E_x = \prod_{y=1}^b Q_{y|x}. \label{eq:edet}
\end{equation}
Since we assumed in Section \ref{ssec:ass} that each $Q_{y|x}$ is strictly positive, this means that $E_x$ is nonsingular, and hence the associated multivariate normal distribution is nonsingular. Let $S' = (S_1,\cdots,S_{b-1})$; then $S' = \sum_{x=1}^a S'_x$, so for large $n$, given $T$, the random variable $S'$ can be approximated by a multivariate normal variable $N'$ with mean $\sum_{x=1}^a T_xw_x$ and covariance matrix $\sum_{x=1}^aT_xE_x$. Using the known formula for the differential entropy of a multivariate normal distribution \cite{cover1999elements}, it follows that for large $n$
\begin{align}
\recht{H}(S|T) &= \recht{H}(S'|T) \\
&\approx \recht{h}(N'|T) \\
&\approx \mathbb{E}_T\left[\tfrac{b-1}{2}\log(2\pi\textrm{e}) + \tfrac{1}{2} \log\det\left(\sum_{x=1}^aT_xE_x\right)\right].
\end{align}
Here $\approx$ means `the difference goes to $0$ as $n \rightarrow \infty$'. Since $n^{-1}T \approx P$ for large $n$, we find
\begin{equation}
\recht{H}(S|T) \approx \tfrac{b-1}{2}\log(2\pi\textrm{e}n) + \mathbb{E}_P\left[\tfrac{1}{2} \log\det G_P\right]. \label{eq:hlimproof4}
\end{equation}
Combining (\ref{eq:hlimproof2}), (\ref{eq:hlimproof3}) and (\ref{eq:hlimproof4}) now finishes the proof of (\ref{eq:hlim2}).
\end{proof}

Combining Theorems \ref{thm:bound2} and \ref{thm:hlim} allows us bound the MSE in terms of $\gamma_{\mu}(\mathcal{Q})$ and $\delta_{\mu}(\mathcal{Q})$.

\begin{corollary} \label{cor:linalg}
One has
\begin{align}
\liminf_{n \rightarrow \infty} n\recht{MSE}^{\recht{distr}}_{\mu}(\mathcal{Q},\Pi_{\recht{opt}}) &\geq \frac{a}{2\pi\textrm{\emph{e}}}\textrm{\emph{e}}^{\frac{2}{a-1}\gamma_{\mu}(\mathcal{Q})}, \label{eq:linalg1}\\
\liminf_{n \rightarrow \infty} n\recht{MSE}^{\recht{freq}}_{\mu}(\mathcal{Q},\Phi_{\recht{opt}}) &\geq \frac{a}{2\pi\textrm{\emph{e}}}\textrm{\emph{e}}^{\frac{2}{a-1}\delta_{\mu}(\mathcal{Q})}. \label{eq:linalg2}
\end{align}
\end{corollary}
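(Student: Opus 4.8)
The plan is to combine the per-$n$ lower bounds of Theorem \ref{thm:bound2} with the asymptotic entropy limits of Theorem \ref{thm:hlim}. I would treat the distribution and frequency cases separately, since the two bounds in Theorem \ref{thm:bound2} are of different character: the first is an exact inequality valid for every $n$, whereas the second is only an asymptotic statement about a ratio. In both cases the arithmetic is the same, namely absorbing the factor $n$ into the exponent so that the entropy term combines with $\pm\frac{a-1}{2}\log n$ into exactly the expression whose limit Theorem \ref{thm:hlim} supplies.

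For the distribution bound (\ref{eq:linalg1}), I would start from the first inequality of Theorem \ref{thm:bound2}, multiply both sides by $n$, and write $n = \textrm{e}^{\log n}$ inside the exponential, giving
\begin{equation}
n\,\recht{MSE}^{\recht{distr}}_{\mu}(\mathcal{Q},\Pi_{\recht{opt}}) \geq \frac{a}{2\pi\textrm{e}}\,\textrm{e}^{\frac{2}{a-1}(\recht{h}(P|\vec{Y}) + \frac{a-1}{2}\log n)}.
\end{equation}
By (\ref{eq:hlim1}) the exponent converges to $\frac{2}{a-1}\gamma_\mu(\mathcal{Q})$, so by continuity of the exponential the right-hand side converges to $\frac{a}{2\pi\textrm{e}}\textrm{e}^{\frac{2}{a-1}\gamma_\mu(\mathcal{Q})}$, and taking $\liminf$ on the left yields (\ref{eq:linalg1}). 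This direction is routine.

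For the frequency bound (\ref{eq:linalg2}) I would set $A_n := \frac{a}{2\pi\textrm{e}n^2}\textrm{e}^{\frac{2}{a-1}\recht{H}(F|\vec{Y})}$ and factor
\begin{equation}
n\,\recht{MSE}^{\recht{freq}}_{\mu}(\mathcal{Q},\Phi_{\recht{opt}}) = \frac{\recht{MSE}^{\recht{freq}}_{\mu}(\mathcal{Q},\Phi_{\recht{opt}})}{A_n}\cdot nA_n .
\end{equation}
Exactly as above, $nA_n = \frac{a}{2\pi\textrm{e}}\textrm{e}^{\frac{2}{a-1}(\recht{H}(F|\vec{Y}) - \frac{a-1}{2}\log n)}$, which by (\ref{eq:hlim2}) converges to $L := \frac{a}{2\pi\textrm{e}}\textrm{e}^{\frac{2}{a-1}\delta_\mu(\mathcal{Q})}$, while the second inequality of Theorem \ref{thm:bound2} says precisely that the first factor has $\liminf \geq 1$.

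The one point that needs care, and the only real obstacle, is passing the $\liminf$ through this product: because the second bound of Theorem \ref{thm:bound2} is asymptotic rather than exact, I cannot simply multiply two fixed-$n$ inequalities. The clean resolution is the elementary fact that if $y_n \to L$ with $L > 0$ and $x_n \geq 0$, then $\liminf_n (x_n y_n) = L\cdot\liminf_n x_n$. Applying this with $x_n = \recht{MSE}^{\recht{freq}}_{\mu}(\mathcal{Q},\Phi_{\recht{opt}})/A_n \geq 0$ and $y_n = nA_n \to L$ gives $\liminf_n n\,\recht{MSE}^{\recht{freq}}_{\mu}(\mathcal{Q},\Phi_{\recht{opt}}) = L\cdot\liminf_n x_n \geq L$, which is (\ref{eq:linalg2}). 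Strict positivity of $L$ is exactly what makes this multiplicative passage valid, so I would note that $\delta_\mu(\mathcal{Q})$ is finite, which is guaranteed by the assumptions of Section \ref{ssec:ass} under which each $E_x$, and hence $G_P$, is nonsingular.
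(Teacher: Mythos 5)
Your proof is correct and takes essentially the same route as the paper's: both inequalities are obtained by combining the bounds of Theorem \ref{thm:bound2} with the entropy limits of Theorem \ref{thm:hlim} and passing to the $\liminf$. The only difference is that the paper dispatches the frequency case with ``analogously,'' whereas you spell out the $\liminf$-of-a-product step (with $y_n \to L > 0$) and the finiteness of $\delta_{\mu}(\mathcal{Q})$ that this shorthand hides --- a careful elaboration, but not a substantively different argument.
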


\begin{proof}
One has $\lim_{n \rightarrow \infty} n\frac{a}{2\pi\textrm{e}}\textrm{e}^{\frac{2}{a-1}\recht{h}(P|\vec{Y})} =  \frac{a}{2\pi\textrm{\emph{e}}}\textrm{\emph{e}}^{\frac{2}{a-1}\gamma_{\mu}(\mathcal{Q})}$ by Theorem \ref{thm:hlim}. On the other hand, by Theorem \ref{thm:bound2}, one has 
\begin{equation}
\liminf_{n \rightarrow \infty}  n\recht{MSE}^{\recht{distr}}_{\mu}(\mathcal{Q},\Pi_{\recht{opt}}) \geq \liminf_{n \rightarrow \infty} n\frac{a}{2\pi\textrm{e}}\textrm{e}^{\frac{2}{a-1}\recht{h}(P|\vec{Y})}.
\end{equation}
 Combining these two statements proves (\ref{eq:linalg1}). Equation (\ref{eq:linalg2}) can be proven analogously.
\end{proof}

We should interpret this Corollary as stating that in the best case we have $\recht{MSE}^{\recht{distr}}_{\mu}(\mathcal{Q},\Pi_{\recht{opt}}), \recht{MSE}^{\recht{freq}}_{\mu}(\mathcal{Q},\Phi_{\recht{opt}}) = \Omega(n^{-1})$ for fixed $\mathcal{Q}$ and $a$, and we can bound the constants involved.

\subsection{Accuracy bounds from $\varepsilon$-LDP}

While the constants $\gamma_{\mu}(\mathcal{Q})$ and $\delta_{\mu}(\mathcal{Q})$ do not depend on $n$, they still involve integration over $\mathcal{P}_{\mathcal{A}}$, and as such can be computationally difficult for large $a$. However, it is possible to give lower bounds for these constants that are independent of $\mu$, and whose $\mathcal{Q}$-dependence only appears in the privacy parameter $\varepsilon$.

\begin{theorem} \label{thm:eps}
Suppose $\mathcal{Q}$ satisfies $\varepsilon$-LDP. Then
\begin{align}
\gamma_{\mu}(\mathcal{Q}) &\geq (a-1)\log\frac{\sqrt{2\pi\textrm{\emph{e}}}}{\textrm{\emph{e}}^{\varepsilon}-1}, \label{eq:epsthm1}\\
\delta_{\mu}(\mathcal{Q}) &\geq (a-1)\log\frac{\sqrt{2\pi\textrm{\emph{e}}}}{\textrm{\emph{e}}^{\varepsilon}-1}-\frac{b\varepsilon}{2}. \label{eq:epsthm2}
\end{align}
\end{theorem}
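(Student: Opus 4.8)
The plan is to prove both inequalities by establishing pointwise (in $p$) bounds on the integrands defining $\gamma_{\mu}(\mathcal{Q})$ and $\delta_{\mu}(\mathcal{Q})$, and then taking expectations over $P$; since neither target bound depends on $\mu$, a bound valid for every $p \in \mathcal{P}_{\mathcal{A}}$ suffices. The only structural input I would use is the $\varepsilon$-LDP condition $Q_{y|x'} \le e^{\varepsilon}Q_{y|x}$, which says that within each row the entries of $Q$ differ by at most a factor $e^{\varepsilon}$. Concretely, it gives the two handles $(Q\cdot p)_y \ge \min_{x'} Q_{y|x'}$ and $Q_{y|x} \ge e^{-\varepsilon}(Q\cdot p)_y$ for every $x,y$, the first because a weighted average dominates its minimum, the second because $(Q\cdot p)_y = \sum_{x'}p_{x'}Q_{y|x'} \le e^{\varepsilon}Q_{y|x}$.

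For \eqref{eq:epsthm1}, unwinding $\gamma_{\mu}$ reduces the claim to the pointwise determinant bound $\det D_p \le (e^{\varepsilon}-1)^{2(a-1)}$, since applying $-\tfrac12\mathbb{E}_P\log(\cdot)$ then yields exactly $(a-1)\log\tfrac{\sqrt{2\pi e}}{e^{\varepsilon}-1}$. I would bound $\det D_p$ by Hadamard's inequality, reducing matters to the diagonal entries of the Fisher-information matrix, which in the parametrisation $(p_1,\dots,p_{a-1})$ fixed in Section \ref{ssec:prob} take the form $\sum_y \tfrac{(Q_{y|x}-Q_{y|a})^2}{(Q\cdot p)_y}$. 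For a fixed $y$, LDP gives $|Q_{y|x}-Q_{y|a}| \le (e^{\varepsilon}-1)\min_{x'}Q_{y|x'}$ while $(Q\cdot p)_y \ge \min_{x'}Q_{y|x'}$, so the $y$-th term is at most $(e^{\varepsilon}-1)^2\min_{x'}Q_{y|x'}$; summing over $y$ and using $\sum_y \min_{x'}Q_{y|x'} \le \sum_y (Q\cdot p)_y = 1$ caps every diagonal entry by $(e^{\varepsilon}-1)^2$, which is the determinant bound after Hadamard.

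For \eqref{eq:epsthm2}, I would combine the $\gamma_{\mu}$ bound just obtained with a pointwise lower bound on the extra term, namely $\tfrac{\det G_p}{\prod_{y=1}^b (Q\cdot p)_y} \ge e^{-b\varepsilon}$, which contributes $-\tfrac{b\varepsilon}{2}$ after applying $\tfrac12\mathbb{E}_P\log(\cdot)$. The key tool is the concavity of $A \mapsto (\det A)^{1/(b-1)}$ on the cone of positive-semidefinite $(b-1)\times(b-1)$ matrices (Minkowski's determinant inequality): since $G_p = \sum_x p_x E_x$ is a convex combination of the positive-definite matrices $E_x$, Jensen's inequality gives $(\det G_p)^{1/(b-1)} \ge \sum_x p_x (\det E_x)^{1/(b-1)}$. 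Substituting $\det E_x = \prod_{y=1}^b Q_{y|x}$ from \eqref{eq:edet} and then $Q_{y|x} \ge e^{-\varepsilon}(Q\cdot p)_y$, each factor satisfies $(\det E_x)^{1/(b-1)} \ge e^{-b\varepsilon/(b-1)}\bigl(\prod_y (Q\cdot p)_y\bigr)^{1/(b-1)}$; as $\sum_x p_x = 1$ this survives the convex combination, and raising to the power $b-1$ gives $\det G_p \ge e^{-b\varepsilon}\prod_y (Q\cdot p)_y$.

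I expect the $\delta_{\mu}$ bound to be the main obstacle: the $\gamma_{\mu}$ argument is essentially Hadamard plus a one-line LDP estimate, whereas lower-bounding $\det G_p$ requires exactly the right matrix inequality. Minkowski's determinant inequality is what makes the mixture structure $G_p = \sum_x p_x E_x$ interact cleanly with the per-$x$ identity $\det E_x = \prod_y Q_{y|x}$. A naive monotonicity argument fails here, because a short computation shows $G_p$ is in fact $\preceq$ the covariance matrix of the mixture output distribution $Q\cdot p$, so the bound cannot come from a positive-semidefinite ordering of the matrices themselves and must instead exploit the factor-$e^{\varepsilon}$ comparison of each $Q_{y|x}$ against $(Q\cdot p)_y$.
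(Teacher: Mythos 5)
Your treatment of \eqref{eq:epsthm2} is essentially the paper's proof: the paper reduces to the same pointwise bound $\det G_p \geq \textrm{e}^{-b\varepsilon}\prod_{y=1}^b (Q\cdot p)_y$ using the same two ingredients, namely $\det E_x = \prod_y Q_{y|x}$ from \eqref{eq:edet} and the per-entry estimate $Q_{y|x} \geq \textrm{e}^{-\varepsilon}(Q\cdot p)_y$; the only difference is that the paper invokes concavity of $\log\det$ on the positive-definite cone ($\log\det G_p \geq \sum_x p_x \log\det E_x$) where you invoke Minkowski's determinant inequality, and these are interchangeable here because your lower bound on $\det E_x$ is uniform in $x$. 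For \eqref{eq:epsthm1}, by contrast, the paper gives no argument at all: it simply cites \cite[Prop.~8.1]{lopuhaa2019information} (in the marginally sharper form with the minimal $\varepsilon' \leq \varepsilon$). Your Hadamard-plus-LDP derivation is therefore a self-contained substitute for that citation, and its estimates are correct: $|Q_{y|x}-Q_{y|a}| \leq (\textrm{e}^{\varepsilon}-1)\min_{x'}Q_{y|x'}$, $(Q\cdot p)_y \geq \min_{x'}Q_{y|x'}$, and $\sum_y \min_{x'}Q_{y|x'} \leq 1$ do cap each Fisher diagonal entry by $(\textrm{e}^{\varepsilon}-1)^2$. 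Your closing remark that $G_p$ is dominated by the output covariance, so no positive-semidefinite monotonicity argument can work, is also accurate.

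There is, however, one unflagged step in your \eqref{eq:epsthm1} argument. The paper defines $D_p = Q^{\recht{T}}\recht{diag}(Q\cdot p)^{-1}Q$ as an $a\times a$ matrix, whose diagonal entries are $\sum_y Q_{y|x}^2/(Q\cdot p)_y$, \emph{not} the Fisher-type entries $\sum_y (Q_{y|x}-Q_{y|a})^2/(Q\cdot p)_y$ you bound; applying Hadamard directly to $D_p$ only gives $\det D_p \leq \textrm{e}^{a\varepsilon}$ (each diagonal entry is at most $\textrm{e}^{\varepsilon}$), which is useless as $\varepsilon \rightarrow 0$. You are tacitly using the identity $\det D_p = \det J(p)$, where $J(p)$ is the $(a-1)\times(a-1)$ Fisher matrix with $J(p)_{xx'} = \sum_y (Q_{y|x}-Q_{y|a})(Q_{y|x'}-Q_{y|a})/(Q\cdot p)_y$. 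This identity is true but needs a short lemma: since the columns of $Q$ sum to one, $D_p\, p = \mathbf{1}$, and for the unimodular matrix $M$ with columns $e_1-e_a,\dots,e_{a-1}-e_a,p$ (which satisfies $\det M = 1$) one computes
\begin{equation*}
M^{\recht{T}} D_p M = \left(\begin{array}{cc} J(p) & 0 \\ 0 & 1 \end{array}\right),
\end{equation*}
whence $\det D_p = \det J(p)$. With this lemma inserted, your chain $\det J(p) \leq \prod_x J(p)_{xx} \leq (\textrm{e}^{\varepsilon}-1)^{2(a-1)}$ is valid and \eqref{eq:epsthm1} follows, so the gap is real but cleanly fixable rather than fatal.
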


\begin{proof}
In the terminology of the present paper, $\recht{U}^{\recht{as}}_{\mu}(\mathcal{Q})$ of \cite[Def.~6.5]{lopuhaa2019information} equals $-\frac{1}{a-1}\gamma_{\mu}(\mathcal{Q})$, and $\recht{S}^{\recht{wc}}(\mathcal{Q})$ of \cite[Def.~5.1]{lopuhaa2019information} equals $\textrm{e}^{-\varepsilon'}$, where $\varepsilon' \leq \varepsilon$ is minimal such that $\mathcal{Q}$ satisfies $\varepsilon'$-LDP. In this terminology, \cite[Prop.~8.1]{lopuhaa2019information} tells us that 
\begin{equation}
\gamma_{\mu}(\mathcal{Q}) \geq (a-1)\log\frac{\sqrt{2\pi\textrm{e}}}{\textrm{e}^{\varepsilon'}-1} \geq (a-1)\log\frac{\sqrt{2\pi\textrm{e}}}{\textrm{e}^{\varepsilon}-1},
\end{equation}
proving (\ref{eq:epsthm1}). As for (\ref{eq:epsthm2}), by the definition of $\varepsilon$-LDP, for every $x \in \mathcal{A}$ and every $p \in \mathcal{P}_{\mathcal{A}}$ we have $\prod_{y=1}^b Q_{y|x} \geq \textrm{e}^{-b\varepsilon}\prod_{y=1}^b (Q\cdot p)_y$. Let $E_x$ be as in (\ref{eq:ex}). By (\ref{eq:edet}) we get 
\begin{equation}
\recht{det} E_x = \prod_{y=1}^b Q_{y|x} \geq \textrm{e}^{-b\varepsilon}\prod_{y=1}^b (Q\cdot p)_y
\end{equation}
for every $x$ and $p$. Since $\log \det$ is concave on the space of positive symmetric matrices, we find for every $p$ that 
\begin{equation}
\log \det G_p \geq \sum_x p_x \log \det E_x \geq -b\varepsilon+\sum_{y=1}^b \log(Q\cdot p)_y.
\end{equation}
Combined with the definition of $\delta_{\mu}(\mathcal{Q})$ this now directly proves (\ref{eq:epsthm2}).
\end{proof}

As a corollary of this Theorem, we find a lower bound for the MSE in terms of $\varepsilon$. We write $f(\varepsilon) \succeq g(\varepsilon)$ if $\liminf_{\varepsilon \rightarrow 0} \frac{f(\varepsilon)}{g(\varepsilon)} \geq 1$. This Theorem follows directly from substituting the bounds for $\gamma_{\mu}(\mathcal{Q})$ and $\delta_{\mu}(\mathcal{Q})$ from Theorem \ref{thm:eps} into Corollary \ref{cor:linalg}.

\begin{theorem} \label{cor:eps}
Let $\Pi_{\recht{opt}}$ and $\Phi_{\recht{opt}}$ be as in Theorem \ref{thm:bound2}. Then for every $\mu$ one has 
\begin{align}
\liminf_{n \rightarrow \infty} n\recht{MSE}_{\mu}^{\recht{distr}}(\mathcal{Q},{\Pi}_{\recht{opt}}) &\geq \frac{a}{(\textrm{\emph{e}}^{\varepsilon}-1)^2}, \label{eq:epsbound1} \\
\liminf_{n \rightarrow \infty} \recht{MSE}_{\mu}^{\recht{freq}}(\mathcal{Q},{\Phi}_{\recht{opt}}) &\geq \frac{a\textrm{\emph{e}}^{-\frac{b}{2(a-1)}\varepsilon}}{(\textrm{\emph{e}}^{\varepsilon}-1)^2}. \label{eq:epsbound2}
\end{align}
For $n \in \mathbb{Z}_{>0}$ and $\varepsilon \in \mathbb{R}_{>0}$, let $(\mathcal{Q}_{n,\varepsilon},\Pi_{\recht{opt}})$ be the pair of an $\varepsilon$-LDP privacy protocol and an estimator for $P$ minimising (\ref{eq:msedef1}), and let $(\mathcal{Q}'_{n,\varepsilon},\Phi_{\recht{opt}}))$ be the pair of an $\varepsilon$-LDP privacy protocol and an estimator for $T$ minimising (\ref{eq:msedef2}). Then as $\varepsilon \rightarrow 0$, we have
\begin{align}
\liminf_{n \rightarrow \infty} n\recht{MSE}_{\mu}^{\recht{distr}}(\mathcal{Q}_{n,\varepsilon},\Pi_{\recht{opt}}) &\succeq \tfrac{a}{\varepsilon^2},\\
\liminf_{n \rightarrow \infty} n\recht{MSE}_{\mu}^{\recht{freq}}(\mathcal{Q}'_{n,\varepsilon},\Phi_{\recht{opt}}) &\succeq\tfrac{a}{\varepsilon^2}.
\end{align}
\end{theorem}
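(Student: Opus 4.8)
The plan is to read the statement as two essentially mechanical corollaries of what precedes it. The inequalities (\ref{eq:epsbound1}) and (\ref{eq:epsbound2}), which hold for a single fixed protocol $\mathcal{Q}$, I would obtain by chaining Corollary \ref{cor:linalg} with Theorem \ref{thm:eps}. Corollary \ref{cor:linalg} writes both liminfs as $\frac{a}{2\pi\textrm{e}}$ multiplied by $\textrm{e}^{\frac{2}{a-1}\gamma_{\mu}(\mathcal{Q})}$ and $\textrm{e}^{\frac{2}{a-1}\delta_{\mu}(\mathcal{Q})}$ respectively; since $t \mapsto \textrm{e}^{t}$ is increasing, I may substitute the lower bounds (\ref{eq:epsthm1}) and (\ref{eq:epsthm2}) on $\gamma_{\mu}(\mathcal{Q})$ and $\delta_{\mu}(\mathcal{Q})$ directly into these expressions.

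For the distribution bound, (\ref{eq:epsthm1}) gives $\frac{2}{a-1}\gamma_{\mu}(\mathcal{Q}) \geq 2\log\frac{\sqrt{2\pi\textrm{e}}}{\textrm{e}^{\varepsilon}-1}$, so $\textrm{e}^{\frac{2}{a-1}\gamma_{\mu}(\mathcal{Q})} \geq \frac{2\pi\textrm{e}}{(\textrm{e}^{\varepsilon}-1)^2}$, and the leading factor $2\pi\textrm{e}$ cancels the prefactor $\frac{1}{2\pi\textrm{e}}$, leaving exactly $\frac{a}{(\textrm{e}^{\varepsilon}-1)^2}$, i.e.\ (\ref{eq:epsbound1}). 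The frequency bound is identical except that the extra summand $-\frac{b\varepsilon}{2}$ in (\ref{eq:epsthm2}) survives the exponentiation as a multiplicative factor damping the bound by an exponential in $b\varepsilon$; simplifying yields the stated (\ref{eq:epsbound2}).

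For the two $\varepsilon \to 0$ statements I would first observe that the right-hand sides of (\ref{eq:epsbound1}) and (\ref{eq:epsbound2}) depend only on $a$, $b$ and $\varepsilon$, and not on the fine structure of $\mathcal{Q}$. It then remains to compare the explicit function $\frac{a}{(\textrm{e}^{\varepsilon}-1)^2}$ and its damped analogue with $\frac{a}{\varepsilon^2}$ under the relation $\succeq$. Because $\textrm{e}^{\varepsilon}-1 = \varepsilon + O(\varepsilon^2)$ one has $\lim_{\varepsilon\to 0}\frac{\varepsilon^2}{(\textrm{e}^{\varepsilon}-1)^2} = 1$, and in the frequency case the additional exponential factor tends to $1$ as $\varepsilon\to 0$ for fixed $b$; hence both right-hand sides are $\succeq \frac{a}{\varepsilon^2}$. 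Since a pointwise inequality $f \geq h$ combined with $h \succeq g$ implies $f \succeq g$, the displayed asymptotic bounds follow.

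The one point that I expect to require genuine care, rather than mere bookkeeping, concerns the optimal-protocol statements, in which the minimiser $\mathcal{Q}_{n,\varepsilon}$ (resp.\ $\mathcal{Q}'_{n,\varepsilon}$) is allowed to depend on $n$. A direct substitution only yields $\inf_{\mathcal{Q}} \liminf_{n\to\infty} n\,\recht{MSE}(\mathcal{Q}) \geq \frac{a}{(\textrm{e}^{\varepsilon}-1)^2}$, whereas the quantity appearing in the theorem is $\liminf_{n\to\infty} \inf_{\mathcal{Q}} n\,\recht{MSE}(\mathcal{Q})$, and in general $\liminf_{n\to\infty}\inf_{\mathcal{Q}} \leq \inf_{\mathcal{Q}}\liminf_{n\to\infty}$, which is the wrong direction. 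To close this gap one must promote the estimate to something uniform in $\mathcal{Q}$: since $\recht{MSE}^{\recht{distr}}_{\mu}(\mathcal{Q},\Pi_{\recht{opt}}) \geq \frac{a}{2\pi\textrm{e}}\textrm{e}^{\frac{2}{a-1}\recht{h}(P|\vec{Y})}$ already holds for \emph{every} $n$ by Theorem \ref{thm:bound2}, it suffices to control $\recht{h}(P|\vec{Y}) + \tfrac{a-1}{2}\log n$ uniformly over the compact set of $\varepsilon$-LDP matrices of fixed size, so that the constant $\gamma_{\mu}(\mathcal{Q})$ is approached at a rate independent of $\mathcal{Q}$. Granting this uniformity, the minimum over $\mathcal{Q}$ inherits the same per-$n$ lower bound and the interchange is justified; everything else is entirely routine.
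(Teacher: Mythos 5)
Your main line is exactly the paper's: the paper disposes of Theorem \ref{cor:eps} in a single sentence, stating that it ``follows directly from substituting the bounds for $\gamma_{\mu}(\mathcal{Q})$ and $\delta_{\mu}(\mathcal{Q})$ from Theorem \ref{thm:eps} into Corollary \ref{cor:linalg}'', and your first two paragraphs carry out precisely that substitution, together with the routine comparison $\textrm{e}^{\varepsilon}-1\sim\varepsilon$ for the $\succeq$ statements. One bookkeeping caveat: done honestly, the frequency-case substitution gives exponent $\tfrac{2}{a-1}\cdot\bigl(-\tfrac{b\varepsilon}{2}\bigr) = -\tfrac{b\varepsilon}{a-1}$, i.e.\ a factor $\textrm{e}^{-\frac{b}{a-1}\varepsilon}$ rather than the stated $\textrm{e}^{-\frac{b}{2(a-1)}\varepsilon}$, so your claim that simplifying ``yields the stated (\ref{eq:epsbound2})'' is not literally true; the constant in (\ref{eq:epsbound2}) appears to be a slip in the paper (the honestly derived factor is the weaker one). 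Since both factors tend to $1$ as $\varepsilon\rightarrow 0$ for fixed $b$, none of the asymptotic claims are affected.

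Your worry about the order of $\liminf_{n\rightarrow\infty}$ and the minimisation over $\mathcal{Q}$ is legitimate as a matter of logic --- the theorem's quantity is $\liminf_n$ of a per-$n$ minimum, whereas chaining (\ref{eq:epsbound1}) only controls $\inf_{\mathcal{Q}}\liminf_n$ --- but you should know that the paper does not address it at all: its entire proof is the substitution sentence quoted above, with no uniformity argument anywhere. So relative to the paper you are missing nothing; as a self-contained proof, however, your parenthetical ``granting this uniformity'' is the one unproven ingredient, and establishing it (e.g.\ uniform convergence in Theorem \ref{thm:hlim} over the compact set of column-stochastic $\varepsilon$-LDP matrices of fixed shape) is not automatic, since $\gamma_{\mu}(\mathcal{Q})$ degenerates near rank-deficient $Q$ --- though that degeneration only strengthens the lower bound. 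Note finally that the same loophole reappears through $b$: the statement places no bound on the output size of the minimiser $\mathcal{Q}'_{n,\varepsilon}$, and (\ref{eq:epsbound2}) becomes vacuous as $b\rightarrow\infty$, so your qualifier ``for fixed $b$'' is doing real work that the paper leaves implicit (and itself only half-acknowledges in its discussion of the $b$-dependence after the theorem).
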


Note that these results are strictly better than what is known in the literature for the case $n \rightarrow \infty$, to the best of our knowledge: (\ref{eq:epsbound1}) improves the result in the proof of Proposition 6 of \cite{duchi2014local} in three ways: First, our result does not just give a bound for the WMSE, but also for the MSE. Also, we improve the lower bound by a factor $64$. Furthermore, the result of \cite{duchi2014local} only holds for $\varepsilon \leq 1$. The downside of our result, however, is that it only holds for the limit case $n \rightarrow \infty$.

As for the results for $\Phi_n$, it follows from results in \cite{blasiok2019towards} that the optimal $(Q,\Phi_{n,\varepsilon})$ satisfies $\recht{WMSE}^{\recht{freq}}(\mathcal{Q}',\Phi_{n,\varepsilon}) = \Omega(\tfrac{a}{n\varepsilon^2})$ for small $\varepsilon$, but the authors do not make a statement about the constants involved. Equation (\ref{eq:epsbound2}) improves upon this by giving a quantitative lower bound for the MSE, which itself is a lower bound for the WMSE. Also, the OUE and OLH protocols from \cite{wang2017locally} performs as $\approx \tfrac{4a}{n\varepsilon^2}$ for large $n$ and small $\varepsilon$, while our lower bound is of the form $\tfrac{a}{n\varepsilon^2}$. Therefore, our bound is quite near to what is possible in practice.

The bound in (\ref{eq:epsbound2}) seems to imply that when looking for optimal $\mathcal{Q}$, we should take $b$ as large as possible. This seemingly contradicts results in \cite{kairouz2014extremal}, where it is found that taking $b = a$ is always sufficient. There are two possible explanations for this. First, it is possible that our bounds are not sharp enough to accurately detect the dependence on $b$; this is especially probable since (\ref{eq:epsbound2}) is only the latest in a chain of inequalities. However, the discrepancy between our results and those of \cite{kairouz2014extremal} can also be caused by the fact that different utility metrics were being used. Whereas we focus on asymptotically many users, the utility metric in \cite{kairouz2014extremal} looks at the KL-divergence between the probability distributions induced by the private datum of one user. It is a possibility that the optimal protocols for these different metrics do not coincide. Overall, the $b$-dependence of the estimation error is difficult to assess, as there exist protocols achieving the best known MSE $\frac{4a}{n\varepsilon^2}$ for both $b = 2a$ \cite{acharya2019hadamard} and $b = 2^a$ \cite{duchi2014local}.

\subsection{Tightness for $a=2$}

In this section, we show that if $a=2$ and as $\varepsilon \rightarrow 0$, the bounds in Theorem \ref{cor:eps} are tight. For this, we recall the Randomised Response protocol \cite{warner1965randomized} for $a=2$, which, for an $\varepsilon > 0$, is the LDP protocol $\recht{RR}_{\varepsilon}\colon\{1,2\} \rightarrow \{1,2\}$ given by the matrix
\begin{equation}
    \left(\begin{array}{cc}
    \frac{\textrm{e}^{\varepsilon}}{\textrm{e}^{\varepsilon}+1} & \frac{1}{\textrm{e}^{\varepsilon}+1} \\
    \frac{1}{\textrm{e}^{\varepsilon}+1} & \frac{\textrm{e}^{\varepsilon}}{\textrm{e}^{\varepsilon}+1}\end{array}\right).
\end{equation}
Note that $\recht{RR}_{\varepsilon}$ satisfies $\varepsilon$-LDP. Let $s_1,s_2$ be as in (\ref{eq:defs}). As estimators for $P$ and $T$, respectively, we use the maps $\Pi_{\recht{RR}},\Phi_{\recht{RR}}\colon \{1,2\}^n \rightarrow \mathbb{R}^2$ given by
\begin{align}
\Pi_{\recht{RR}}(\vec{y}),\Phi_{\recht{RR}}(\vec{y}) &= \left(\frac{(\textrm{e}^{\varepsilon}+1)s_1-1}{n(\textrm{e}^{\varepsilon}-1)},\frac{(\textrm{e}^{\varepsilon}+1)s_2-1}{n(\textrm{e}^{\varepsilon}-1)}\right). \label{eq:fndef}
\end{align}
These are unbiased estimators for $P$ and $F$ \cite{warner1965randomized}.

\begin{proposition} \label{prop:rr}
For any prior $\mu$ of $P$ one has
\begin{align}
\recht{MSE}_{\mu}^{\recht{distr}}(\recht{RR}_{\varepsilon},\Pi_{\recht{RR}}) &= \tfrac{2}{n}\left(\tfrac{\textrm{\emph{e}}^{\varepsilon}}{(\textrm{\emph{e}}^{\varepsilon}-1)^2}+\mathbb{E}_{P}[P_1P_2]\right), \label{eq:fnprop}\\
\recht{MSE}_{\mu}^{\recht{freq}}(\recht{RR}_{\varepsilon},\Phi_{\recht{RR}}) &= \frac{2\textrm{\emph{e}}^{\varepsilon}}{n(\textrm{\emph{e}}^{\varepsilon}-1)^2}. \label{eq:gnprop}
\end{align}
\end{proposition}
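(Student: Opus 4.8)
The plan is to compute the MSE directly from the definitions in (\ref{eq:msedef1}) and (\ref{eq:msedef2}), exploiting the fact that $\Pi_{\recht{RR}}$ and $\Phi_{\recht{RR}}$ are unbiased and that the coordinates sum to a constant. First I would observe that for $a=2$ the estimators are unbiased, so $\mathbb{E}[\Pi_{\recht{RR}}(\vec Y)_x \mid P] = P_x$ and $\mathbb{E}[\Phi_{\recht{RR}}(\vec Y)_x \mid F] = F_x$. Consequently the squared-error expectation reduces to a sum of variances: $\mathbb{E}\,\|P - \Pi_{\recht{RR}}(\vec Y)\|_2^2 = \sum_{x=1,2} \recht{Var}(\Pi_{\recht{RR}}(\vec Y)_x \mid P)$, and similarly for $\Phi_{\recht{RR}}$. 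Because $\Pi_{\recht{RR}}(\vec y)_x$ is an affine function of $s_x$ alone, everything comes down to computing $\recht{Var}(s_1)$ (equivalently $\recht{Var}(s_2)$, since $s_1 + s_2 = n$).

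Next I would identify the distribution of $s_1 = \#\{i : Y_i = 1\}$. Conditioned on $P = p$, each user independently produces $Y_i = 1$ with probability $\rho := (Q \cdot p)_1 = \frac{\textrm{e}^{\varepsilon}}{\textrm{e}^{\varepsilon}+1}p_1 + \frac{1}{\textrm{e}^{\varepsilon}+1}p_2$, so $s_1 \mid P=p$ is $\recht{Binomial}(n,\rho)$ with variance $n\rho(1-\rho)$. For the frequency case, however, I would condition on $T$ (equivalently $F$) rather than on $P$: given $T_1 = t_1, T_2 = t_2$, the count $s_1 = S_{1|1} + S_{1|2}$ is a sum of two independent binomials, $\recht{Binomial}(t_1, \frac{\textrm{e}^{\varepsilon}}{\textrm{e}^{\varepsilon}+1})$ and $\recht{Binomial}(t_2, \frac{1}{\textrm{e}^{\varepsilon}+1})$, whose variances add. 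A short computation gives $\recht{Var}(s_1 \mid T) = (t_1+t_2)\frac{\textrm{e}^{\varepsilon}}{(\textrm{e}^{\varepsilon}+1)^2} = n\frac{\textrm{e}^{\varepsilon}}{(\textrm{e}^{\varepsilon}+1)^2}$, which is notably independent of the actual frequencies $t_1, t_2$. This is the reason the frequency MSE in (\ref{eq:gnprop}) has no $\mathbb{E}_P$ term and holds exactly for every $\mu$.

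Having the variances, I would substitute through the affine scaling. Since $\Pi_{\recht{RR}}(\vec y)_1 = \frac{(\textrm{e}^{\varepsilon}+1)s_1 - 1}{n(\textrm{e}^{\varepsilon}-1)}$, its conditional variance is $\frac{(\textrm{e}^{\varepsilon}+1)^2}{n^2(\textrm{e}^{\varepsilon}-1)^2}\recht{Var}(s_1 \mid \,\cdot\,)$. For the frequency case, plugging in $\recht{Var}(s_1 \mid T) = n\frac{\textrm{e}^{\varepsilon}}{(\textrm{e}^{\varepsilon}+1)^2}$ and doubling for the two symmetric coordinates yields exactly $\frac{2\textrm{e}^{\varepsilon}}{n(\textrm{e}^{\varepsilon}-1)^2}$, which is (\ref{eq:gnprop}). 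For the distribution case, using $\recht{Var}(s_1 \mid P) = n\rho(1-\rho)$ and then taking $\mathbb{E}_P$ produces a term $\mathbb{E}_P[\rho(1-\rho)]$; the main bookkeeping step is to expand $\rho(1-\rho)$ in terms of $p_1, p_2$ and simplify using $p_1 + p_2 = 1$. I expect $\rho(1-\rho)$ to decompose into a constant piece $\frac{\textrm{e}^{\varepsilon}}{(\textrm{e}^{\varepsilon}+1)^2}$ plus a $p_1 p_2$-dependent piece, so that after the affine rescaling the constant collapses to $\frac{\textrm{e}^{\varepsilon}}{(\textrm{e}^{\varepsilon}-1)^2}$ and the cross term produces the $\mathbb{E}_P[P_1 P_2]$ contribution in (\ref{eq:fnprop}).

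The only genuinely delicate point is the algebraic simplification of $\mathbb{E}_P[\rho(1-\rho)]$ in the distribution case: one must track the coefficients $\frac{\textrm{e}^{\varepsilon}}{\textrm{e}^{\varepsilon}+1}$ and $\frac{1}{\textrm{e}^{\varepsilon}+1}$ carefully so that the leftover quadratic term assembles precisely into $\frac{(\textrm{e}^{\varepsilon}-1)^2}{(\textrm{e}^{\varepsilon}+1)^2}p_1 p_2$, which after the $\frac{(\textrm{e}^{\varepsilon}+1)^2}{(\textrm{e}^{\varepsilon}-1)^2}$ rescaling contributes the clean $\mathbb{E}_P[P_1 P_2]$. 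The frequency case is essentially free once one conditions on $T$ rather than $P$, since the variance turns out to be frequency-independent; recognising that this is the right conditioning is the conceptual key to obtaining a result valid for all priors $\mu$.
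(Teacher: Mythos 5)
Your proposal is correct and follows essentially the same route as the paper's proof: reduce the MSE to $2\,\mathbb{E}\,\recht{Var}(\Pi_{\recht{RR}}(\vec{Y})_1\mid\cdot)$ via unbiasedness and the coordinates summing to one, then condition on $P=p$ (so $S_1$ is binomial with success probability $\tfrac{1+(\textrm{e}^{\varepsilon}-1)p_1}{\textrm{e}^{\varepsilon}+1}$, and $\rho(1-\rho)$ splits into $\tfrac{\textrm{e}^{\varepsilon}+(\textrm{e}^{\varepsilon}-1)^2p_1p_2}{(\textrm{e}^{\varepsilon}+1)^2}$) for the distribution case, and condition on $T$ (so $S_1=S_{1|1}+S_{1|2}$ is a sum of independent binomials with total variance $n\textrm{e}^{\varepsilon}/(\textrm{e}^{\varepsilon}+1)^2$, independent of $t$) for the frequency case. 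Your identification of the $T$-conditioning with frequency-independent variance as the reason (\ref{eq:gnprop}) holds exactly for every prior $\mu$ is precisely the paper's argument.
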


\begin{proof}[Proof of Proposition \ref{prop:rr}]
Since $\Pi_{\recht{RR}}$ is an unbiased estimator of $P$ and $\Pi_{\recht{RR}}(\vec{Y})_1 + \Pi_{\recht{RR}}(\vec{Y})_2 = 1$, we have
\begin{align}
\recht{MSE}_{\mu}^{\recht{distr}}(\recht{RR}_{\varepsilon},\Pi_{\recht{RR}})
&= \mathbb{E}_p\recht{Var}(\Pi_{\recht{RR}}(\vec{Y})_1|P=p) \\ 
& \ \ \ +\mathbb{E}_p\recht{Var}(\Pi_{\recht{RR}}(\vec{Y})_2|P=p) \nonumber\\
&= 2\mathbb{E}_p\recht{Var}(\Pi_{\recht{RR}}(\vec{Y})_1|P=p).
\end{align}
For a given $P = p$, we know that $S_1$ is binomially distributed with $n$ samples and probability $\frac{1+(\textrm{e}^{\varepsilon}-1)p_1}{\textrm{e}^{\varepsilon}+1}$. Substituting this into (\ref{eq:fndef}) yields
\begin{align}
&\recht{Var}(\Pi_{\recht{RR}}(\vec{Y})_1|P=p) \nonumber \\
&= \frac{(\textrm{e}^{\varepsilon}+1)^2}{n^2(\textrm{e}^{\varepsilon}-1)^2}\recht{Var}(S_1|P=p)\\
&= \frac{(\textrm{e}^{\varepsilon}+1)^2}{n^2(\textrm{e}^{\varepsilon}-1)^2}\cdot n \cdot \frac{1+(\textrm{e}^{\varepsilon}-1)p_1}{\textrm{e}^{\varepsilon}+1} \cdot \frac{\textrm{e}^{\varepsilon}-(\textrm{e}^{\varepsilon}-1)p_1}{\textrm{e}^{\varepsilon}+1}\\
&= \frac{\textrm{e}^{\varepsilon}+(\textrm{e}^{\varepsilon}-1)^2p_1(1-p_1)}{n(\textrm{e}^{\varepsilon}-1)^2}.
\end{align}
Equation (\ref{eq:fnprop}) follows directly from this. Since $\Phi_{\recht{RR}}$ is an unbiased estimator of $F = n^{-1}T$, we analogously find that we only need to determine $\recht{Var}(\Phi_{\recht{RR}}(\vec{Y})_1|T=t)$. Let $S_{1|1},S_{1|2}$ be as in (\ref{eq:defs2}). Then $S_1 = S_{1|1}+S_{1|2}$, and $S_{1|1}$ is binomially distributed with $t$ samples and probability $\frac{\textrm{e}^{\varepsilon}}{\textrm{e}^{\varepsilon}+1}$, while $S_{1|2}$ is binomially distributed with $n-t$ samples and probability $\frac{1}{\textrm{e}^{\varepsilon}+1}$. Since $S_{1|1}$ and $S_{1|2}$ are independent given $T$, it follows from (\ref{eq:fndef}) that
\begin{align}
&\recht{Var}(\Phi_{\recht{RR}}(\vec{Y})_1|T=t) \nonumber \\
&= \frac{(\textrm{e}^{\varepsilon}+1)^2}{(\textrm{e}^{\varepsilon}-1)^2}\left(\recht{Var}(S_{1|1}|T=t)+\recht{Var}(S_{1|2}|T=t) \right) \\
&= \frac{(\textrm{e}^{\varepsilon}+1)^2}{(\textrm{e}^{\varepsilon}-1)^2}\left(\frac{t\textrm{e}^{\varepsilon}}{(\textrm{e}^{\varepsilon}+1)^2}+\frac{(n-t)\textrm{e}^{\varepsilon}}{(\textrm{e}^{\varepsilon}+1)^2}\right) \\
&= \frac{n\textrm{e}^{\varepsilon}}{(\textrm{e}^{\varepsilon}-1)^2}.
\end{align}
It follows that
\begin{equation}
\recht{MSE}_{\mu}^{\recht{tally}}(\recht{RR}_{\varepsilon},\Phi_{\recht{RR}}) = 2\mathbb{E}_t \recht{Var}(\Phi_{\recht{RR}}(\vec{Y})_1|T=t) = \frac{2n\textrm{e}^{\varepsilon}}{(\textrm{e}^{\varepsilon}-1)^2}. \qedhere
\end{equation}
\end{proof}

This Proposition yields the following Corollary. Below, we use $f(\varepsilon) \sim g(\varepsilon)$ to denote $\lim_{\varepsilon \rightarrow 0} \frac{f(\varepsilon)}{g(\varepsilon)} = 1$.

\begin{corollary} \label{cor:opt}
Let $a = 2$, and let $\mu$ be given. Let $(\mathcal{Q}_{n,\varepsilon},\Pi_{\recht{opt}})$ and $(\mathcal{Q}'_{n,\varepsilon},\Phi_{\recht{opt}})$ be as in Theorem \ref{cor:eps}. Then as $\varepsilon \rightarrow 0$ we have
\begin{align}
\lim_{n \rightarrow \infty} n\recht{MSE}_{\mu}^{\recht{distr}}(\mathcal{Q}_{n,\varepsilon},\Pi_{n,\varepsilon}) &\sim \tfrac{2}{\varepsilon^2},\\
\lim_{n \rightarrow \infty} n\recht{MSE}_{\mu}^{\recht{freq}}(\mathcal{Q}'_{n,\varepsilon},\Phi_{n,\varepsilon}) &\sim \tfrac{2}{\varepsilon^2}.
\end{align}
\end{corollary}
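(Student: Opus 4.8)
The plan is to sandwich the optimal MSE between the lower bound already supplied by Theorem~\ref{cor:eps} and a matching upper bound produced by a single explicit protocol, namely Randomised Response. For $a=2$ the concluding bounds of Theorem~\ref{cor:eps} read $\liminf_{n\to\infty} n\recht{MSE}_\mu^{\recht{distr}}(\mathcal{Q}_{n,\varepsilon},\Pi_{\recht{opt}}) \succeq \tfrac{2}{\varepsilon^2}$ and $\liminf_{n\to\infty} n\recht{MSE}_\mu^{\recht{freq}}(\mathcal{Q}'_{n,\varepsilon},\Phi_{\recht{opt}}) \succeq \tfrac{2}{\varepsilon^2}$, so it suffices to bound these same $\liminf$'s \emph{above} by $\tfrac{2}{\varepsilon^2}$ in the $\sim$ sense.

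For the upper bound I would exploit the defining optimality of the pairs $(\mathcal{Q}_{n,\varepsilon},\Pi_{\recht{opt}})$ and $(\mathcal{Q}'_{n,\varepsilon},\Phi_{\recht{opt}})$, which minimise the MSE over \emph{all} $\varepsilon$-LDP protocols and \emph{all} estimators. Since $(\recht{RR}_\varepsilon,\Pi_{\recht{RR}})$ and $(\recht{RR}_\varepsilon,\Phi_{\recht{RR}})$ are admissible pairs on $\mathcal{A}=\{1,2\}$, we immediately get
\begin{align*}
n\recht{MSE}_\mu^{\recht{distr}}(\mathcal{Q}_{n,\varepsilon},\Pi_{\recht{opt}}) &\leq n\recht{MSE}_\mu^{\recht{distr}}(\recht{RR}_\varepsilon,\Pi_{\recht{RR}}),\\
n\recht{MSE}_\mu^{\recht{freq}}(\mathcal{Q}'_{n,\varepsilon},\Phi_{\recht{opt}}) &\leq n\recht{MSE}_\mu^{\recht{freq}}(\recht{RR}_\varepsilon,\Phi_{\recht{RR}}),
\end{align*}
and the right-hand sides are computed exactly in Proposition~\ref{prop:rr}. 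The essential feature is that after multiplication by $n$ these right-hand sides are constant in $n$, equal to $2\bigl(\tfrac{\textrm{e}^\varepsilon}{(\textrm{e}^\varepsilon-1)^2}+\mathbb{E}_P[P_1P_2]\bigr)$ and $\tfrac{2\textrm{e}^\varepsilon}{(\textrm{e}^\varepsilon-1)^2}$ respectively, so passing to $\limsup_{n\to\infty}$ leaves them untouched.

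It then remains to expand these constants as $\varepsilon\to 0$. Using $\textrm{e}^\varepsilon-1=\varepsilon(1+O(\varepsilon))$ and $\textrm{e}^\varepsilon=1+O(\varepsilon)$ gives $\tfrac{\textrm{e}^\varepsilon}{(\textrm{e}^\varepsilon-1)^2}\sim\tfrac{1}{\varepsilon^2}$; since $0\leq\mathbb{E}_P[P_1P_2]\leq\tfrac14$ is bounded, the additive term is negligible against $\varepsilon^{-2}$, and both right-hand sides are therefore $\sim\tfrac{2}{\varepsilon^2}$. Thus $\limsup_{n\to\infty} n\recht{MSE}_\mu^{\recht{distr}}(\mathcal{Q}_{n,\varepsilon},\Pi_{\recht{opt}})$ is bounded above by a quantity $\sim\tfrac{2}{\varepsilon^2}$, while Theorem~\ref{cor:eps} bounds the corresponding $\liminf_{n\to\infty}$ below by $\tfrac{2}{\varepsilon^2}$ in the $\succeq$ sense. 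Since the $\liminf$ never exceeds the $\limsup$, both are forced to be asymptotically equal to $\tfrac{2}{\varepsilon^2}$, which is exactly the asserted $\sim\tfrac{2}{\varepsilon^2}$. The frequency estimator is handled identically, the only difference being the absence of the $\mathbb{E}_P[P_1P_2]$ term.

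I expect the only genuine delicacy to be bookkeeping around the nested limits: the statement concerns $\lim_{n\to\infty}$ for each fixed $\varepsilon$ followed by $\varepsilon\to 0$, so one must apply the lower bound of Theorem~\ref{cor:eps} (in which $\varepsilon\to 0$ is already built into the symbol $\succeq$) to the very same $n$-dependent optimal protocols appearing on the left of the upper-bound chain. Because the $\recht{RR}_\varepsilon$ bounds are exactly independent of $n$, the $\limsup_n$ is harmless and the sandwich closes cleanly; there is no substantive analytic obstacle beyond this ordering of limits together with the elementary Taylor expansion.
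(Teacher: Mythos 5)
Your proof is correct and takes essentially the same route as the paper, which likewise sandwiches the optimal MSE between the lower bound of Theorem~\ref{cor:eps} and the explicit Randomised Response upper bound of Proposition~\ref{prop:rr}. Your additional bookkeeping --- the $n$-independence of the scaled RR bounds and the negligibility of $\mathbb{E}_P[P_1P_2]$ against $\varepsilon^{-2}$ --- merely spells out what the paper's one-line proof leaves implicit.
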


\begin{proof}
A lower bound (of the behaviour in $\varepsilon$ as $\varepsilon \rightarrow 0$) is provided by Theorem \ref{cor:eps}, while an upper bound is provided by Proposition \ref{prop:rr}.
\end{proof}

In other words, the bounds in Theorem \ref{cor:eps} become tight when $\varepsilon \rightarrow 0$, provided that $a=2$. This shows that our bounds in Table \ref{tab:results} give the best possible coefficient for $\frac{a}{n\varepsilon^2}$ that holds for all $a$.

\section{MLE experiments} \label{sec:exp}

{\bf Synthentic dataset.} We perform synthetic experiments to see how the estimator $\Pi_{\recht{MLE}}$ from Section \ref{ss:mle} performs in practice. We apply the MLE to two well-established privacy protocols, Randomised Response (RR) \cite{warner1965randomized} and Unary Encoding (UE) \cite{wang2017locally}; the latter is one of the protocols that obtains the known optimal error $\frac{4a}{n\varepsilon^2}$ for $\hat{P}$ and $\hat{F}$. Both of these protocols are parametrised by their LDP parameter $\varepsilon$. 
We take $\varepsilon \in [0.2,2]$. 
Since $b = 2^a$ for UE, the number of summands in (\ref{eq:minimise}) grows too large to handle for large $a$. Therefore we take $a=10$ for UE.
For RR $a = b$, so we do not have this problem, and we take the more general setting $a = 1024$. 
We consider both large number of samples ($n \gg a$) and moderate number of samples ($n \approx 10 a$) scenarios. 
For $\mu$, we take the Jeffreys prior on $\mathcal{P}_{\mathcal{A}}$, i.e. the symmetric Dirichlet distribution with parameter $-\tfrac{1}{2}$.
We take this prior because it is noninformative, as its definition does not depend on the parametrisation of $\mathcal{P}_{\mathcal{A}}$. We draw $P$ from $\mu$ 100 times and generate a dataset of $n$ users from it. We then randomise the data via the LDP protocol, and perform MLE on the outcome to produce an estimate $\hat{P}$ of $P$; we furthermore use $\hat{F} = \hat{P}$ as an estimator of $F$. Finally, we measure $||P-\hat{P}||_2^2$ and $||F-\hat{F}||_2^2$, and average this over all samples to determine the MSE. To calculate the MLE, we use projected gradient descent to solve (\ref{eq:minimise}) for UE. For RR, there is a direct method to find $\hat{P}$, see Appendix \ref{app:mle}.

We compare the MLE estimator to two other estimators: First, the baseline Frequency Oracles (FO), which are affine transformations of $S$ used to produce unbiased estimators of $P$ and $F$. Second, we look at Norm-Sub, which was found in \cite{wang2019locally} to be postprocessing method of the FO outcome that gives the best MSE, among a wide selection of considered postprocessing methods.

\begin{figure*}
    \centering
    \begin{subfigure}[b]{\textwidth}
		\includegraphics[width=\textwidth]{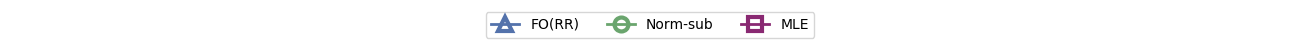}
		\vspace{-0.5cm}
		\label{RR_vary_eps_legend}
	\end{subfigure}\\
    \begin{subfigure}[b]{0.23\textwidth}
		\includegraphics[width=\textwidth]{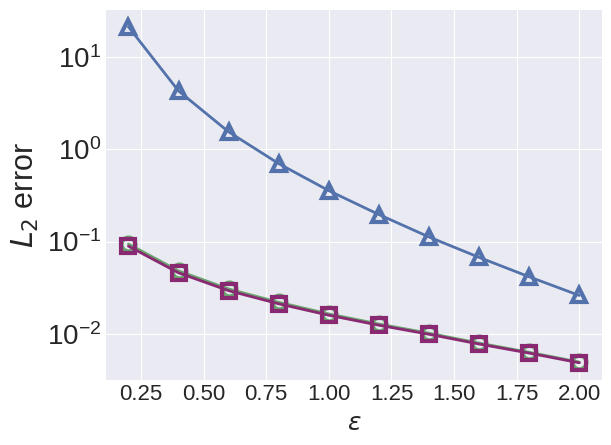}
		\vspace{-0.7cm}
		\caption{Error of $\hat{P}$ when $n=1000000$.}
		\label{RR_vary_eps_P6}
	\end{subfigure}
	\begin{subfigure}[b]{0.23\textwidth}
		\includegraphics[width=\textwidth]{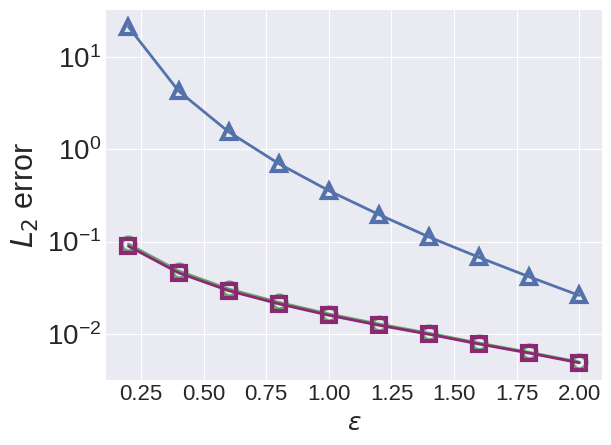}
		\vspace{-0.7cm}
		\caption{Error of $\hat{F}$ when $n=1000000$.}
		\label{RR_vary_eps_T6}
	\end{subfigure}
    \begin{subfigure}[b]{0.23\textwidth}
		\includegraphics[width=\textwidth]{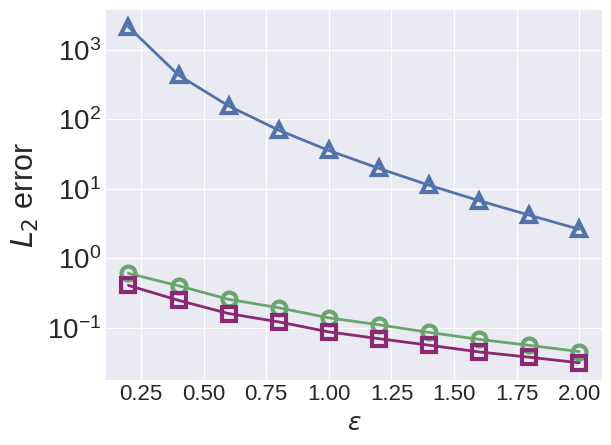}
		\vspace{-0.7cm}
		\caption{Error of $\hat{P}$ when $n=10000$.}
		\label{RR_vary_eps_P4}
	\end{subfigure}
	\begin{subfigure}[b]{0.23\textwidth}
		\includegraphics[width=\textwidth]{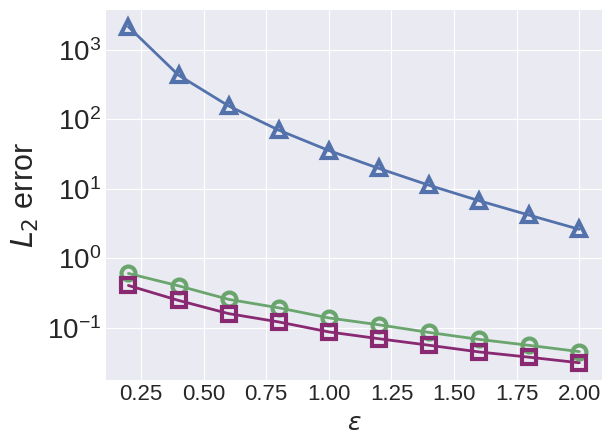}
		\vspace{-0.7cm}
		\caption{Error of $\hat{F}$ when $n=10000$.}
		\label{RR_vary_eps_T4}
	\end{subfigure}
	\caption{Experiments with RR, $a=1024$ and Jeffreys prior $\mu$.}
	\label{fig:mle_RR}
\end{figure*}

\begin{figure*}
    \centering
    \begin{subfigure}[b]{\textwidth}
		\includegraphics[width=\textwidth]{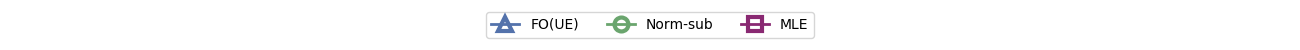}
		\vspace{-0.5cm}
	\end{subfigure}\\
    \begin{subfigure}[b]{0.23\textwidth}
		\includegraphics[width=\textwidth]{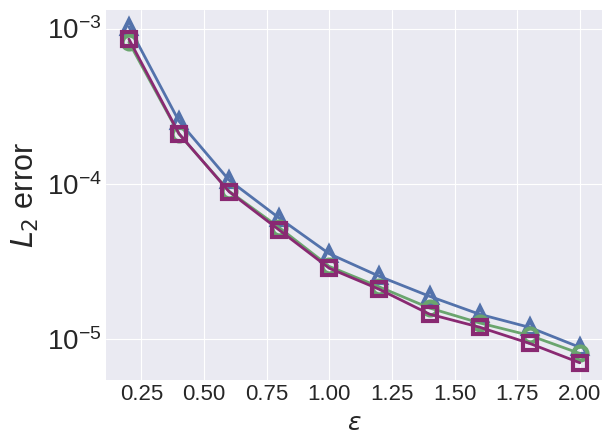}
		\vspace{-0.7cm}
		\caption{Error of $\hat{P}$ when $n=1000000$.}
		\label{ue_vary_eps_P6}
	\end{subfigure}
	\begin{subfigure}[b]{0.23\textwidth}
		\includegraphics[width=\textwidth]{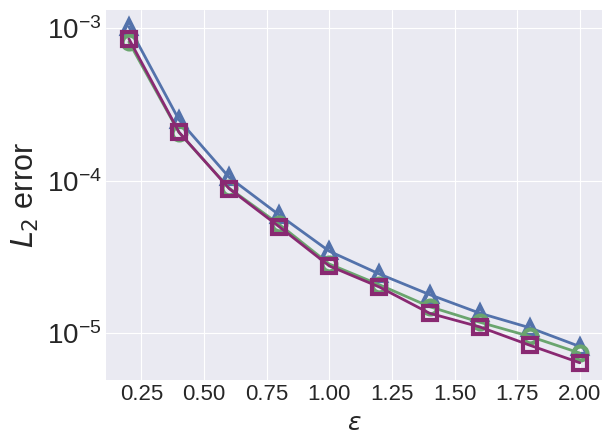}
		\vspace{-0.7cm}
		\caption{Error of $\hat{F}$ when $n=1000000$.}
		\label{ue_vary_eps_T6}
	\end{subfigure}
	\begin{subfigure}[b]{0.23\textwidth}
		\includegraphics[width=\textwidth]{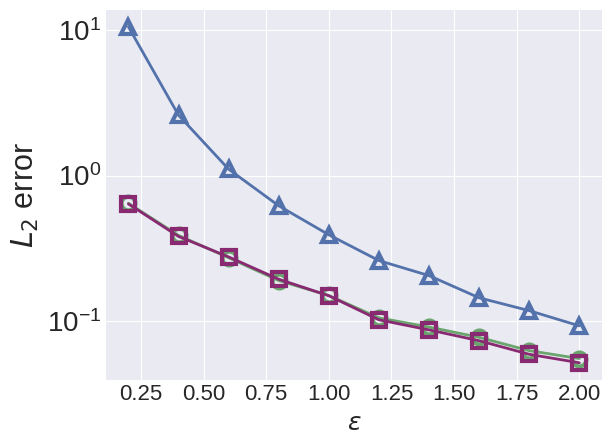}
		\vspace{-0.7cm}
		\caption{Error of $\hat{P}$ when $n=100$.}
		\label{ue_vary_eps_P4}
	\end{subfigure}
	\begin{subfigure}[b]{0.23\textwidth}
		\includegraphics[width=\textwidth]{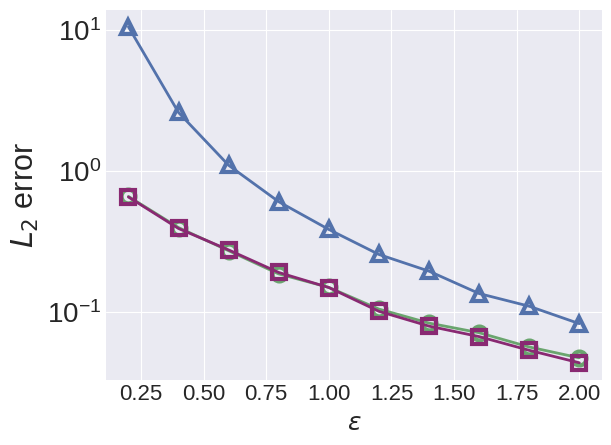}
		\vspace{-0.7cm}
		\caption{Error of $\hat{F}$ when $n=100$.}
		\label{ue_vary_eps_T4}
	\end{subfigure}
	\caption{Experiments with UE, $a=10$ and Jeffreys prior $\mu$.}
	\label{fig:mle_ue}
\end{figure*}

\begin{figure*}
    \centering
    \begin{subfigure}[b]{\textwidth}
		\includegraphics[width=\textwidth]{figure/FO_RR_.png}
		\vspace{-0.5cm}
	\end{subfigure}\\
    \begin{subfigure}[b]{0.40\textwidth}
		\includegraphics[width=\textwidth]{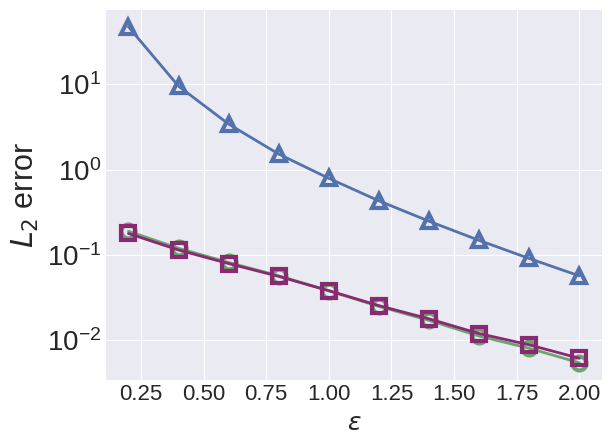}
		\vspace{-0.7cm}
		\caption{Error of RR $\hat{F}$, taxi distance, $n=434195, a=1000$.}
		\label{rr_vary_eps_P_taxi_distance}
	\end{subfigure}
	\hspace{0.5cm}
	\begin{subfigure}[b]{0.40\textwidth}
		\includegraphics[width=\textwidth]{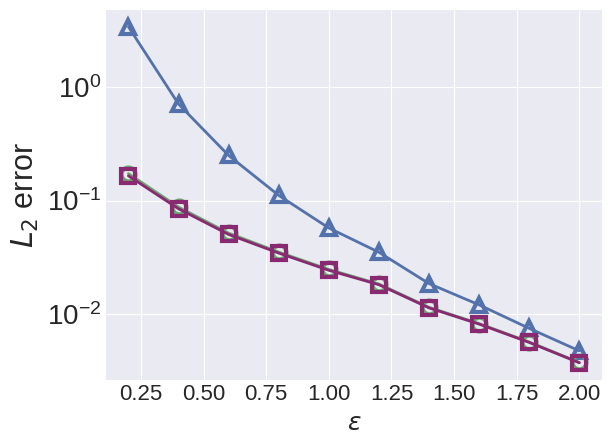}
		\vspace{-0.7cm}
		\caption{Error of RR $\hat{F}$, age in Adult, $n=32561, a=75$.}
		\label{rr_vary_eps_P_adult_country}
	\end{subfigure}
	\caption{Experiments of RR with real world datasets.}
	\label{fig:real_world_RR}
\end{figure*}

\begin{figure*}
    \centering
    \begin{subfigure}[b]{\textwidth}
		\includegraphics[width=\textwidth]{figure/FO_UE_.png}
		\vspace{-0.5cm}
	\end{subfigure}\\
	\begin{subfigure}[b]{0.40\textwidth}
		\includegraphics[width=\textwidth]{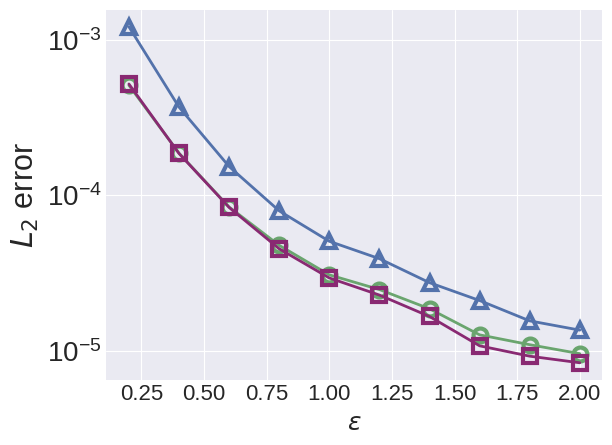}
		\vspace{-0.7cm}
		\caption{Error of UE $\hat{F}$, taxi payment type, $n=359902, a=5$.}
		\label{ue_vary_eps_P_taxi_payment_type}
	\end{subfigure}
	\begin{subfigure}[b]{0.40\textwidth}
		\includegraphics[width=\textwidth]{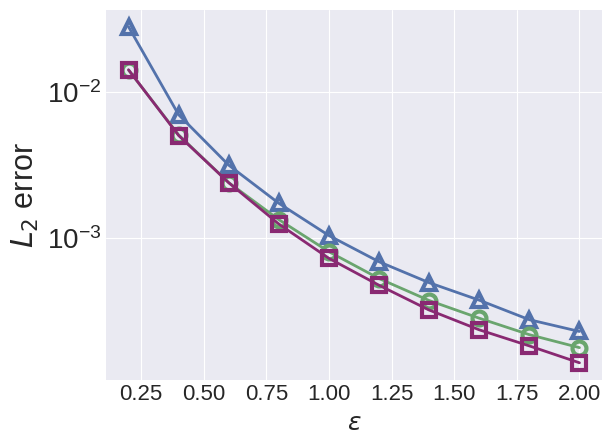}
		\vspace{-0.7cm}
		\caption{Error of RR $\hat{F}$, adult work class, $n=32561, a=8$.}
		\label{ue_vary_eps_P_adult_workclass}
	\end{subfigure}
	\caption{Experiments of UE with real world datasets.}
	\label{fig:real_world_UE}
\end{figure*}

Figures~\ref{fig:mle_RR} (RR) and \ref{fig:mle_ue} (UE) show the experimental results.
The simulation results show that both Norm-sub and MLE post-processing can elevate the accuracy of the results.
For RR, we see that Norm-Sub and MLE give the same accuracy when $n \gg a$. This is not unexpected, as the results in \cite{wang2019locally} show that for frequency estimation, Norm-Sub gives similar accuracy to MLE, and for RR the MLEs defined here and in that paper are equivalent.
However, when $n \approx 10 a$, MLE results are better than Norm-Sub, suggesting that when we use RR and $n$ is comparable to $a$, MLE can give better accuracies.
For UE, we see that MLE gives more accurate frequency estimations than Norm-sub when $\varepsilon > 1.5$ when $n \gg a$.
For a moderate number of samples cases, the results of MLE and Norm-Sub are similar.

{\bf Real world datasets.}
To show that the MLE estimator can help to improve the estimation on real world datasets, we also apply our method on two real world datasets: the taxi dataset~\cite{taxi_data} and the adult dataset~\cite{Dua:2019}.
We use discretized taxi distance ($a=1000$, valid samples $n=434195$) in the taxi dataset and age ($n=32561, a=75$) in the adult dataset for RR, taxi payment types ($a=5$, valid samples $n=359902$) and work classes ($n=32561, a=8$) in adult dataset for UE.
We choose different attributes for RR and UE because of the same reason as in synthetic experiments (the output domain size $b = 2^a$ for UE).
Since the true probability $P$ of real world datasets is unknown, we can only compare the frequency error, $\|F - \hat{F}\|^2_2$. 
All experiments are repeated $100$ times and the error means are shown in Figure~\ref{fig:real_world_RR} and \ref{fig:real_world_UE}.

The experiment results in Figure~\ref{fig:real_world_RR} (RR) and \ref{fig:real_world_UE} (UE) show that, as expected, both Norm-Sub and MLE post-processing can improve the accuracy of the results.
MLE and Norm-Sub with RR give similar accuracy on both dataset, while MLE with UE can give more accurate results than Norm-Sub with UE.

\section{Further research}

Although we have shown that our bounds are tight for $a=2$, $n \gg 0$ and $\varepsilon \ll 1$, it would be interesting to know what frequency estimation accuracy is achievable in more general settings. In particular, it is interesting to know what happens for large $a$, and for large $\varepsilon$ (i.e. in the low privacy domain). For large $\varepsilon$, the dependence on $b$ will probably also come into play, which might be able to help point us towards optimal protocols.

Another useful approach would be to give computationally feasible approximations to $\hat{\Phi}_n$ and $\hat{\Pi}_n$. The formulas of Proposition \ref{prop:dirichlet}, which are presented explicitely in Appendix \ref{app:dirichlet}, can become too complex for large $a,b,n$, and while the MLE approach reduces computational complexity, it is still too involved for large $a$. This is important, as for small $a$ the MLE is shown to give accurate frequency estimations; it would be interesting to find a way to extend this approach to larger $a$. Furthermore, both our Theorem \ref{thm:bound1}, as well as empirical results \cite{jia2019calibrate,wang2019locally}, show that prior knowledge about the distribution can have a significant impact on frequency estimation. The MLE ignores this prior knowledge, hence would be good to find a way to enhance the MLE estimation by taking prior knowledge into account.

By their asymptotic nature, our results mainly concern the case where $n \gg a$. However, LDP is also used in situations where $a \approx n$ \cite{erlingsson2014rappor}. In such situations, one is more interested in identifying the most frequent items rather than estimating the frequency of all items. Different information-theoretical techniques are needed to estimate the utility in this scenario.

\subsection*{Acknowledgments}
This project is supported by NSF grant 1640374, NWO grant 628.001.026, and NSF grant 1931443. We thank the anonymous reviewers for their helpful suggestions.

\printbibliography

\appendix

\section{Proof of Proposition \ref{prop:dirichlet}} \label{app:dirichlet}

Before we can prove Proposition \ref{prop:dirichlet}, we recall two well-known facts about Dirichlet distributions and distribution mixtures, and prove an auxiliary Proposition.

\begin{fact} \label{fact:dir}
Let $P \in \mathcal{P}_{\mathcal{A}}$ be drawn from a Dirichlet distribution with parameter vector $\gamma \in \mathbb{R}^{a}_{\geq 0}$. Then
\begin{align}
\mathbb{E}[P]&= \frac{1}{\sum_{\alpha \in \mathcal{A}}\gamma_{\alpha}}\gamma,\\
\recht{Var}(P_{\alpha}) &= \frac{\mathbb{E}[P_\alpha](1-\mathbb{E}[P_{\alpha}])}{1+\sum_{\alpha' \in \mathcal{A}}\gamma_{\alpha'}}.
\end{align}
\end{fact}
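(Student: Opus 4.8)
The plan is to work directly from the density of the Dirichlet distribution. Writing $\gamma_0 = \sum_{\alpha' \in \mathcal{A}} \gamma_{\alpha'}$, the density on $\mathcal{P}_{\mathcal{A}}$ is proportional to $\prod_{\alpha} p_\alpha^{\gamma_\alpha - 1}$, with normalising constant the multivariate Beta function $B(\gamma) = \Gamma(\gamma_0)^{-1}\prod_\alpha \Gamma(\gamma_\alpha)$. The central tool is the integral identity $\int_{\mathcal{P}_{\mathcal{A}}} \prod_\alpha p_\alpha^{\delta_\alpha - 1}\, \textrm{d}p = B(\delta)$, valid for any positive parameter vector $\delta$; every moment computation then reduces to evaluating a ratio $B(\delta)/B(\gamma)$ for a suitably shifted $\delta$.

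First I would compute $\mathbb{E}[P_\alpha]$. Inserting a factor $p_\alpha$ into the integral shifts the $\alpha$-th parameter from $\gamma_\alpha$ to $\gamma_\alpha + 1$, so that $\mathbb{E}[P_\alpha] = B(\gamma + e_\alpha)/B(\gamma)$, where $e_\alpha$ denotes the $\alpha$-th standard basis vector (so only the $\alpha$-th coordinate of $\gamma$ is increased). Applying the Gamma recurrence $\Gamma(x+1) = x\,\Gamma(x)$ to both $\Gamma(\gamma_\alpha + 1)$ and $\Gamma(\gamma_0 + 1)$ collapses this ratio to $\gamma_\alpha/\gamma_0$, which is exactly the first displayed formula.

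Next I would compute the second moment in the same way: inserting $p_\alpha^2$ shifts $\gamma_\alpha$ to $\gamma_\alpha + 2$, giving $\mathbb{E}[P_\alpha^2] = \frac{\gamma_\alpha(\gamma_\alpha + 1)}{\gamma_0(\gamma_0 + 1)}$ after two applications of the Gamma recurrence in the numerator and denominator. Then $\recht{Var}(P_\alpha) = \mathbb{E}[P_\alpha^2] - \mathbb{E}[P_\alpha]^2$, and a short common-denominator simplification—in which the numerator $\gamma_0(\gamma_\alpha + 1) - \gamma_\alpha(\gamma_0 + 1)$ collapses to $\gamma_0 - \gamma_\alpha$—yields $\frac{\gamma_\alpha(\gamma_0 - \gamma_\alpha)}{\gamma_0^2(\gamma_0 + 1)}$. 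Substituting $\mathbb{E}[P_\alpha] = \gamma_\alpha/\gamma_0$ and $1 - \mathbb{E}[P_\alpha] = (\gamma_0 - \gamma_\alpha)/\gamma_0$ rewrites this as $\frac{\mathbb{E}[P_\alpha](1 - \mathbb{E}[P_\alpha])}{1 + \gamma_0}$, which is the second displayed formula.

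There is no serious obstacle here: the argument is a textbook moment computation for the Dirichlet distribution, and the only point requiring any care is the final algebraic rearrangement of $\recht{Var}(P_\alpha)$ into the stated form expressed through $\mathbb{E}[P_\alpha]$ rather than directly through $\gamma$.
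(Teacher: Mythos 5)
Your proof is correct: the moment computations via the Beta-function ratios and the final algebraic rearrangement all check out. The paper itself states this as a recalled well-known fact and gives no proof, and your argument is precisely the standard textbook derivation that justifies it, so there is nothing to compare beyond noting that you have supplied the omitted routine verification.
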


\begin{fact} \label{fact:mix}
Let $P_1,\cdots,P_r$ be continuous random variables in $\mathbb{R}$, and let $w \in \mathbb{R}^r_{\geq 0}$ be such that $\sum_{j=1}^r w_j = 1$. Let $M$ be the mixture of the $P_j$ with weight vector $w$, i.e. $M = P_j$ with probability $w_j$. Then
\begin{align}
\mathbb{E}[M] &= \sum_{j=1}^r w_j \mathbb{E}[P_j],\\
\recht{Var}(M) &= \sum_{j=1}^r w_j (\recht{Var}(P_j)+\mathbb{E}[P_j]^2-\mathbb{E}[M]^2).
\end{align}
\end{fact}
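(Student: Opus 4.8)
The plan is to realise the mixture via an auxiliary selector random variable $J \in \{1,\dots,r\}$ with $\mathbb{P}(J=j)=w_j$, chosen independently of $P_1,\dots,P_r$, so that $M = P_J$; both identities then reduce to the tower property for moments. This framing avoids any appeal to densities and uses only the normalisation $\sum_{j=1}^r w_j = 1$. (Equivalently, one could integrate the mean and second moment against the mixture density $f_M = \sum_j w_j f_{P_j}$ term by term, which is legitimate since the $P_j$ are continuous, but the selector variable makes the bookkeeping cleaner.)

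First I would compute the mean by conditioning on $J$:
\begin{equation}
\mathbb{E}[M] = \mathbb{E}\big[\mathbb{E}[P_J \mid J]\big] = \sum_{j=1}^r w_j \mathbb{E}[P_j],
\end{equation}
which is the first claim. Applying the same conditioning to the second moment gives $\mathbb{E}[M^2] = \sum_{j=1}^r w_j \mathbb{E}[P_j^2]$.

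For the variance I would then write $\recht{Var}(M) = \mathbb{E}[M^2] - \mathbb{E}[M]^2$, substitute the expression for $\mathbb{E}[M^2]$ above, and replace each $\mathbb{E}[P_j^2]$ by $\recht{Var}(P_j)+\mathbb{E}[P_j]^2$. This yields $\recht{Var}(M) = \sum_{j} w_j\big(\recht{Var}(P_j)+\mathbb{E}[P_j]^2\big) - \mathbb{E}[M]^2$. The final manipulation, which is the only step requiring any care, is to fold the subtracted term into the sum using $\sum_{j} w_j = 1$, i.e. $\mathbb{E}[M]^2 = \sum_j w_j \mathbb{E}[M]^2$; this produces exactly $\recht{Var}(M)=\sum_{j=1}^r w_j\big(\recht{Var}(P_j)+\mathbb{E}[P_j]^2-\mathbb{E}[M]^2\big)$.

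There is no genuine obstacle here, as this is a textbook fact about finite mixtures. The only point worth flagging is that no independence or distributional assumption on the $P_j$ themselves is needed; the cross term is absorbed purely through the normalisation of the weights, and the formula holds for any $P_j$ with finite second moments.
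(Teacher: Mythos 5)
Your proof is correct: the selector-variable construction with the tower property gives $\mathbb{E}[M]=\sum_j w_j\mathbb{E}[P_j]$ and $\mathbb{E}[M^2]=\sum_j w_j\mathbb{E}[P_j^2]$, and folding $\mathbb{E}[M]^2=\sum_j w_j\mathbb{E}[M]^2$ into the sum via $\sum_j w_j=1$ yields exactly the stated variance formula. The paper offers no proof of its own here --- it records this as a well-known Fact used in Appendix A --- and your argument is precisely the standard derivation it implicitly relies on, with the appropriate observation that only finite second moments (automatic in the paper's application, where the $P_j$ are bounded Dirichlet marginals) are needed.
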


The following proposition gives explicit, if lengthy, formulas to calculate $\Pi_{\recht{opt}}$ as well as $\recht{MSE}_{\mu}^{\recht{distr}}(\mathcal{Q},\Pi_{\recht{opt}})$.

\begin{proposition} \label{prop:dir}
Let $\Pi_{\recht{opt}}$ be as in Theorem \ref{thm:bound1}. Let $\mu$ be a Dirichlet distribution with parameter vector $\gamma \in \mathbb{R}^{a}_{\geq 0}$. Let $\recht{B}$ be the multivariate beta function. For $\alpha \in \mathcal{A}$, $\vec{x} \in \mathcal{A}^n$ and $\vec{y} \in \mathcal{B}^n$, we define
\begin{align}
C_{\vec{y}} &= \sum_{\vec{x} \in \mathcal{A}^n} \frac{\recht{B}(\gamma+t(\vec{x}))}{\recht{B}(\gamma)}\prod_i Q_{y_i|x_i},\\
w_{\vec{x}|\vec{y}} &= C_{\vec{y}}^{-1}\frac{\recht{B}(\gamma+t(\vec{x}))}{\recht{B}(\gamma)}\prod_i Q_{y_i|x_i},\\
m_{\vec{x},\alpha} &= \frac{\gamma_{\alpha}+t_{\alpha}(\vec{x})}{\sum_{\alpha' \in \mathcal{A}} \gamma_{\alpha'}+n},\\
m_{\vec{y},\alpha} &= \sum_{\vec{x} \in \mathcal{A}^n} w_{\vec{x}|\vec{y}}m_{\vec{x},\alpha},\\
\sigma^2_{\vec{x},\alpha} &= \frac{m_{\vec{x},\alpha}(1-m_{\vec{x},\alpha})}{\sum_{\alpha' \in \mathcal{A}} \gamma_{\alpha'}+n+1}.
\end{align}
Then the $\alpha$-coefficient of $\Pi_{\recht{opt}}(\vec{y})$ is equal to $m_{\vec{y},\alpha}$, and $\recht{MSE}_{\mu}^{\recht{distr}}(\mathcal{Q},\Pi_{\recht{opt}})$ is equal to
\begin{equation}
\sum_{\vec{y} \in \mathcal{B}^n}C_{\vec{y}}\sum_{\alpha \in \mathcal{A}}  \left(\sum_{\vec{x} \in \mathcal{A}^n} w_{\vec{x}|\vec{y}}(m^2_{\vec{x},\alpha}+\sigma^2_{\vec{x},\alpha})-m_{\vec{y},\alpha}^2\right). \label{eq:mseopt}
\end{equation}
\end{proposition}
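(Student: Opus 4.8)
The plan is to avoid the integral over $\mathcal{P}_{\mathcal{A}}$ entirely by augmenting with the latent data vector $\vec{X}$ and exploiting Dirichlet--multinomial conjugacy, so that every quantity becomes a finite sum over $\vec{x} \in \mathcal{A}^n$. The crucial structural observation is that $\vec{Y}$ depends on $P$ only through $\vec{X}$, i.e. $P$ and $\vec{Y}$ are conditionally independent given $\vec{X}$, since the perturbation $\mathcal{Q}$ acts on the sampled data and not on $P$ itself. Consequently the posterior of $P$ given $\vec{Y} = \vec{y}$ is a mixture of the posteriors of $P$ given $\vec{X} = \vec{x}$, with mixture weights $w_{\vec{x}|\vec{y}} = \mathbb{P}(\vec{X} = \vec{x} \mid \vec{Y} = \vec{y})$.

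First I would establish the building blocks of this mixture. By conjugacy of the Dirichlet prior with the multinomial likelihood, the posterior of $P$ given $\vec{X} = \vec{x}$ is $\recht{Dirichlet}(\gamma + t(\vec{x}))$; integrating the likelihood $\prod_{\alpha} p_\alpha^{t_\alpha(\vec{x})}$ against $\mu$ produces the Dirichlet--multinomial marginal $\mathbb{P}(\vec{X} = \vec{x}) = \recht{B}(\gamma + t(\vec{x}))/\recht{B}(\gamma)$. Combining this with $\mathbb{P}(\vec{Y} = \vec{y} \mid \vec{X} = \vec{x}) = \prod_i Q_{y_i|x_i}$ and summing over $\vec{x}$, I would identify $C_{\vec{y}} = \mathbb{P}(\vec{Y} = \vec{y})$ and thereby confirm that $w_{\vec{x}|\vec{y}}$ is exactly the posterior weight $\mathbb{P}(\vec{X} = \vec{x} \mid \vec{Y} = \vec{y})$.

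The estimator then follows from the tower property: $[\Pi_{\recht{opt}}(\vec{y})]_\alpha = \mathbb{E}[P_\alpha \mid \vec{Y} = \vec{y}] = \sum_{\vec{x}} w_{\vec{x}|\vec{y}} \, \mathbb{E}[P_\alpha \mid \vec{X} = \vec{x}]$, where the inner expectation is the Dirichlet mean $m_{\vec{x},\alpha}$ of Fact \ref{fact:dir} (using $\sum_{\alpha'} t_{\alpha'}(\vec{x}) = n$), which yields $m_{\vec{y},\alpha}$. For the error, Theorem \ref{thm:bound1} gives $\recht{MSE}_{\mu}^{\recht{distr}}(\mathcal{Q},\Pi_{\recht{opt}}) = \sum_\alpha \mathbb{E}_{\vec{y}} \recht{Var}(P_\alpha \mid \vec{Y} = \vec{y})$. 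I would compute each conditional variance by applying the mixture-variance formula of Fact \ref{fact:mix} to the Dirichlet components, whose individual means and variances are $m_{\vec{x},\alpha}$ and $\sigma^2_{\vec{x},\alpha}$ from Fact \ref{fact:dir}; this gives $\recht{Var}(P_\alpha \mid \vec{Y} = \vec{y}) = \sum_{\vec{x}} w_{\vec{x}|\vec{y}}(\sigma^2_{\vec{x},\alpha} + m_{\vec{x},\alpha}^2) - m_{\vec{y},\alpha}^2$. Taking the expectation over $\vec{y}$ weighted by $\mathbb{P}(\vec{Y} = \vec{y}) = C_{\vec{y}}$ and summing over $\alpha$ then reproduces (\ref{eq:mseopt}).

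The main obstacle is conceptual rather than computational: recognising that $P \mid \vec{Y} = \vec{y}$ is \emph{not} itself a Dirichlet distribution, and that the correct route is to introduce $\vec{X}$ as a latent mixing variable so that conjugacy applies component-wise. Once this mixture decomposition is in place, the remaining work is careful bookkeeping — applying Facts \ref{fact:dir} and \ref{fact:mix} and verifying that the between-component spread $\sum_{\vec{x}} w_{\vec{x}|\vec{y}} m_{\vec{x},\alpha}^2 - m_{\vec{y},\alpha}^2$ combines correctly with the within-component variances $\sigma^2_{\vec{x},\alpha}$ to give the displayed formula.
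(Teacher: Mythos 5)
Your proposal is correct and follows essentially the same route as the paper's proof: decompose the posterior $P \mid \vec{Y}=\vec{y}$ as a mixture of Dirichlet posteriors $\recht{Dirichlet}(\gamma + t(\vec{x}))$ with weights $w_{\vec{x}|\vec{y}}$, apply Facts \ref{fact:dir} and \ref{fact:mix} for the component and mixture moments, identify $C_{\vec{y}} = \mathbb{P}(\vec{Y}=\vec{y})$ via the Dirichlet--multinomial marginal, and combine with Theorem \ref{thm:bound1}. The only difference is that you derive the mixture decomposition inline from the conditional independence of $P$ and $\vec{Y}$ given $\vec{X}$ together with conjugacy, whereas the paper cites this as a known result from \cite{lopuhaa2019information}; your version is thus self-contained but not a different argument.
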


\begin{proof}
For $\gamma' \in \mathbb{R}^a_{\geq 0}$, let $\Delta_{\gamma'}(p) = \frac{1}{\recht{B}(\gamma')}\prod_{\alpha} p_{\alpha}^{\gamma'_{\alpha}}$ be the probability density function of the Dirichlet distribution with parameter vector $\alpha'$. Let $\delta_{P|\vec{Y}=\vec{y}}$ be the posterior probability density function of $P$ given $\vec{Y} = \vec{y}$. By \cite[Thm.~9.1]{lopuhaa2019information} we have (note the definition of $C_{\vec{y}}$ there differs by a factor $\recht{B}(\gamma)$ from the one given here):
\begin{equation}
\delta_{P|\vec{Y}=\vec{y}} = \frac{1}{C_{\vec{y}}} \sum_{\vec{x} \in \mathcal{A}^n} w_{\vec{x}|\vec{y}} \Delta_{\gamma+t(\vec{x})}.
\end{equation}
In other words, $P|\vec{Y}=\vec{y}$ is a mixture of Dirichlet distributions $D_{\vec{x}}$. These are parametrised by $\vec{x} \in \mathcal{A}^n$, have weight $w_{\vec{x}|\vec{y}}$, and parameter vector $\gamma+t(\vec{x})$; by Fact \ref{fact:dir} one has $\mathbb{E}[D_{\vec{x},\alpha}] = m_{\vec{x},\alpha}$ and $\recht{Var}(D_{\vec{x},\alpha})) = \sigma^2_{\vec{x},\alpha}$. Fact \ref{fact:mix} now shows us that $\Pi_{\recht{opt}}(\vec{y}) = \mathbb{E}[P_{\alpha}|\vec{Y}=\vec{y}] = m_{\vec{y},\alpha}$, proving the first claim of the proposition, and
\begin{equation}
\recht{Var}(P_{\alpha}|\vec{Y}=\vec{y}) = \sum_{\vec{x} \in \mathcal{A}^n} w_{\vec{x}|\vec{y}}(m^2_{\vec{x},\alpha}+\sigma^2_{\vec{x},\alpha})-m_{\vec{y},\alpha}^2. \label{eq:dirproof}
\end{equation}
Furthermore, note that 
\begin{align}
\mathbb{P}(\vec{X} = \vec{x}) &= \int \Delta_{\gamma}(p) \mathbb{P}(\vec{X}=\vec{x}|P=p)\textrm{d}p \\
&=  \frac{1}{\recht{B}(\gamma)}\int \prod_{\alpha} p^{\gamma_{\alpha}+t_{\alpha}(\vec{x})}\textrm{d}p \\
&= \frac{\recht{B}(\gamma+t(\vec{x}))}{\recht{B}(\gamma)}\int \Delta_{\gamma+t(\vec{x})}\textrm{d}p \\
&= \frac{\recht{B}(\gamma+t(\vec{x}))}{\recht{B}(\gamma)},\\
\mathbb{P}(\vec{Y}=\vec{y}) &= \sum_{\vec{x}} \mathbb{P}(\vec{Y}=\vec{y}|\vec{X}=\vec{x})\mathbb{P}(\vec{X}=\vec{x}) \\
&=\sum_{\vec{x}} \frac{\recht{B}(\gamma+t(\vec{x}))}{\recht{B}(\gamma)} \prod_{i=1}^n Q_{y_i|x_i} \\
&= C_{\vec{y}}.
\end{align}
Combining this with (\ref{eq:varbound}) and (\ref{eq:dirproof}), this shows that
\begin{align}
&\recht{MSE}_{\mu}^{\recht{distr}}(\mathcal{Q},\Pi_{\recht{opt}}) \nonumber \\
&= \sum_{\vec{y} \in \mathcal{B}^n} \mathbb{P}(\vec{Y}=\vec{y})\sum_{\alpha \in \mathcal{A}} \recht{Var}(P_{\alpha}|\vec{Y}=\vec{y}) \\
&= \sum_{\vec{y} \in \mathcal{B}^n}C_{\vec{y}}\sum_{\alpha \in \mathcal{A}}  \left(\sum_{\vec{x} \in \mathcal{A}^n} w_{\vec{x}|\vec{y}}(m^2_{\vec{x},\alpha}+\sigma^2_{\vec{x},\alpha})-m_{\vec{y},\alpha}^2\right). \qedhere
\end{align}
\end{proof}

Using these formulas allows us to prove Proposition \ref{prop:dirichlet}.

\begin{proof}[Proof of Proposition \ref{prop:dirichlet}]
Let $T_{\alpha}$, $S_{\beta}$ and $S_{\beta|\alpha}$ be as in (\ref{eq:deft},\ref{eq:defs},\ref{eq:defs2}). Note that $t_{\alpha} = \sum_{\beta} s_{\beta|\bullet}$, where $s_{\beta|\bullet} = (s_{\beta|1},\cdots,s_{\beta|a})$. Furthermore, for a given $s$, define $\mathcal{S}_s = \{(s_{\beta|\alpha})_{\beta,\alpha}: \forall \beta \sum_{\alpha} s_{\beta|\alpha} = s_{\beta}\}$. Then
\begin{align}
&\Pi_{\recht{opt}}(\vec{y})_{\alpha} \nonumber \\
&= \sum_{\vec{x} \in \mathcal{A}^n} w_{\vec{x}|\vec{y}}m_{\vec{x},\alpha} \\
&= C_{\vec{y}}^{-1}\sum_{\vec{x} \in \mathcal{A}^n}\frac{\gamma_{\alpha}+t_{\alpha}(\vec{x})}{\sum_{\alpha' \in \mathcal{A}} \gamma_{\alpha'}+n} \cdot \frac{\recht{B}(\gamma+t(\vec{x}))}{\recht{B}(\gamma)}\prod_i Q_{y_i|x_i} \\
&= C_{\vec{y}}^{-1}\sum_{s_{\bullet|\bullet}\in \mathcal{S}_s} \left(\prod_\beta \binom{s_{\beta}}{s_{\beta|\bullet}}\right) \frac{(\gamma_{\alpha}+t_{\alpha}(\vec{x}))\recht{B}(\gamma+t(\vec{x}))}{(\sum_{\alpha' \in \mathcal{A}} \gamma_{\alpha'}+n)\recht{B}(\gamma)} \prod_{\beta,\alpha} Q_{\beta|\alpha}^{s_{\beta|\alpha}}.
\end{align}
Here $\binom{s_{\beta}}{s_{\beta|\bullet}} = \frac{s_{\beta}!}{\prod_{\alpha} s_{\beta|\alpha}!}$ is the multinomial coefficient. Since $\#\mathcal{S}_s = \mathcal{O}(n^{(a-1)b})$, we can find $\Pi_{\recht{opt}}(\vec{y})_{\alpha}$, up to the scaling factor $C_{\vec{y}}^{-1}$, can be found by calculating $\mathcal{O}(n^{(a-1)b})$ summands. We can then find $C_{\vec{y}}$ by using the fact that $\sum_{\alpha} \Pi_{\recht{opt}}(\vec{y})_{\alpha} = 1$.

Analogous to the above, we can similarly show that we need $\mathcal{O}(n^{(a-1)b})$ to calculate each summand of the form 
\begin{equation}
\sum_{\vec{x} \in \mathcal{A}^n} w_{\vec{x}|\vec{y}}(m^2_{\vec{x},\alpha}+\sigma^2_{\vec{x},\alpha})-m_{\vec{y},\alpha}^2
\end{equation}
in (\ref{eq:mseopt}). We then need to sum over all possible $s$, of which there are $\mathcal{O}(n^{b-1})$, leading to a total complexity of $\mathcal{O}(n^{(a-1)b}\cdot n^{b-1}) = \mathcal{O}(n^{ab-1})$ to calculate $\recht{MSE}_{\mu}^{\recht{distr}}(\mathcal{Q},\Pi_{\recht{opt}})$.
\end{proof}

\section{MLE Algorithm for RR} \label{app:mle}

We describe an efficient algorithm solving (\ref{eq:minimise}) for RR with $O(a\log a)$ time complexity in Algorithm~\ref{algo:mle_RR}. This algorithm is justified as follows. Recall from \cite{warner1965randomized} that RR (with privacy parameter $\varepsilon$) is given by the $a\times a$-matrix $Q$ satisfying
\begin{equation}
    Q_{y|x} = \frac{1+(\textrm{e}^{\varepsilon}-1)\delta_{x=y}}{\textrm{e}^{\varepsilon}+a-1}.
\end{equation}
Since $\mathcal{B} = \mathcal{A}$ for RR, we can rewrite (\ref{eq:minimise}) to
\begin{align}
\recht{maximise}_p  & \ \ f(p) = \sum_{y \in \mathcal{A}} s_y \log((Qp)_y) \\
\textrm{subject to} & \ \ p \geq 0,  \sum_{x \in \mathcal{A}}p_x = 1. \nonumber
\end{align}
From this, we obtain the Karush-Kuhn-Tucker (KKT) conditions (with extra variables $(u_x)_{x \in \mathcal{A}}$ and $v$):
\begin{align*}
    \forall x \in \mathcal{A}, & \frac{\partial f(p)}{\partial p_x} + u_x - v = 0, && \text{ (stationarity)}\\
    & u_x p_x = 0,&& \text{ (complementary slackness)} \\
    & u_x \geq 0, && \text{ (dual feasibility)}\\
    & p_x \geq 0, \sum_{x \in \mathcal{A}} p_x = 1 && \text{ (primal feasibility),}
\end{align*}
in which
\begin{align}
    \frac{\partial f(p)}{\partial p_x} &= \sum_{y \in \mathcal{A}}\frac{s_y Q_{y|x}}{\sum_{k\in\mathcal{A}} Q_{y|k}p_k} \\
    & = \frac{(\textrm{e}^\varepsilon-1) s_x}{(\textrm{e}^{\varepsilon} - 1)p_x + 1} + \sum_{y \in \mathcal{A}}\frac{s_y}{(\textrm{e}^{\varepsilon}-1)p_y + 1}. 
\end{align}
By stationarity and complementary slackness we find for all $x \in \mathcal{A}$ that
\begin{align} \label{eq:v}
p_x\left(v - \frac{\partial f(p)}{\partial p_x}\right) = 0.
\end{align}
By summing all $x \in \mathcal{A}$, we find $v = n$. Suppose we have found the optimal $\hat{p}$, and define $\mathcal{A}' = \{x : \hat{p}_x > 0\}$. 
For $x,x' \in \mathcal{A}'$, it follows from (\ref{eq:v}) that we have
\begin{align}
    \frac{ s_x}{(\textrm{e}^{\varepsilon} - 1)\hat{p}_x + 1} &= \frac{ s_{x'}}{(\textrm{e}^{\varepsilon} - 1)\hat{p}_{x'} + 1},
\end{align}
hence
\begin{align}
\frac{s_{x'}}{s_x}((\textrm{e}^{\varepsilon} - 1)\hat{p}_x+1) = (\textrm{e}^{\varepsilon} - 1)\hat{p}_{x'}+1.
\end{align}
Summing over all $x' \in \mathcal{A}'$, we can solve $\hat{p}_i$ as
\begin{align} \label{eq:hatp}
    \hat{p}_x = \frac{\#\mathcal{A}' - \frac{\sum_{x'\in \mathcal{A}'}s_{x'}}{s_x}  + (\textrm{e}^{\varepsilon} - 1)}{(\textrm{e}^{\varepsilon} - 1)\left(\frac{\sum_{x'\in \mathcal{A}'}s_{x'}}{s_c} \right)}.
\end{align}
Therefore, the problem of finding $\hat{p}$ reduces to finding $\mathcal{A}'$. We determine $\mathcal{A}'$ by starting with $\mathcal{A}' = \mathcal{A}$, and then repeatedly removing the $x$ with the lowest value of $s_x$, until the estimation (\ref{eq:hatp}) is no longer nonnegative for all $x$.

\begin{algorithm}
\SetAlgoLined
\SetKwInOut{Input}{Input}\SetKwInOut{Output}{Output}
\SetKwFunction{Sort}{Sort}
\Input{Total number of users $n$; privacy budget $\varepsilon$; tallies of obfuscated data $s=[s_0, \ldots, s_{a-1}]$}
\Output{MLE estimate distribution $\hat{p} = [\hat{p}_1, \ldots, \hat{p}_{a-1}]$}
\BlankLine
$[s'_{(0)}, \ldots, s'_{(a-1)} ]\leftarrow$ \Sort{s}, such that $\forall i < j, s'_{(i)} < s'_{(j)}$ \;
$k = 0$ \;
 \While{$a-k  + \textrm{e}^{\varepsilon} - 1 - \frac{\sum_{i=k}^{a-1}s'_{(i)}}{s'_{(k)}}< 0$}{
  $k \leftarrow k + 1$ \;
 }
 \For{$j = 0, \ldots, a-1$}{
 $\Phi_j = \frac{s_i}{\sum_{i=k}^{a-1}s'_{(i)} }$ \;
 $\hat{p}_j = \max\{0,\quad \Phi_j \left(\frac{n-k}{\textrm{e}^{\varepsilon} - 1}+1 \right) - \frac{1}{\textrm{e}^{\varepsilon} - 1}\}$
 }
 \caption{Exact MLE post-processing for RR}
 \label{algo:mle_RR}
\end{algorithm}

\end{document}